\setlist[itemize]{label=--}
\setlist[enumerate]{label=(\arabic*),labelindent=\parindent,leftmargin=*}
\definecolor{citecolor}{HTML}{0000C0}
\definecolor{urlcolor}{HTML}{000080}
\newtheorem{theorem}{Theorem}
\newtheorem{lemma}[theorem]{Lemma}
\newenvironment{myabstract}
{\list{}{\listparindent 1.5em\itemindent    \listparindent
		\leftmargin    1cm
		\rightmargin   1cm
		\parsep        0pt}\item\relax}
{\endlist}
\newenvironment{mycover}
{\list{}{\listparindent 0pt
		\itemindent    \listparindent
		\leftmargin    1cm
		\rightmargin   1cm
		\parsep        0pt}\raggedright
	\item\relax}
{\endlist}
\newcommand{\myemail}[1]{\,$\cdot$\, {\small #1}}
\newcommand{\myaff}[1]{{\small #1}\par\medskip}
\renewcommand{\vec}[1]{\mathbf{#1}}
\DeclareMathOperator*{\E}{\mathbf{E}}
\DeclareMathOperator*{\Var}{\mathbf{Var}}
\DeclareMathOperator*{\polylog}{polylog}
\newcommand{\N}{\mathbb{N}}
\newcommand{\namedref}[2]{\hyperref[#2]{#1~\ref*{#2}}}
\newcommand{\sectionref}[1]{\namedref{Section}{#1}}
\newcommand{\tableref}[1]{\namedref{Table}{#1}}
\newcommand{\equationref}[1]{\hyperref[#1]{Eq~(\ref*{#1})}}
\newcommand{\theoremref}[1]{\hyperref[#1]{Theorem~\ref*{#1}}}
\newcommand{\lemmaref}[1]{\hyperref[#1]{Lemma~\ref*{#1}}}
\newcommand{\noteref}[1]{\hyperref[#1]{note~\ref*{#1}}}
\newcommand{\appendixref}[1]{\hyperref[#1]{Appendix~\ref*{#1}}}
\newcommand{\corollaryref}[1]{\hyperref[#1]{Corollary~\ref*{#1}}}
\begin{document}

\begin{mycover}
	{\huge\bfseries\boldmath Majority consensus thresholds in competitive Lotka--Volterra populations \par}

	\bigskip
	\bigskip

\textbf{Matthias F{\"u}gger}
\myemail{mfuegger@lmf.cnrs.fr} \\
\myaff{Universit{\'e} Paris-Saclay, CNRS, ENS Paris-Saclay, LMF}

\textbf{Thomas Nowak}
\myemail{thomas@thomasnowak.net} \\
\myaff{Universit{\'e} Paris-Saclay, CNRS, ENS Paris-Saclay, LMF \& Institut Universitaire de France}

\textbf{Joel Rybicki}
\myemail{joel.rybicki@hu-berlin.de} \\
\myaff{Humboldt University of Berlin}

\bigskip
\end{mycover}

\medskip
\begin{myabstract}
  \noindent\textbf{Abstract.}
One of the key challenges in synthetic biology is devising robust signaling primitives for engineered microbial consortia. In such systems, a fundamental signal amplification problem  is the majority consensus problem: given a system with two input species with initial difference of $\Delta$ in population sizes, what is the probability that the system reaches a state in which only the initial majority species is present?

In this work, we consider a discrete and stochastic version of competitive Lotka--Volterra dynamics, a standard model of microbial community dynamics. We identify new threshold properties for majority consensus under different types of interference competition:
\begin{itemize}
\item We show that under so-called self-destructive interference competition between the two input species,
  majority consensus can be reached with high probability if the initial difference satisfies $\Delta \in \Omega(\log^2 n)$, where $n$ is the initial population size. This gives an \emph{exponential} improvement compared to the previously known bound of $\Omega(\sqrt{n \log n})$ by Cho et al. [Distributed Computing, 2021] given for a special case of the competitive Lotka--Volterra model.
  In contrast, we show that an initial gap of $\Delta \in \Omega(\sqrt{\log n})$ is necessary.

  \item On the other hand, we prove that under \emph{non-self-destructive} interference competition, an initial gap of $\Omega(\sqrt{n})$ is necessary to succeed with high probability and that a $\Omega(\sqrt{n \log n})$ gap is sufficient.
    \end{itemize}

This shows a strong qualitative gap between the performance of self-destructive and non-self-destructive interference competition. Moreover, we show that if in addition the populations exhibit interference competition between the individuals of the \emph{same} species, then majority consensus cannot always be solved with high probability, no matter what the difference in the initial population counts.
\end{myabstract}

\thispagestyle{empty}
\setcounter{page}{0}
\newpage

\section{Introduction}

Synthetic biology is a discipline focusing on the rational engineering of biological systems~\cite{brenner2008engineering,cameron2014brief}. Early results built basic computational modules, such as memory~\cite{gardner2000construction}, clocks~\cite{elowitz2000synthetic,danino2010synchronized}, and sensors~\cite{levskaya2005engineering} in the bacterium \textit{Escherichia coli}. Recently, synthetic biologists have started to engineer \emph{synthetic consortia} consisting of multiple interacting microbial species that collectively implement \emph{distributed} biological circuits~\cite{tamsir2011robust,regot2011distributed,macia2012distributed,bittihn2018rational}.
This has lead the bioengineering community to face a fundamental challenge of distributed systems: the need for robust \emph{coordination primitives} to coordinate the activities of all different microbial populations comprising the  circuit~\cite{alnahhas2020majority,macia2012distributed,bittihn2018rational,grandel2021control}.

Synthetic biologists have recognized that (a) such problems are studied in the field of distributed computing and that (b) the existing models of distributed computing poorly capture key aspects of microbial systems~\cite{benenson2012biomolecular,grozinger2019pathways,karkaria2020microbial,del2018future}. Classical models of distributed computing ignore even the most elementary \emph{ecological processes} that take place in microbial consortia, such as the stochastic reproduction and mortality of individual cells and competition between different species~\cite{gonze2018microbial,zhou2017stochastic}.

This limitation holds even for many popular models of distributed computing explicitly motivated by biological computation, such as the widely studied population protocol model~\cite{angluin2008simple,alistarh2017time,berenbrink2020optimal,doty2022time,bankhamer2022population}.
Unfortunately, these fundamental ecological processes are (1) in general unavoidable in synthetic microbial consortia, (2) lead to stochastic fluctuations in the community size and composition, which are  key drivers of microbial population dynamics~\cite{zhou2017stochastic}, and importantly, (3) dealing with them is one of the main challenges in engineering microbial consortia~\cite{bittihn2018rational,grandel2021control,karkaria2020microbial}.

While theoretical foundations of molecular and biological computation have gained increasing attention in recent years~\cite{fisher2011biology,chazelle2012natural,doty2012theory,feinerman2013theoretical,navlakha15biological,alistarh-survey,rashid2021epigenetic}, there is scarcely any work focusing on the theory of distributed computing in \emph{microbial consortia}, despite the fact that distributed computing has become a fundamental paradigm in synthetic biology.
Here, we initiate the complexity theoretic study of the \emph{computational power of programmable ecological interactions}~\cite{li2022synthetic} in microbial circuits. 

\subsection{The problem: majority consensus}

Majority consensus is a fundamental problem  in distributed computing~\cite{angluin2008simple,draief2012-convergence,cooper2014power,mertzios2014majority,feinerman2017breathe,alistarh2018space-optimal,doty2022time,condon2020approximate}. In this problem, each node in the system is given a local input bit, and the task is for each node to output the input bit given to the majority of the nodes.

\emph{Majority consensus} has also been identified as a useful signaling primitive for engineered microbial consortia in theory~\cite{cho2021distributed} and practice~\cite{alnahhas2020majority}.
The goal is to design a genetic circuit, where the (gene expression) state of each cell is determined by the species which is in the majority~\cite{alnahhas2020majority}.
This primitive can be used as a robust differential signal amplifier~\cite{cho2021distributed} when composing complex circuits implemented by different populations~\cite{regot2011distributed}.
However, it remains unclear how to efficiently solve this problem in the microbial setting, where individuals replicate, die, and compete with one another.

The performance of majority consensus protocols is typically studied as a function of the initial population size $n$ and the initial difference $\Delta$ between the counts of the majority and minority input species.
In \emph{exact} majority consensus, the goal is to correctly compute the majority with probability~1 for any $\Delta>0$, whereas
in \emph{approximate} majority consensus, the protocols are allowed to fail with a probability that depends on $n$ and $\Delta$.
Typically, the aim is to obtain protocols that succeed with high probability in $n$ for as small as possible $\Delta$.
Intuitively, the smaller $\Delta$ a protocol can deal with, the better the protocol tolerates noise:
in the synthetic biology setting, such protocols are instrumental in amplifying signals produced by noisy (biological) processes, e.g.,  microbial subcircuits.

For example, in the stochastic population protocol model~\cite{angluin2006computation}, where a random scheduler picks pairs of nodes to interact uniformly at random in each time step, both exact~\cite{alistarh2015-fast,draief2012-convergence,alistarh2017time,mertzios2014majority,doty2022time} and approximate~\cite{angluin2008simple,czyzowicz2022convergence} versions of majority consensus have been studied.
There is a simple (but challenging to analyze) 3-state protocol~\cite{angluin2008simple} that solves approximate majority consensus in $O(n \log n)$ interactions with high probability whenever $\Delta \in \Omega(\sqrt{n} \cdot \log n)$. This  protocol can in principle implement a cell cycle switch in cellular populations~\cite{cardelli2012cell}.
In contrast, for \emph{exact} majority consensus, it is known that any $O(1)$-state protocol requires $\Omega(n^2)$ expected interactions~\cite{alistarh2017time}, but
there is a $O(\log n)$-state protocol that solves the problem in $O(n \log n)$ interactions in expectation~\cite{doty2022time}.

\subsection{Our focus: from molecular to microbial computation}
The population protocol model captures uncertainty arising from unpredictable \emph{interaction patterns}, but it does not capture \emph{demographic noise}, i.e., the stochastic fluctuations in the community composition that arise from the chance events of the underlying  biomolecular~\cite{wilkinson2018stochastic,benenson2012biomolecular} and ecological processes~\cite{zhou2017stochastic,gonze2018microbial,lande2003stochastic} taking place in microbial systems. For small  populations, as in the case of synthetic microbial consortia, demographic noise is known to have a significant impact on the realized dynamics.

In this work, we take steps towards developing a theory of microbial computation that investigates the \emph{computational power of ecological interactions}.
Motivated by the recent experimental~\cite{alnahhas2020majority,li2022synthetic} and mathematical modeling work utilizing ecological competitive mechanisms~\cite{mao2015slow,cho2021distributed,andaur2021reaching} in the design of distributed majority consensus algorithms in synthetic microbial consortia, we study the following question at the interface of theoretical computer science and synthetic ecology:
\begin{quote}
How does (a) demographic noise and (b) the choice of engineered, competitive mechanisms impact the performance of microbial majority consensus protocols?
\end{quote}
We investigate the  majority consensus problem in discrete Lotka--Volterra models of well-mixed, competitive microbial communities. These are standard models used in microbiology that explicitly model the reproductive and competitive dynamics of microbial species over time~\cite{gonze2018microbial,zhou2017stochastic,dedrick2023does}.
The analysis of \emph{stochastic} population models is considered important but highly challenging in the biomathematics~\cite{jagers2010plea,geritz2012mathematical,black2012stochastic} and statistical physics communities~\cite{mckane2004stochastic,huang2015stochastic,reichenbach2006coexistence,dobrinevski2012extinction}. Recently, their rigorous analysis has started to gain interest also in theoretical computer science ~\cite{cho2021distributed,czyzowicz2022convergence,andaur2021reaching}.

We develop new techniques to analyze the behavior of majority consensus dynamics in stochastic, competitive Lotka--Volterra models, and show that in principle (1) ecological processes can be \emph{algorithmically exploited} to obtain robust majority consensus protocols, and that (2) different ecological mechanisms can have substantial impact on the performance of microbial protocols.

\subsection{The model: stochastic Lotka--Volterra dynamics}\label{ssec:lv}

We now introduce the stochastic two-species Lotka--Volterra (LV) models of competing microbial populations.
These LV models generalize the models of microbial majority consensus previously studied in the distributed computing community~\cite{cho2021distributed,andaur2021reaching}.
Lotka--Volterra dynamics can be derived mechanistically from several different biological assumptions. Here, we have derived the models assuming interference competition~\cite{granato2019evolution} between the species, as interference competition can be programmed into synthetic microbial populations using readily available genetic modules~\cite{li2022synthetic}.

\paragraph{The Lotka--Volterra models.}
As typical for non-spatial biological population models~\cite{black2012stochastic,wilkinson2018stochastic}, our models are formally represented as chemical reaction networks~\cite{gillespie1977exact,Bre99}, assuming unit volume. For the sake of brevity, we do not give the full general definition of the chemical reaction network formalism, as we only work with two basic models.
Following the standard reaction kinetics notation, the Lotka--Volterra model with \emph{self-destructive} interference competition is given by
\begin{equation}\label{eq:sd-model}
X_i \xrightarrow{\beta} X_i + X_i \qquad X_i \xrightarrow{\delta} \emptyset \qquad X_{i} + X_{1-i} \xrightarrow{\alpha_i} \emptyset \qquad X_i + X_{i}  \xrightarrow{\gamma_i} \emptyset,
\end{equation}
where $X_i$ denotes an individual of species $i \in \{0,1\}$,  $\emptyset$ denotes the removal of each reactant species and $\alpha_i,\beta,\delta,\gamma_i \ge 0$ are constant rate parameters. We note that here we treat the reactions with reactants $X_0 + X_1$ and $X_1 + X_0$ of different species formally as different reactions unlike typically in chemical reaction network formalism -- this purely for notational convenience.

Our second model is the Lotka--Volterra model with \emph{non-self-destructive} competition given by \begin{equation}\label{eq:nsd-model}
X_i \xrightarrow{\beta} X_i + X_i \qquad X_i \xrightarrow{\delta} \emptyset \qquad X_{i} + X_{1-i} \xrightarrow{\alpha_i} X_i \qquad X_{i} + X_{i} \xrightarrow{\gamma_i} X_i,
\end{equation}
for $i \in \{0,1\}$. We often write $\alpha=\alpha_0 + \alpha_1$ and $\gamma = \gamma_0 + \gamma_1$. We say that the system is \emph{neutral} if both species have identical rate parameters. Unless otherwise specified, we consider neutral communities. In both models, the reactions have at most two reactants. We call reactions with a single reactant \emph{individual reactions} and reactions with two reactants \emph{pairwise interactions} between individuals.

\paragraph{Biological interpretation.}
We assume a well-mixed setting as encountered in a bioreactor, with a growing microbial community consisting of two populations of different species. We primarily focus on the early stages of the microbial population dynamics, the so-called exponential phase of microbial growth, where the population size is far from the carrying capacity of the environment (e.g., growth is not limited by availability of nutrients or space).

Each individual of input species $i \in \{0,1\}$ reproduces at per-capita rate $\beta \ge 0$ and dies at per-capita rate $\delta \ge 0$. In addition to these reproductive dynamics, we assume that interactions between individuals happen via interference competition: individuals of  species $i$ encounter and individuals of species $1-i$ at rate $\alpha_i \ge 0$, and individuals of the same species $i$ encounter and kill each other at rate $\gamma_i \ge 0$. Competition between individuals of different species is called \emph{interspecific}, whereas competition between individuals of the same species is \emph{intraspecific}.

When species engage in a competitive interaction, the outcome can be either symmetric (both individuals die) or asymmetric (only one individual dies). These two cases correspond to the two different models (1) and (2), and they have biologically very distinct interpretations. The former scenario corresponds to \emph{self-destructive interference} (e.g., cells release a bacteriocin via lysis), whereas the latter scenario corresponds to \emph{non-self-destructive interference} (e.g., cells secrete a bacteriocin or puncture the membrane of other cells on  physical contact). Both types of competition are exhibited by bacterial species~\cite{granato2019evolution}. Non-self-destructive interference can also be implemented in engineered populations using, e.g., programmable plasmid conjugation~\cite{marken2023addressable}, as suggested by Cho et al.~\cite{cho2021distributed}.

\paragraph{Stochastic kinetics.}
For both model variants, a \emph{configuration} is a vector $\vec x  = (x_0, x_1) \in \mathbb{N}^2$, where $x_i$ gives the count of species $i$ in the configuration. In a configuration $(x_0,x_1)$, the individual birth reactions of species $i$ have propensity $\beta x_i$ and individual death reactions have propensity $\delta x_i$. The propensity of an interspecific competition reaction is $\alpha x_0 x_1$ and the propensity of an intraspecific competition reaction is $\gamma x_i (x_i-1)/2$. The total propensity of the configuration $(x_0,x_1)$ is given by
\[
\varphi(x_0,x_1) = \sum_{i \in \{0,1\}} \left( \alpha_i x_0x_1 + \beta x_i + \delta x_i + \gamma_i x_i(x_i-1)/2 \right).
\]
Under continuous-time stochastic kinetics, the time until the next reaction in a configuration $\vec x$ is exponentially distributed with rate $\varphi(\vec x)$. Given that the current configuration is $\vec x$, the probability that $R$ is the next reaction is  $\varphi_R(\vec x)/\varphi(\vec x)$, where $\varphi_R(\vec x)$ is the propensity of the reaction (as above).

The \emph{stochastic trajectory} under stochastic kinetics is given by a continuous-time Markov process $\vec X = (\vec X_t)_{t \ge 0}$ on the state space $\mathbb{N}^2$, where $Q(\vec x,\vec y)$ gives the propensity of the chain from transitioning from state $\vec x$ to state $\vec y$. In this work, we primarily focus on the \emph{jump chain} $\vec S = (\vec S_t)_{t \ge 0}$ of $\vec X$, which is the discrete-time Markov chain given by the transition probabilities $P(\vec x,\vec y) = Q(\vec x,\vec y) / \varphi(\vec x)$ for $\varphi(\vec x) > 0$. If $\varphi(\vec x) = 0$, then $P(\vec x, \vec x) = 1$ and $P(\vec x,\vec y)=0$ for $y \neq x$. Here, $\vec S_t$ represents the counts of both species after $t \in \mathbb{N}$ reactions have occurred.

\paragraph{Majority consensus.}
We say that species $i \in \{0,1\}$ in a state $\vec S_t = (x_0, x_1)$ is the \emph{majority species} if $x_i > x_{1-i}$ holds. In particular, we say that the majority species in the initial configuration $\vec S_0$ is the \emph{initial majority species}. The species who is not the initial majority species is the \emph{initial minority species}.
We say that a configuration $(x_0,x_1)$ has \emph{reached consensus} if $x_0 = 0$ or $x_1 = 0$. In such a configuration, we say that the species $i$ has won if $x_i > 0$ and $x_{1-i} = 0$. i.e., the species $i$ is the majority species in the configuration $(x_0,x_1)$.
We define the \emph{consensus time} of the chain $\vec S$ to be
\[
T(\vec S) = \inf \{ t : \vec S_t \textrm{ has reached consensus} \}
\]
the minimum time until (at least) one of the species goes extinct.
We say that the chain $\vec S$ \emph{reaches majority consensus} if $T(\vec S)$ is finite and the initial majority species has positive count at time $T(\vec S)$. We define $\rho(\vec S)$ to be the probability that the chain $\vec S$ reaches majority consensus.

Without loss of generality, we assume throughout that the first species is the initial majority species, i.e., $\vec S_0 = (a,b)$ for some $a > b > 0$. We use $n=a+b$ to denote the size of the total initial population. For every time step $t \ge 0$, we define $\Delta_t = S_{t,0} - S_{t,1}$ to be the difference between the counts of the initial majority and the initial minority species. In particular, $\Delta_0$ gives the initial gap between the initial majority and minority species.

\subsection{Our contributions}

In this work, we are interested in how the probability $\rho(\vec S)$ of majority consensus behaves as a function of the initial gap $\Delta_0$ under stochastic kinetics. To this end, we say that $\Psi(n) \ge 0$ is a \emph{majority consensus threshold} for a Lotka--Volterra model if $\rho(\vec S) \ge 1-1/n$ if and only if $\Delta_0 \ge \Psi$. We identify the asymptotic (and sometimes exact) majority consensus thresholds for two-species competitive Lotka--Volterra systems under different modes of inference competition. Our main results are summarized in \tableref{tab:comparison} and are as follows.

\begin{table}[]
\begin{tabular}{@{}l|c|c|l@{}}
\toprule
\textbf{Competition} & \textbf{Self-destructive}                       & \textbf{Non-self-destructive}                       & \textbf{Reference} \\
\midrule
Interspecific only & $\Omega(\sqrt{\log n})$ --- $O(\log^2n)$ & $\Omega(\sqrt{n})$ --- $O(\sqrt{n \log n})$ & Sec.~\ref{sec:self-destructive} and Sec.~\ref{sec:non-self-destructive} \\
Both inter- and intraspecific & $\ge n -1$ & $\ge n - 1$  & Sec.~\ref{sec:both-comp}  \\
Intraspecific only & $\infty$ & $\infty$ & Sec.~\ref{sec:intra-only}  \\
\midrule
Interspecific and $\delta=0$ & $O(\sqrt{n \log n})$ & $O(\sqrt{n\log n})$ (*) & \cite{cho2021distributed} and \cite{andaur2021reaching} \\
No competition ($\alpha=\gamma=0$)& $n-1$ & $n-1$ & \cite{andaur2021reaching} \\
\bottomrule
\end{tabular}
\caption{Worst-case majority consensus thresholds for different LV models. The first three rows are new results, the last two results are from prior work.
  (*) The model of~\cite{andaur2021reaching} is not strictly a special case of the LV model, as it assumes bounded, non-mass-action birth rates.}
\label{tab:comparison}
\end{table}

\paragraph{Interspecific competition.}
In systems with interspecific competition and no intraspecific competition (i.e., the case $\alpha>0$ and $\gamma=0$) we show the following results.
First, for self-destructive interference competition, we show that the majority consensus threshold lies in a polylogarithmic range between $\Omega\left(\sqrt{\log n}\right)$ and $O(\log^2 n)$. This is an \emph{exponential improvement} in the previous upper bound of $O(\sqrt{n \log n})$ for self-destructive competition shown by Cho et al.~\cite{cho2021distributed} in the special case with $\delta=0$ (i.e., a model with no individual death reactions).
Our result applies to a much larger class of models: for any choice of $\beta,\delta,\alpha>0$, the chain reaches majority consensus with high probability provided that the gap is $\Omega(\log^2 n)$. In contrast, we show that if the gap is $o(\sqrt{\log n})$ the chain fails to reach majority consensus with constant probability. 

Second, for non-self-destructive interference competition, we show that the majority threshold lies in a polynomial range between $\Omega(\sqrt{n})$ and $O(\sqrt{n \log n})$. This shows an \emph{exponential separation} between competitive Lotka--Volterra models with self-destructive and non-self-destructive interspecific interference competition in these models.

In comparison, recently Andaur et al.~\cite{andaur2021reaching} gave an upper bound of $O(\sqrt{n \log n})$ for a slightly different model with resource-consumer dynamics and interference competition. However, their proof is only for models without individual death reactions ($\delta=0$), assumes bounded non-mass-action growth dynamics, and their analysis only guarantees majority consensus with probability $1-O(1/\sqrt{n})$ and not with true high probability (i.e., probability $1-1/n^c$ for any constant $c>0$).

Our upper bound holds for any constant $\alpha>0$ and $\beta,\delta \ge 0$, i.e., our model also allows for individual death reactions. Moreover, we can show that majority consensus is reached with high probability provided that the gap is $\Omega(\sqrt{n \log n})$. Finally, proof technique can also be applied to the model of Andaur et al.~\cite{andaur2021reaching} in a straightforward manner, yielding a stronger probability guarantee for majority consensus also in their model with bounded, non-mass-action growth rates.

  For both self-destructive and non-destructive interspecific competition, we show that consensus is reached within $O(n)$ events both in expectation and with high probability in the absence of intraspecific competition.

\paragraph{Intraspecific competition.}
Finally, unlike prior work, we also investigate majority consensus in systems with \emph{intraspecific interference competition}, i.e., competition between the individuals of the same species. We show that systems with intraspecific competition can be have fundamentally different behavior in terms of majority consensus. Namely, we show that such majority consensus thresholds do not always exist. The results are given in \appendixref{sec:intra-bounds}.

    First, if intraspecific and interspecific competition are equally strong ($\alpha=\gamma$) for self-destructive competition, then the probability of majority consensus is equal to the initial proportion of the majority species. This implies that majority consensus threshold is at $n-1$, and that such systems cannot solve majority consensus with high probability (in the true sense). A similar result holds for non-self-destructive competition with $\gamma=2\alpha$. Second, we also show that systems with only intraspecific interference competition have no majority consensus thresholds: given any gap, the chain fails to reach majority consensus with at least positive constant probability.

    \subsection{Technical challenges}

For the stochastic LV models without intraspecific competition, our analysis uses a new  technical approach for bounding the noise arising from individual reactions and asymmetric outcomes of non-self-destructive competition. We introduce a new ``asynchronous, pseudo-coupling'' technique that can be used to bound the behaviour of the two-species process before it reaches consensus using what we call ``nice'' single-species birth-death chains.

On a high-level, this approach is similar in spirit to the recent coupling techniques of Andaur et al.~\cite{andaur2021reaching} and Cho et al.~\cite{cho2021distributed}, which couple a two-species chain with easier to analyze single-species birth-death chains. However, there are key differences, and our approach is more general.

\paragraph{Limitations of prior coupling approaches.}
The existing techniques for microbial majority consensus~\cite{andaur2021reaching,cho2021distributed}
are closely tailored to specific model variants they consider. When trying to use these techniques to the LV models,  both of the previously existing coupling techniques critically break when the process is allowed to have certain stochastic events that may occasionally benefit the minority species (e.g., individual death reactions of the majority species decreasing the discrepancy).

Cho et al.~\cite{cho2021distributed} use elaborate coupling arguments between several different chains to first establish a bound on the extinction time, and then couple the two-species chain to parallel birth-only Yule processes to bound the probability of reaching majority consensus using the regularized incomplete beta function. In addition to the actual coupling argument, the proof critically assumes that no individual deaths can occur and that competition is self-destructive. Moreover, the bound obtained with this technique is exponentially far from the real bound, as we show. The submartingale argument of Andaur et al.~\cite{andaur2021reaching} allows for non-self-destructive competition at the cost of forbidding individual death reactions and losing the guarantee of true with high probability majority consensus.

\paragraph{Overview of new techniques.}
We resolve the limitations of prior techniques by moving from usual Markovian couplings that update the single-species chain and the two-species chain simultaneously to a new type of an ``asynchronous pseudo-coupling''. That is, our construction is not strictly speaking a coupling between the two-species and single-species processes, as the coupled chains are not updated in lock-step. Despite this complication, we can still carefully extract useful information about the distribution of events in the two-species chain using stopping time arguments.

This is achieved in part by also deriving a much more detailed accounting of the (demographic) noise arising from birth reactions of the minority species and death reactions of the majority species. We abstract the properties of the single-species chain required by the pseudo-coupling, and providing a more fine-grained analysis of the distribution of events in the single-species chain.

Furthermore, our technique makes far fewer assumptions about the structure of the two-species system and properties of the dominating chains. Therefore, we suspect that our approach can be further extended to analyze more realistic models beyond Lotka--Volterra dynamics such as models incorporating general non-bounded resource-consumer dynamics (i.e., exploitative competition) in addition to interference competition. Indeed, our new techniques give both a simpler analysis and stronger guarantees of the previously studied microbial majority consensus dynamics~\cite{andaur2021reaching,cho2021distributed}.

\paragraph{Our approach in a nutshell.}
Our main conceptual approach is to analyze the probability of reaching majority consensus $\rho(\vec S)$ by considering the following random-length sum defined by
\begin{equation}\label{eq:noise-sum}
F(\vec S) = \sum_{t=1}^{T(\vec S)} F_t,
\end{equation}
where $F_t = \Delta_{t-1} - \Delta_{t}$. Note that $F_t > 0$ if the discrepancy changed in favor of the initial minority species and $F_t < 0$ if the discrepancy changed in favor of the initial majority species at time step $t$.

The random variable $F(\vec S)$ counts how much the initial gap changes in favor of the initial \emph{minority} species. This captures the effects of \emph{demographic noise} on the difference between the initial majority and minority species before consensus is reached. Provided that $T(\vec S)$ is finite with probability 1, we have that
$\rho(\vec S) = \Pr[F < \Delta_0] = 1 - \Pr[F \ge \Delta_0]$,
that is, we can connect the probability of reaching majority consensus with an given initial gap to the cumulative density function of $F$. Since the random variables $T(\vec S)$ and $F_1, \ldots, F_{T(\vec S)}$ are dependent, getting a handle on the distribution of $F$ can be non-trivial. Moreover, in general $(F_t)_{t \ge 0}$ does not give a (sub)martingale, so it is not clear if one can easily employ readily available martingale concentration bounds to analyze the process.

To bound $F$, we observe that the demographic noise $F = F_\textrm{ind} + F_\textrm{comp}$ can be divided to two components, where $F_\textrm{ind}$ is the noise arising out of reproductive dynamics (i.e., individual birth and death reactions) and $F_\textrm{comp}$ is the noise arising from competition dynamics. This allows us to obtain a much more refined bound on total noise in the system by studying both components separately, which previous techniques were unable to do.

Intuitively, under self-destructive competition, there is only noise from the individual events, which are fairly rare under the mass action dynamics. This noise turns out to be ``polylogarithmic''. On the other hand, under non-self-destructive competition, there is additional noise coming from the chance outcomes of the competition events. We show there are $O(n)$ competition events before extinction. The outcomes of the competition events are (intuitively) similar to a random walk on the line. The key challenge is that the events causing different types of noise are interleaved. 

\subsection{Open problems and future directions}

We suspect that in general the bound of $O(\log^2 n)$ for self-destructive competition is not tight for all parameter ranges (e.g., it clearly is not tight in the case $\beta=\delta=0$). The natural step is to identify the tight asymptotic majority consensus threshold in the case $\alpha,\beta,\delta>0$. Second, while we focus on Lotka--Volterra models, we believe our techniques are applicable to a wider variety of stochastic population models beyond the competitive, two-species Lotka--Volterra model.

We show that in the worst-case, intraspecific competition can badly hinder the probability of reaching majority consensus. For example, with self-destructive competition, we identify that the majority consensus threshold is $O(\log^2n)$ with $\alpha>0$ and $\gamma=0$ and $n-1$ when $\alpha=\gamma>0$. An interesting open problem is to identify at which point does the majority consensus threshold enter a sublinear or polylogarithmic regime when $\alpha>0$ is a fixed constant and $\gamma \to 0$.

On a conceptual level, our results suggest interesting \emph{computational trade-offs} in the design of microbial circuits: majority consensus protocols utilizing self-destructive competition seem to be less sensitive to demographic noise than protocols based on non-self-destructive competition. However, the trade-off is that the former is much more costly at the individual level. From a bioengineering perspective, it would be interesting to investigate if circuits utilizing non-self-destructive interference are evolutionarily more stable than circuits utilizing self-destructive competition.

Finally, one may surmise that the computational trade-offs implied by this work are solely theoretical. For example, idealized well-mixed, mass action Lotka--Volterra models do not capture the full range of microbial dynamics. However, such models have been experimentally  observed to provide reasonable approximations in many situations~\cite{gonze2018microbial,hu2022emergent,dedrick2023does} also in the context of synthetic microbial consortia~\cite{alnahhas2020majority,li2022synthetic}. Regardless, future work should further investigate how relaxing the model assumptions influence the predicted computational trade-offs and experimentally test this. This requires
developing new proof techniques for dealing with non-mass action models and/or explicit spatial dynamics.

\subsection{Limitations and biological assumptions}\label{sec:limitations}

This paper focuses on new mathematical techniques for the analysis of distributed consensus dynamics in stochastic interaction models. Like in all biological models, there are several biological assumptions and restrictions made in the stochastic Lotka--Volterra models we study. Indeed, our work shows how subtle differences in model assumptions can fundamentally impact the solvability and complexity of majority consensus in microbial population models.

\paragraph{Markovian, mass action dynamics.}
As typical for analytical models, we assume memoryless species and mass action dynamics.
We do not model the internal biophysical or metabolic state of the individual bacteria.
In reality, the reproductive dynamics of microbes can
depend on their internal state and changing environmental conditions~\cite{gonze2018microbial}.
While such models may very well be used in numerical simulations, their analysis seems to be far beyond the reach of current proof techniques.
Regardless, our technique can still be used to analyze also models with non-mass-action reactions, such as those in the biological reaction network model of Andaur et al.~\cite{andaur2021reaching}.

\paragraph*{Physical aspects and asymptotics.}
We assume unit volume.
This differs from how molecular systems are analyzed at the so-called thermodynamic limit, where the ratio of volume and initial population size remains constant as $n \to \infty$.
We do not assume such a constant initial population density.

While the limit of an infinitely dense cell population is clearly non-physical, the asymptotic analysis here illuminates the impact of different biological mechanisms
on the performance of microbial protocols
with
increasing initial cell populations (i.e., larger inputs).
In a wetlab setting, the initial population size can typically vary over a large range, from $1\,\text{mL}^{-1}$ to about $10^9\,\text{mL}^{-1}$ for \emph{E. coli}. Nevertheless, it remains open to demonstrate that within this range, asymptotic effects are dominating, as suggested by numerical simulations done in prior work (see, e.g., \cite{cho2021distributed}).

Finally, it is important to note that while our LV models allow for increasing initial density and there is no explicitly assumption bounding the maximum size of a population, the population size is regulated by a carrying capacity induced by density-dependent competition. In particular, when including also intraspecific competition ($\gamma>0$), the stochastic LV models exhibits the full logistic growth regime usually observed for microbial populations even after competitive exclusion.

\subsection{Structure of the paper}
We have delegated some of the proofs to the Appendix. In addition, \sectionref{sec:related} further discusses the background on previous theoretical and empirical work regarding Lotka--Volterra models and related work on majority consensus in other models of computing. We start with preliminaries in \sectionref{sec:preliminaries}.
In \sectionref{sec:nice}, we establish some useful results regarding the behavior of certain single-species chains, which we then use to bound the behavior of competitive two-species Lotka--Volterra chains in \sectionref{sec:dominating-chains}. \sectionref{sec:self-destructive} and \sectionref{sec:non-self-destructive} give bounds for majority consensus thresholds under self-destructive and non-self-destructive interference competition.
The lower bounds for majority consensus thresholds in systems with intraspecific competition appear in \sectionref{sec:intra-bounds}.

\section{Related work}\label{sec:related}

\subsection{Lotka-Volterra models}

Named after the pioneering work of Lotka~\cite{lotka1925elements} and Volterra~\cite{volterra1926fluctuations}, Lotka--Volterra (LV) models have been the cornerstone of theoretical biology and microbial ecology for over one hundred years. Gause~\cite{gause1934} reported the first experimental validations of the two-species Lotka--Volterra equation using yeast and protozoa microcosms in 1934. Since then, LV models have become widely applied in microbial ecology~\cite{gonze2018microbial}. They have been found to predict well even many aspects of complex microbial communities~\cite{dedrick2023does,hu2022emergent,stein2013ecological}. Given the abundance of work on LV models, we only give a brief overview, focusing on competitive LV models in well-mixed populations.

In biology, the competitive Lotka--Volterra ordinary differential equation model has become the baseline model for ecological community dynamics~\cite{hofbauer1998evolutionary,murray2002mathematical,may2007theoretical}.
The simplicity of numerical simulation of ODE models has rendered them a popular choice for modelling synthetic microbial consortia dynamics~\cite{li2022synthetic,mao2015slow,zomorrodi2016synthetic}.
These deterministic models are derived under the assumption of (infinitely) large continuous populations, ignoring the important stochastic effects driving the dynamics of real (finite and discrete) populations~\cite{lande2003stochastic,black2012stochastic,zhou2017stochastic}.
Despite their shortcomings, the key reason for the popularity of the deterministic models is simple: even complex ODE models are easy to simulate numerically, while even simple stochastic models tend to be difficult to analyse~\cite{jagers2010plea,geritz2012mathematical,black2012stochastic}.

This challenge has motivated a significant line of research devoted to the analysis of stochastic, discrete Lotka--Volterra models in well-mixed populations, typically using the tools of statistical mechanics~\cite{mckane2004stochastic,huang2015stochastic,reichenbach2006coexistence,dobrinevski2012extinction}. These models closely resemble to the models we use, and many papers in this area also investigate coexistence and extinction probabilities (as a function of time). However, they do not investigate the probability of \emph{majority consensus} as a function of $n$ and $\Delta$, nor compare how different choices of competitive mechanisms impact these probabilities. Furthermore, these models usually assume only non-self-destructive competition and finite maximum population size. This subtly differs from our model, where the carrying capacity arises from interference competition. In the theoretical computer science literature, convergence and thresholds of discrete LV dynamics have been studied in the population protocol model~\cite{czyzowicz2022convergence} in bounded populations.

\paragraph{Comparison with deterministic kinetics.}
We note that for the two stochastic Lotka--Volterra models we study,
the continuous-valued approximations
under deterministic mass action kinetics correspond to the usual deterministic competitive Lotka--Volterra equations~\cite{gonze2018microbial,hofbauer1998evolutionary}. For two-species in the neutral case, this is given by  the non-linear differential equations  \begin{equation}\label{eq:two-ode}
\frac{\operatorname{d}x_i}{\operatorname{dt}} = x_i(r - \alpha' x_{1-i} - \gamma' x_i),
\end{equation}
where $x_i$ is the density of species $i \in \{0,1\}$,  $r=\beta-\delta$ is the intrinsic growth rate, $\gamma'=\gamma_0=\gamma_1 \ge 0$ is the rate of intraspecific competition, and $\alpha' \ge 0$ is the rate of interspecific competition so that for model (1), we have $\alpha' = \alpha = \alpha_0+\alpha_1$, and for model (2) we have $\alpha'=\alpha_0 = \alpha_1$.
It is easy to see that in this  model, if $\alpha' > \gamma'$, then the species with the higher initial density will deterministically always win. Thus, this model fails to capture the stochastic effects occuring in finite populations.

\subsection{Majority consensus in stochastic interaction models}

The majority consensus problem has been studied in many asynchronous, stochastic interaction models in fully-connected networks (i.e., well-mixed systems). The  problem and its extension, plurality consensus, have also been studied in synchronous, gossip models with static population size~\cite{cooper2014power,becchetti2014plurality,feinerman2017breathe,ghaffari2016polylogarithmic,bankhamer2022fast}. However, here we primarily focus on stochastic asynchronous models that most closely resemble our setting.

\paragraph{Majority consensus in stochastic biological population models.}
The closest to our work is the recent work by Cho et al.~\cite{cho2021distributed} who considered a special case of our discrete, stochastic Lotka--Volterra model in a two-species chemical reaction network model. They showed that an initial gap of $\Omega(\sqrt{n \log n})$ is sufficient under the assumption $\delta=0$ and self-destructive interspecific interference competition. In contrast to their results, our results hold also for systems with $\delta>0$ and show that an exponentially smaller gap of only $\Omega(\log^2 n)$ suffices for any constant rate constants $\alpha_0, \alpha_1, \beta,\delta > 0$.

More recently, Andaur et al.~\cite{andaur2021reaching} considered a resource-consumer model of biological population dynamics with non-self-destructive interference competition. Their model is not strictly speaking a chemical reaction network model, as they allow for non-mass-action reactions. They showed that in their model $\Omega(\sqrt{n \log n})$ gap suffices with probability $1-O(1\sqrt{n})$.
However, their model assumes bounded, non-mass-action growth and no individual death reactions (i.e., $\delta = 0$). With some work, our new technique can also applied to their model to obtain a bound of $\Omega(\sqrt{n \log n})$ with high probability, i.e., with probability at least $1-1/n^k$, for any constant $k>0$. This is because their dominating chain is also a ``nice chain'' in the sense we define in \sectionref{sec:nice}.

\paragraph{Chemical reaction networks.}
Condon et al.~\cite{condon2020approximate} consider the majority consensus problem and multi-valued consensus in general chemical reaction systems that also allow \emph{trimolecular} reactions (i.e., a single reaction can involve up to three reactants and products). They gave bi- and trimolecular reaction networks for which an initial gap of $\Omega(\sqrt{n \log n})$ is sufficient with high probability.

While this work operates in the same formal model, the protocols are not directly comparable with protocols in the two-species Lotka--Volterra model.  However, interestingly, their ``heavy-B'' protocol, in which reactions have two reactants and three products, and bimolecular ``double-B'' protocol with two reactants and two products for each rule, resemble our protocols with self-destructive competition. On the other hand, the rules in the protocol ``single-B'' resemble non-self-destructive competition. However, these protocols employ three species. They also give a two-species trimolecular protocol that uses three reactants and products.

\paragraph{Population protocols.}
The population protocol model~\cite{angluin2006computation} has become a popular model to study computation in well-mixed chemical solutions. The model is a special case of the chemical reaction network model, where each rule has exactly two reactants and products, and each reaction has unit rate. In particular, this implies that the total population size $n$ remains static throughout the execution of the protocol.

For majority consensus, Angluin et al.~\cite{angluin2008simple} considered approximate majority agreement in the population protocol model. They give a simple 3-state algorithm that succeeds with high probability when the initial gap is $\Omega({\sqrt{n}} \cdot \log n)$ and the number of steps to reach consensus is $O(n \log n)$ with high probability. The same cancellation-idea used by this protocol also appears in protocols in a variety of other models, including our Lotka--Volterra protocols and other protocols relying on competition~\cite{andaur2021reaching,cho2021distributed}, the
``single-B'' protocol of Condon et al.~\cite{condon2020approximate} and the ternary signalling protocol of Perron et al.~\cite{perron2009using}.  Perron et al.~\cite{perron2009using} analysed such a cancellation protocol and showed that if the gap is linear gap the protocol succeeds with very high probability, i.e., fails to converge to the initial majority value with exponentially small probability.

Draief and Vojnovi\'{c}~\cite{draief2012-convergence} showed that there exists a 4-state protocol that always reaches majority consensus  with \emph{any} positive gap in $O(n^2)$ expected interactions. The same protocol was also described and analysed by Mertzios et al.~\cite{mertzios2014majority}.
This variant of majority consensus, where the protocols are required to succeed with probability 1, is often referred to \emph{exact majority} in the population protocol literature.
Alistarh et al.~\cite{alistarh2015-fast} showed that using $O(\log^3 n)$ states, exact majority can be solved in only $n \polylog n$ interactions in expectation and with high probability.

Subsequently, Alistarh et al.~\cite{alistarh2017time} showed that any $o(\log \log n)$-state protocol for exact majority that succeeds requires $n^2/\polylog n$ interactions. They also showed that using $O(\log^2 n)$ states majority can be solved in $O(n \log^2 n)$ interactions.
The upper bound result was subsequently improved to $O(\log^2n)$ states and the lower bound strengthened to $\Omega(\log n)$ for a large class of protocols by Alistarh et al.~\cite{alistarh2018space-optimal}. Recently,
Doty et al.~\cite{doty2022time} gave a $O(\log n)$-state protocol that stabilises in $\Theta(n \log n)$ expected interactions, matching the lower bounds.

Czyzowicz et al.~\cite{czyzowicz2022convergence} considered population protocols with discrete Lotka--Volterra -like stochastic dynamics. Their Lotka--Volterra dynamics are different, as the operate in the population protocol model, where the total population size remains static. In particular, there are no individual birth and death reactions.  In this setting, they gave a 4-state protocol that solves majority consensus with high probability provided that the ratio of the initial input counts $a$ and $b$ is $a/b = 1+\varepsilon/(1-\varepsilon)$ for a constant $\varepsilon>0$, that is, the initial gap is linear.  For their analysis, they also use a coupling technique with delays, but their technique is different, as they operate in the population protocol model with a static maximum size for the system.

While the population protocol model is in spirit similar to the stochastic biological population models we consider, the full range of algorithmic techniques available in the population protocol model are not easily (or if at all) realizable in synthetic biology setting. For example, one cannot completely enforce which reactions are active (e.g., by assuming that no birth/death events occur or that they occur only during certain phases of the protocol), a technique used by cancellation-doubling protocols for fast exact majority protocols. In particular a key challenge in the synthetic biology setting is that reproductive and ecological dynamics are interleaved with the engineered biological circuits, which in combination give rise to the microbial protocol~\cite{alnahhas2020majority,li2022synthetic}.

Finally, we note that Goldwasser et al.~\cite{goldwasser2018population} studied the impact of changing population composition in a synchronous variant of the population protocol model, where an adversary is allowed to make a bounded number of \emph{arbitrary} insertions and deletions of individuals per round. They showed that even under such adversarial changes to population structure, a certain population balancing problem can be solved efficiently with only small number of states. In their setting, the individuals can also trigger duplication (birth) and self-destruction (death) events. In contrast, in our setting, the birth and death events are dictated by the underlying stochastic Lotka--Volterra dynamics.

\section{Preliminaries}\label{sec:preliminaries}

We use $\mathbb{N} = \{0,1, \ldots \}$ to denote the set of nonnegative integers.
For any $n > 0$, we use $\log n$ for the base-2 logarithm of $n$ and $\ln n$ for the natural logarithm of $n$. For any integer $n>0$, we write $H_n$ to denote the $n$\textsuperscript{th} Harmonic number given by the sum $\sum_{i=1}^n 1/i$, which is lower-bounded by $\ln n$.

\paragraph{Concentration bounds.}
We say that a random variable $X$ is a \emph{Bernoulli} random variable if $X$ only takes values in $\{0,1\}$. A random variable $X$ is said to be $O(f(n))$ \emph{with high probability} if for any fixed constant $k \ge 0$ there exists a constant $C(k)$ such that $\Pr[X \ge C(k) f(n)] \le 1/n^k$.
To establish with high probability bounds, we use two standard results on the concentration of random sums.

\begin{lemma}[Chernoff bounds]\label{lemma:chernoff}
Let $X = X_1 + \cdots + X_n$ be the sum of $n$ independent  Bernoulli random variables. Then
\begin{enumerate}\item $\Pr[X \ge (1+\varepsilon) \cdot \E[X]] \le \exp\left(- \E[X] \cdot \varepsilon^2
    / (2+\varepsilon) \right)$ for any $\varepsilon > 0$, and
    \item $\Pr[X \le (1-\varepsilon) \cdot \E[X] \le \exp\left(- \E[X] \cdot \varepsilon^2/2\right)$ for any $0 < \varepsilon < 1$.
\end{enumerate}
\end{lemma}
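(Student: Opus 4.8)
The plan is to apply the standard exponential-moment (Chernoff--Cramér) method to the upper and lower tails separately, in a way that is symmetric between the two. Write $\mu = \E[X] = \sum_{i=1}^n p_i$, where $p_i = \Pr[X_i = 1]$. By independence the moment generating function factorizes, and since each $X_i$ is Bernoulli we have $\E[e^{sX}] = \prod_{i=1}^n \E[e^{sX_i}] = \prod_{i=1}^n \bigl(1 + p_i(e^s-1)\bigr)$ for every real $s$. Using $1+x \le e^x$ termwise, this is at most $\exp\bigl(\mu(e^s-1)\bigr)$; the key point is that this envelope depends on the variables only through $\mu$.

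For the upper tail I would apply Markov's inequality to $e^{sX}$ with $s>0$, getting $\Pr[X \ge (1+\varepsilon)\mu] \le e^{-s(1+\varepsilon)\mu}\,\E[e^{sX}] \le \exp\bigl(\mu(e^s-1) - s(1+\varepsilon)\mu\bigr)$, and then minimize the exponent over $s>0$. The optimal choice is $s = \ln(1+\varepsilon)$, which gives $\Pr[X \ge (1+\varepsilon)\mu] \le \bigl(e^\varepsilon/(1+\varepsilon)^{1+\varepsilon}\bigr)^\mu$. For the lower tail I would run the mirror-image argument on $e^{-sX}$ with $s>0$, which after the optimal choice $s = -\ln(1-\varepsilon)$ (well-defined since $0 < \varepsilon < 1$) yields $\Pr[X \le (1-\varepsilon)\mu] \le \bigl(e^{-\varepsilon}/(1-\varepsilon)^{1-\varepsilon}\bigr)^\mu$.

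The remaining step is to replace these sharp but unwieldy bounds with the clean forms in the statement, which reduces to two elementary one-variable inequalities: $(1+\varepsilon)\ln(1+\varepsilon) - \varepsilon \ge \varepsilon^2/(2+\varepsilon)$ for all $\varepsilon > 0$, and $(1-\varepsilon)\ln(1-\varepsilon) + \varepsilon \ge \varepsilon^2/2$ for $0 < \varepsilon < 1$. The second follows immediately by expanding $\ln(1-\varepsilon)$ as a power series and checking that all surviving terms are nonnegative; the first is a touch more delicate and is handled by a short monotonicity argument (compare derivatives at $\varepsilon = 0$) or the corresponding series manipulation. This calculus bookkeeping is the only "obstacle," and a mild one, since the probabilistic content is entirely routine. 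Raising both inequalities to the power $\mu = \E[X]$ and substituting into the two tail bounds gives exactly the stated estimates. (Note that the argument uses the Bernoulli hypothesis only to write $\E[e^{sX_i}] = 1 + p_i(e^s-1)$; the same proof goes through verbatim for any independent variables supported on $[0,1]$, should a more general version be convenient later.)
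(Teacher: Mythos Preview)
Your argument is correct and is precisely the textbook derivation of these Chernoff bounds via the exponential-moment method: factorize the MGF, bound each factor using $1+x\le e^x$, optimize the Markov exponent, and then pass from the sharp forms $\bigl(e^{\varepsilon}/(1+\varepsilon)^{1+\varepsilon}\bigr)^{\mu}$ and $\bigl(e^{-\varepsilon}/(1-\varepsilon)^{1-\varepsilon}\bigr)^{\mu}$ to the stated expressions via the two one-variable inequalities you identify. Both of those calculus facts are standard and your sketches for them are fine.

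There is nothing to compare against, however: in the paper this lemma appears in the Preliminaries as a quoted standard concentration inequality and is not accompanied by any proof. So you have supplied a (correct, conventional) proof where the paper simply cites the result.
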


We will also make use of a special case of Hoeffding's inequality for random variables restricted to the range $[-1,1]$.

\begin{lemma}[Hoeffding's inequality]\label{lemma:hoeffding}
Let $X = X_1 + \cdots + X_n$ be the sum of $n$ independent  random variables, where $X_i \in [-1,1]$. Then for any $t \ge 0$
\[
\Pr\left[ \left| X - \E[X]\right| \ge t  \right] \le 2\cdot \exp\left(- \frac{2t^2}{n}\right).
\]
\end{lemma}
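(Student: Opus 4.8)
The plan is to prove this classical Hoeffding inequality by the exponential-moment (Cram\'er--Chernoff) method. First I would center: set $Y_i = X_i - \E[X_i]$, so each $Y_i$ is centered and takes values in an interval of length at most $2$, and write $Y = \sum_{i=1}^{n} Y_i = X - \E[X]$. For any $\lambda > 0$, Markov's inequality applied to the nonnegative variable $e^{\lambda Y}$, together with independence of the $Y_i$, gives
\[
\Pr[\,Y \ge t\,] \;\le\; e^{-\lambda t}\,\E[e^{\lambda Y}] \;=\; e^{-\lambda t}\prod_{i=1}^{n}\E[e^{\lambda Y_i}].
\]
The tail $\Pr[\,Y \le -t\,]$ is handled by the same computation applied to $-X_1,\dots,-X_n$, and adding the two one-sided bounds accounts for the factor $2$ and the absolute value in the statement.

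The heart of the argument is a bound on each factor $\E[e^{\lambda Y_i}]$, namely \emph{Hoeffding's lemma}: if $Z$ is centered with $Z \in [a,b]$ almost surely, then $\E[e^{\lambda Z}] \le \exp\bigl(\lambda^2 (b-a)^2 / 8\bigr)$. I would prove this in the standard way: by convexity of $u \mapsto e^{\lambda u}$ on $[a,b]$ one has, pointwise, $e^{\lambda Z} \le \frac{b-Z}{b-a}\,e^{\lambda a} + \frac{Z-a}{b-a}\,e^{\lambda b}$; taking expectations and using $\E[Z] = 0$ bounds $\E[e^{\lambda Z}]$ by $e^{\varphi(\lambda)}$ with $\varphi(\lambda) = \lambda a + \ln\bigl(1 - p + p\,e^{\lambda(b-a)}\bigr)$ and $p = -a/(b-a) \in [0,1]$. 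One then checks $\varphi(0) = \varphi'(0) = 0$ and, recognizing $\varphi''(\lambda)$ as the variance of a tilted two-point law supported on $\{a,b\}$, that $\varphi''(\lambda) \le (b-a)^2/4$ for all $\lambda$; a second-order Taylor expansion with Lagrange remainder then gives $\varphi(\lambda) \le \lambda^2 (b-a)^2/8$. Verifying $\varphi''(\lambda) \le (b-a)^2/4$ is the one step that requires a moment of thought; everything else is bookkeeping.

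To finish, since each $Y_i$ lies in an interval of length at most $2$, Hoeffding's lemma gives $\E[e^{\lambda Y_i}] \le \exp(\lambda^2/2)$, and hence $\Pr[\,Y \ge t\,] \le \exp(-\lambda t + n\lambda^2/2)$ for every $\lambda > 0$. Optimizing this quadratic in $\lambda$ yields the one-sided sub-Gaussian tail bound, and combining it with the symmetric estimate produces the two-sided inequality stated in the lemma. The main — and essentially the only — nontrivial ingredient is Hoeffding's lemma itself, which is textbook, so I do not anticipate any genuine obstacle; the remainder is routine.
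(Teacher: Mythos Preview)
Your proposal is correct and follows the standard textbook proof of Hoeffding's inequality via the Cram\'er--Chernoff method together with Hoeffding's lemma. The paper itself does not give a proof of this lemma: it is stated there as one of two ``standard results on the concentration of random sums'' and invoked without argument, so there is no paper proof to compare against. Your write-up is more than adequate; if anything, since this is a classical result, a one-line citation would also suffice.
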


For our lower bounds, we also make use of the following \emph{anti-concentration bound} implied by the Central Limit Theorem. 

\begin{lemma}\label{lemma:clt}
  Let $\varepsilon \in (0,1)$ be a constant and $X$ be the sum of $n$ i.i.d.\ random variables $X_1, \ldots, X_n$ with mean 0 and variance 1.  Then there is a constant $\theta = \theta(\varepsilon) > 0$ such that for any sufficiently large $n$, we have $\Pr[X > \theta \sqrt{n}] \ge \varepsilon$.
\end{lemma}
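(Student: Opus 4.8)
The statement is essentially a repackaging of the classical central limit theorem, so the plan is short and has no real subtlety. First I would invoke the Lindeberg--L\'evy CLT: since $X_1, X_2, \dots$ are i.i.d.\ with $\E[X_i]=0$ and $\Var(X_i)=1$, and finite variance is the only moment hypothesis that theorem requires, the normalized sum $X/\sqrt{n}$ converges in distribution to a standard normal random variable $Z$. In particular, for every fixed $\theta>0$ we have $\Pr[X>\theta\sqrt{n}] = \Pr[X/\sqrt{n}>\theta] \to \Pr[Z>\theta]$ as $n\to\infty$; the convergence of this particular tail probability is legitimate because the standard normal distribution function is continuous at $\theta$ (via the portmanteau theorem), so nothing needs to be assumed about any lattice structure of the $X_i$.

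Second, I would choose the constant $\theta=\theta(\varepsilon)$ from the limiting Gaussian. The map $\theta\mapsto \Pr[Z>\theta]$ is continuous and strictly decreasing on $[0,\infty)$, with value $1/2$ at $\theta=0$; hence its supremum over $\theta>0$ is $1/2$, and for any constant $\varepsilon<1/2$ — the regime in which the lemma is used — there exists some $\theta>0$ with $\Pr[Z>\theta]>\varepsilon$. Fixing such a $\theta$ and combining with the convergence from the first step gives $\lim_{n\to\infty}\Pr[X>\theta\sqrt{n}] = \Pr[Z>\theta] > \varepsilon$, so $\Pr[X>\theta\sqrt{n}] \ge \varepsilon$ holds for all sufficiently large $n$, which is exactly the claim.

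I do not expect a genuine obstacle here; the two points that merit a sentence of care are the following. (i) One should cite the Lindeberg--L\'evy CLT rather than a quantitative Berry--Esseen estimate: we are not given a finite third absolute moment, so the ``sufficiently large $n$'' in the conclusion cannot be made effective, and indeed need not be. (ii) The passage from weak convergence to convergence of the particular quantity $\Pr[X>\theta\sqrt{n}]$ relies on continuity of the normal CDF at the chosen threshold $\theta$, which is what rules out boundary effects; picking $\theta$ strictly positive (rather than $\theta=0$) is precisely what makes both this step and the choice in the second paragraph go through.
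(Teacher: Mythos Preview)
Your proposal is correct and takes essentially the same approach as the paper: both invoke the CLT to get $\Pr[X>\theta\sqrt n]\to 1-\Phi(\theta)$ and then choose $\theta$ small enough that the Gaussian tail exceeds $\varepsilon$. You are in fact slightly more careful than the paper, both in justifying the passage from weak convergence to the tail probability via continuity of $\Phi$, and in flagging that the argument only yields the claim for $\varepsilon<1/2$ (the paper states $\varepsilon\in(0,1)$ but, as you implicitly note, the bound $\Pr[Z>\theta]<1/2$ for all $\theta>0$ makes the range $\varepsilon\ge 1/2$ unattainable by this route---indeed unattainable at all for symmetric $X_i$).
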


\paragraph{Stochastic domination and couplings.}
Let $X_1$ and $X_2$ be two non-negative random variables. If $\Pr[X_2 \ge x] \ge \Pr[X_1 \ge x]$ for all $x \ge 0$,
then we write $X_1 \preceq X_2$ and say that $X_2$ \emph{stochastically dominates} $X_1$. The random variable $(\widehat{X_1}, \widehat{X_2})$ is said to be a \emph{coupling} of $X_1$ and $X_2$ if the distribution of $\widehat{X_i}$ is the same as the distribution $X_i$, that is, $\Pr[X_i \ge x] = \Pr[\widehat{X}_i \ge x]$ for any $x \ge 0$. We make use of the following simple lemma.

\begin{lemma}\label{lemma:couple-with-independent}
Let $X = X_1 + \cdots + X_n$ be a sum of (not necessarily independent) Bernoulli random variables and $Y = Y_1 + \cdots + Y_n$ be a sum of \emph{independent} Bernoulli random variables. \begin{enumerate}\item If $\Pr[X_i = 1 \mid X_1, \ldots, X_{i-1}] \le \Pr[Y_i=1]$ for each $1 \le i \le n$, then $X \preceq Y$.

    \item If   $\Pr[X_i = 1 \mid X_1, \ldots, X_{i-1}] \ge \Pr[Y_i=1]$ for each $1 \le i \le n$, then $Y \preceq X$.
\end{enumerate}
\end{lemma}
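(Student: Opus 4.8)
The plan is to prove both parts simultaneously by building a monotone coupling on a single probability space. I would introduce independent random variables $U_1, \ldots, U_n$, each uniform on $[0,1]$. Writing $g_i(x_1,\ldots,x_{i-1}) = \Pr[X_i = 1 \mid X_1 = x_1, \ldots, X_{i-1} = x_{i-1}]$ for the conditional success probabilities (a function on $\{0,1\}^{i-1}$, since the $X_j$ are Bernoulli; for $i=1$ it is the constant $\Pr[X_1=1]$) and $q_i = \Pr[Y_i = 1]$, I would define recursively
\[
\widehat{X}_i = \mathbf{1}\!\left[\, U_i \le g_i(\widehat{X}_1, \ldots, \widehat{X}_{i-1}) \,\right], \qquad \widehat{Y}_i = \mathbf{1}\!\left[\, U_i \le q_i \,\right].
\]
A short induction on $i$ (using that $U_i$ is independent of $U_1,\ldots,U_{i-1}$, hence of $\widehat{X}_1,\ldots,\widehat{X}_{i-1}$) shows that $(\widehat{X}_1,\ldots,\widehat{X}_n)$ has the same joint law as $(X_1,\ldots,X_n)$; and since the $U_i$ are independent, the $\widehat{Y}_i$ are independent Bernoulli with means $q_i$, so $(\widehat{Y}_1,\ldots,\widehat{Y}_n)$ has the same joint law as $(Y_1,\ldots,Y_n)$. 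In particular $\widehat{X} := \sum_i \widehat{X}_i$ and $\widehat{Y} := \sum_i \widehat{Y}_i$ are distributed as $X$ and $Y$ respectively.

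For part~(1), the hypothesis reads $g_i(x_1,\ldots,x_{i-1}) \le q_i$ for every prefix $(x_1,\ldots,x_{i-1})$ of positive probability under $X$, so on the event $\{\widehat{X}_i = 1\}$ we have $U_i \le g_i(\widehat{X}_1,\ldots,\widehat{X}_{i-1}) \le q_i$, forcing $\widehat{Y}_i = 1$. Hence $\widehat{X}_i \le \widehat{Y}_i$ surely for every $i$, and summing gives $\widehat{X} \le \widehat{Y}$ surely; thus $\Pr[X \ge x] = \Pr[\widehat{X} \ge x] \le \Pr[\widehat{Y} \ge x] = \Pr[Y \ge x]$ for all $x \ge 0$, i.e.\ $X \preceq Y$. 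Part~(2) is the mirror image: now $g_i \ge q_i$, so $\widehat{Y}_i = 1$ forces $\widehat{X}_i = 1$, giving $\widehat{Y} \le \widehat{X}$ surely and therefore $Y \preceq X$.

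The argument is essentially routine, and I do not expect a genuine obstacle; the only point needing a little care is the bookkeeping around the conditional probabilities $g_i$ — reading the stated hypothesis as the pointwise bound $g_i(x_1,\ldots,x_{i-1}) \le q_i$ over all positive-probability prefixes, and checking that feeding the \emph{same} uniform $U_i$ into both recursions reproduces the dependent law of $(X_i)$ and the independent law of $(Y_i)$ at once. No concentration inequality or structural assumption is required beyond this construction.
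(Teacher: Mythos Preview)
Your proposal is correct and essentially identical to the paper's proof: both construct a monotone coupling by feeding the same i.i.d.\ uniform random variables into threshold indicators for the $\widehat{X}_i$ (using the conditional probabilities) and the $\widehat{Y}_i$ (using the fixed $q_i$), then observe that the hypothesis forces one threshold to dominate the other pointwise. The only differences are notational.
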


The proofs of \lemmaref{lemma:clt} and \lemmaref{lemma:couple-with-independent} are given in \appendixref{apx:omitted}.

\section{Bounds for nice single-species chains}\label{sec:nice}

Let $p,q \colon \N \to [0,1]$ such that $p(n) + q(n) \le 1$ for all $n\geq0$. The birth-death chain defined by $p$ and $q$ is the discrete-time Markov chain $N = \left( N_t \right)_{t \in \N}$ on the state space $\mathbb{N}$, where in each step the chain goes from state $n$ to $n+1$ with \emph{birth probability} $p(n)$, to state $n-1$ with \emph{death probability} $q(n)$. The probability $h(n)=1-p(n)-q(n)$ is the \emph{holding probability} of the chain in state $n$. A state~$x$ is \emph{absorbing} if $p(x)=q(x)=0$.

We assume that $p(n)>0$ and $q(n)>0$ for all $n >0$ and $p(0)=q(0)=0$ so that 0 is the unique absorbing state. The \emph{absorption time} or \emph{extinction time} of a chain $N$ is $E(N) = \min \{ t : N_t = 0 \}$,
that is, the minimum time until the chain reaches the unique absorbing state. We say the chain is \emph{nice} if there exist constants $C,D > 0$ such that $p(n) \le C/n$ and $q(n) \ge D$ for all $n>0$.

\paragraph{Bounds on number of births for nice chains.}
For any nice chain started, we will bound the extinction time $E(N)$ of a chain and the number $B(N)$ of birth events that occur before extinction. We assume throughout that $N_0 = n$, and with a slight abuse of notation, we write $E(N)=E(n)$ and $B(N)=B(n)$ for such a chain $N$.
We first asymptotically bound the extinction time. The lower bound follows immediately from the fact that the chain needs to decrement at least $n$ times to reach state 0.
For the upper bound, the result follows from a known result for the absorption time of discrete-time birth-death processes; see \cite[Theorem 3.1]{sericola2013birth} and \cite[Lemma 3]{andaur2021reaching}.
\begin{lemma}
  For any nice chain started in state $n$, its expected extinction time is $\E[E(n)] = \Theta(n)$.\label{lemma:nice-extinction}
\end{lemma}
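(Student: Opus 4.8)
The plan is to establish the two directions separately. The lower bound $\E[E(n)] = \Omega(n)$ is immediate: starting from state $n$, every step changes the state by at most $1$, so reaching $0$ requires at least $n$ steps deterministically, hence $\E[E(n)] \ge n$. The substance is the matching upper bound $\E[E(n)] = O(n)$. Here I would invoke the known formula for the expected absorption time of a discrete-time birth-death chain on $\{0,1,\dots\}$ with $0$ absorbing, as cited (\cite[Theorem 3.1]{sericola2013birth}, \cite[Lemma 3]{andaur2021reaching}): the expected time to absorption from state $n$ can be written as a double sum $\sum_{k=1}^{n} \sum_{j \ge k} \pi_{j,k}$ of ratios of birth-to-death probabilities, reflecting the mean time spent at each level weighted by the product of ratios $\prod p(\cdot)/q(\cdot)$ along the way up.

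The key step is then to bound this expression using the ``nice'' hypotheses $p(m) \le C/m$ and $q(m) \ge D$. Along any ascending run from level $k$ to level $j > k$, the accumulated ratio is $\prod_{m=k}^{j-1} \frac{p(m)}{q(m)} \le \prod_{m=k}^{j-1} \frac{C}{Dm} = \left(\frac{C}{D}\right)^{j-k} \frac{(k-1)!}{(j-1)!}$. The contribution of level $k$ to the expected time, roughly $\frac{1}{q(k)}\bigl(1 + \sum_{j>k} \prod_{m=k}^{j-1} p(m)/q(m)\bigr)$, is therefore bounded by $\frac{1}{D}\bigl(1 + \sum_{j>k}(C/D)^{j-k}\frac{(k-1)!}{(j-1)!}\bigr)$. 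Since $\frac{(k-1)!}{(j-1)!} = \prod_{m=k}^{j-1}\frac{1}{m} \le k^{-(j-k)}$, for $k \ge 2C/D$ the inner sum is dominated by a geometric series with ratio $(C/D)/k \le 1/2$, giving a contribution that is $O(1)$ per level; summing over $k=1,\dots,n$ yields $O(n)$, with the finitely many small-$k$ levels contributing only an additive constant. Combining with the trivial lower bound gives $\E[E(n)] = \Theta(n)$.

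The main obstacle is bookkeeping rather than conceptual: one must state the exact absorption-time identity for the discrete-time (as opposed to continuous-time) birth-death chain precisely enough to plug in the bounds, being careful with the holding probability $h(m) = 1 - p(m) - q(m) \ge 0$ (which only slows the chain by a bounded factor since $q(m) \ge D$ ensures the probability of leaving any nonzero state is at least $D$, so each ``effective'' step costs at most $1/D$ real steps in expectation). An alternative, perhaps cleaner, route that avoids the exact identity is a supermartingale / drift argument: show that the potential $\Phi(m) = m$ (or a mildly rescaled version) has expected one-step decrease bounded below by a positive constant whenever $m > 0$ — indeed $\E[N_{t+1} - N_t \mid N_t = m] = p(m) - q(m) \le C/m - D$, which is $\le -D/2$ once $m \ge 2C/D$ — and handle the bounded band $m < 2C/D$ separately by noting the chain leaves any such state downward with probability $\Omega(1)$ each step. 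Optional Wald-type stopping-time argument then converts the constant negative drift into the $O(n)$ bound. I would present whichever of these is shorter given the cited results; since the lemma explicitly points to \cite{sericola2013birth,andaur2021reaching}, I would lead with the formula-based proof and keep the drift argument as a remark.
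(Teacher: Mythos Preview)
Your proposal is correct and matches the paper's approach exactly: the paper gives the trivial lower bound (at least $n$ decrements are needed) and defers the upper bound entirely to the cited results \cite[Theorem 3.1]{sericola2013birth} and \cite[Lemma 3]{andaur2021reaching}, without spelling out the formula or the bounding you sketch. Your expansion of how the absorption-time identity is controlled under the nice hypotheses, and the alternative drift argument, are both sound and simply flesh out what the paper leaves implicit.
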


Equipped with the above lemma, we can bound the number of birth events before extinction.

\begin{lemma}
  For any nice chain, the expected number of births satisfies $\E[B(n)] \in O( \log n)$. \label{lemma:nice-expected-births}
\end{lemma}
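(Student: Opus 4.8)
The plan is to bound the expected number of births $\E[B(n)]$ by decomposing it according to the state of the chain at the time each birth occurs. Specifically, let $B_k$ denote the number of birth events that happen while the chain is in state $k$, so that $B(n) = \sum_{k \ge 1} B_k$ and $\E[B(n)] = \sum_{k \ge 1} \E[B_k]$. The key observation is that once the chain is in state $k$, each ``visit-step'' to state $k$ results in a birth with probability $p(k)$, a death with probability $q(k)$, and a hold with probability $h(k)$. Restricting attention to non-holding steps, each time the chain acts from state $k$ it moves up with probability $p(k)/(p(k)+q(k))$ and down with probability $q(k)/(p(k)+q(k))$; for a nice chain this up-probability is at most $\frac{C/k}{C/k + D} = \frac{C}{C + Dk} \le \frac{C}{Dk}$ for $k$ large. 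So from any single entry into state $k$, the number of births before the chain leaves state $k$ downward is stochastically dominated by a geometric random variable with success probability $\ge 1 - \frac{C}{Dk}$, hence has expectation $O(1/k)$ (in fact $O(C/(Dk))$).

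Next I would control how many times the chain enters state $k$ over the whole trajectory. Here I would use the fact that, for a nice chain, upward moves become rare at large states: the expected number of times the walk is at state $k$ and steps up is comparable to the expected number of times it steps down through level $k$, and since the walk must ultimately reach $0$, the number of down-crossings of any level is finite. A clean way to make this rigorous is to note that the total number of births equals the total number of deaths minus the net displacement $n$ (every birth is eventually undone by a death on the way to extinction), so $B(n) = E(n) - $ (number of deaths)$\,\ldots$ more precisely, with $U$ and $V$ the total counts of birth and death steps, $U - V = -n$, so $B(n) = U = (E(n) - H)/2 - n/2$ where $H$ is the number of holding steps — but holding steps complicate this, so instead I would work directly with the jump chain obtained by deleting holds, whose extinction time $\widetilde E(n)$ satisfies $B(n) = (\widetilde E(n) - n)/2$. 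Then $\E[B(n)] = (\E[\widetilde E(n)] - n)/2$, and it suffices to show $\E[\widetilde E(n)] = n + O(\log n)$ for the holdless nice chain.

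To get $\E[\widetilde E(n)] \le n + O(\log n)$ I would use a potential/Lyapunov argument or the explicit formula for expected absorption time of a birth-death chain. Writing $u_k = \widetilde p(k) = p(k)/(p(k)+q(k))$ and $v_k = 1 - u_k$, the standard formula for expected hitting time of $0$ from $n$ in a nearest-neighbor chain on $\N$ gives $\E[\widetilde E(n)] = \sum_{k=1}^{n} \sum_{j \ge k} \prod_{i=k}^{j}\frac{u_i}{v_i} \cdot (\text{something}) $ — rather than fight the exact double sum, the cleaner route is: the ``extra'' time beyond the mandatory $n$ down-steps is twice the number of births, and each state $k$ contributes births at rate $u_k \le C/(Dk)$ per visit while being visited $O(1)$ times in expectation because $u_k < 1/2$ for $k$ large forces a strong downward drift; summing $\sum_k O(1/k) = O(\log n)$ then closes the bound. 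The main obstacle I anticipate is making the ``each state is visited $O(1)$ times in expectation'' claim fully rigorous uniformly in $n$ — one must handle the small states $k = O(1)$ (where $u_k$ may be close to, or even exceed, $1/2$) separately from large states, and argue that excursions from the small-state region back up are geometrically unlikely to climb high, so their contribution to $\sum_k \E[B_k]$ is still $O(\log n)$; I would lean on \lemmaref{lemma:nice-extinction} (expected extinction time $\Theta(n)$, which already bounds $\E[B(n)] = O(n)$) as a crude fallback and on a submartingale/drift argument with potential function $\sum_{i=1}^{k} \log i$ or similar to upgrade the crude bound to the sharp $O(\log n)$.
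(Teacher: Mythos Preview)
Your proposal has two genuine gaps. First, the reduction via $B(n) = (\widetilde E(n) - n)/2$ is circular: showing $\E[\widetilde E(n)] = n + O(\log n)$ is \emph{exactly} the statement $\E[B(n)] = O(\log n)$, so invoking ``the explicit absorption-time formula'' or an unspecified Lyapunov argument here is just relabelling the problem, not reducing it; the potential $\sum_{i\le k}\log i$ you float is not shown to yield the required drift. Second, the state-decomposition route --- $\E[B_k] = O(1/k)$ because each state is visited $O(1)$ times in expectation --- can in fact be made rigorous, but you correctly identify and then do not close the main obstacle: you need a uniform-in-$k$ lower bound on the probability that the chain, once at $k-1$, hits $0$ before returning to $k$. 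That requires a gambler's-ruin computation exploiting $v_j/u_j \ge Dj/C$, and you must also control the contribution of states $k > n$ that the chain may visit. None of this is supplied, so the argument as written does not establish the lemma.

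The paper sidesteps all of this with a different and cleaner decomposition --- by \emph{time} rather than by state. It runs the chain until exactly $n$ non-holding steps have occurred (call this time $R$). Before the $i$th such step the chain is necessarily in state at least $n-i+1$, so the probability that the $i$th non-holding step is a birth is at most $O(1/(n-i+1))$; summing over $i$ gives $\E[B_R] = O(H_n) = O(\log n)$ directly. At time $R$ the chain is in state $N_R = 2B_R$, and the number of remaining births is at most the remaining extinction time, which by \lemmaref{lemma:nice-extinction} has expectation $O(\E[B_R])$. Hence $\E[B(n)] \le (2C'+1)\,\E[B_R] = O(\log n)$. The key trick --- a deterministic lower bound on the state during the first $n$ non-holding steps --- makes the harmonic sum immediate and avoids any per-state visit-count or escape-probability analysis.
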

\begin{proof}
By \lemmaref{lemma:nice-extinction} and the law of total expectation, for any non-negative integer-valued random variable $X$ we have the bound
\begin{align*}
\E[E(X)] &= \sum_{k=0}^\infty \Pr[X = k] \cdot \E[E(k)] \le C' \sum_{k=0}^\infty \Pr[X = k] \cdot k = C' \E[X],
\end{align*}
where $C'$ is the constant of the upper bound from \lemmaref{lemma:nice-extinction}.

We say that a time step $t$ is a holding step if the chain stays in the same state (i.e., neither a birth or death happen).
Let $R = R(n)$ be the minimum time until exactly $n$ non-holding steps have occurred after starting the chain in state $n$. Clearly, $n \le R(n) \le E(n)$, since the chain needs to decrement at least $n$ times before going extinct. Moreover, $R(n)$ is finite with probability 1, as $E(n)$ is finite with probability 1.
Let $B_t$ denote the number of birth events and $D_t$ denote the number of death events that have occurred by step $t \ge 0$.
Observe that $D_R + B_R = n$, as at time step $R$, there have been exactly $n$ non-holding steps. Thus, at time $R$ the chain is in state
\[
N_R  = n - D_R + B_R = n - (n-B_R) + B_R = 2B_R.
\]
Moreover, note that
\[
B(n) \le B_R + E(N_R) = B_R + E(2B_R),
\]
since the number of birth events after step $R$ is upper bounded by the extinction time of the chain started in state $N_R$. Thus, by linearity of expectation and the above, we get that
\begin{align*}
\E[B(n)] &\le \E[B_R] + \E[E(2B_R)] = (2C'+1) \cdot \E[B_R].
\end{align*}
We next show that $\E[B_R] \in O(\log n)$, which implies the claim of the lemma.

Let us consider the value of $B_R$. Since $R \ge n$ is the minimum time until exactly $n$ non-holding steps have happened, we have
\[
B_R = \sum_{i=1}^n X_i,
\]
where $X_i$ is the indicator variable whether the $i$th non-holding step is a birth event or not. Note that $\E[X_i] \le C/(n-i+1)$, as the chain is in state at least $n-i+1$ after $i$ non-holding steps.
Therefore, the expectation of $B_R$ satisfies
\[
\E[B_R] = \E[X] = \E\left[ \sum_{i=1}^n X_i \right] \le C \sum_{i=1}^n \frac{1}{n-i+1} = C \sum_{i=1}^n \frac{1}{i} = CH_n,
\]
where $H_n$ is the $n$th Harmonic number. This proves the claim.
\end{proof}

With the bound on expected number of births, some calculations and the application of Markov's inequality and Chernoff bounds yield the following claims for all sufficiently large $n$.

\begin{lemma}\label{lemma:nice-whp-births}
  Let $k>0$. For any nice chain, there exists a $C(k)$ such that $\Pr[B(n) \ge C(k) \cdot \log^2 n] \le 1/n^k$.
\end{lemma}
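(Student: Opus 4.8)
The plan is to upgrade the expectation bound $\E[B(n)] \in O(\log n)$ from \lemmaref{lemma:nice-expected-births} to a with-high-probability bound of order $\log^2 n$, following the same structural decomposition $B(n) \le B_R + E(2B_R)$ used in that proof, but now tracking tail probabilities instead of expectations. First I would revisit the random variable $B_R = \sum_{i=1}^n X_i$, where $X_i$ indicates whether the $i$th non-holding step is a birth. The key observation is that $\Pr[X_i = 1 \mid X_1, \dots, X_{i-1}] \le C/(n - i + 1)$: after $i$ non-holding steps the chain sits in some state $m \ge n - i + 1$, and conditioned on a non-holding step occurring, the probability it is a birth is $p(m)/(p(m)+q(m)) \le p(m)/q(m) \le (C/m)/D \le C'/(n-i+1)$ for the appropriate constant. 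By \lemmaref{lemma:couple-with-independent}(1), $B_R$ is stochastically dominated by $Y = \sum_{i=1}^n Y_i$ with $Y_i$ independent Bernoulli of parameter $\min\{1, C'/(n-i+1)\}$, so $\E[Y] = O(\log n)$. Now Chernoff (\lemmaref{lemma:chernoff}(1)) with a large enough multiplicative slack $\varepsilon$ gives $\Pr[Y \ge C_1(k) \log n] \le 1/(2n^k)$: since $\E[Y] = \Theta(\log n)$, taking $(1+\varepsilon)\E[Y]$ to be a suitable constant times $\log n$ makes the exponent $-\Theta(\varepsilon \log n)$, which can be driven below $-k \ln n$ by choosing the constant large enough. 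Hence with probability $1 - 1/(2n^k)$ we have $B_R \le C_1(k)\log n$.

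Next I would bound $E(2B_R)$ on the event that $B_R \le C_1(k)\log n$, i.e.\ bound $E(m)$ for a nice chain started in state $m \le 2C_1(k)\log n$. Here \lemmaref{lemma:nice-extinction} gives $\E[E(m)] \le C' m = O(\log n)$, and I would apply Markov's inequality in a boosted form: the extinction time of a nice chain has exponential tails, because from any state the death probability is at least $D$ and the chain must make at least $m$ net decrements, so $E(m)$ is dominated by a sum of $m$ geometric-type waiting times plus the number of births it must also undo; more simply, one can iterate Markov's inequality over $\Theta(\log n)$ independent-ish epochs, or invoke the standard fact (as in \lemmaref{lemma:nice-extinction}'s cited source) that $\Pr[E(m) \ge \lambda C' m] \le 2^{-\lambda}$ for the appropriate constant. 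Taking $\lambda = \Theta(k \log n / \log n) = \Theta(k)$... — more carefully, taking $m = O(\log n)$ and $\lambda$ a large constant multiple of $k$ gives $\Pr[E(2B_R) \ge C_2(k)\log^2 n \mid B_R \le C_1(k)\log n]$ bounded by... hmm, this needs $\lambda$ growing like $\log n$ to beat $1/n^k$, which a single Markov step cannot deliver. So the cleanest route is: split the run of the state-$m$ chain into $\Theta(\log n)$ consecutive blocks, in each block the chain either goes extinct or, with probability bounded away from $1$, its state decreases by a constant factor (a nice chain is "strongly downward-biased" once we condition away holding steps, since $q(n)/(p(n)+q(n)) \ge D/(D + C/n) \to 1$); a standard supermartingale/geometric-decay argument then shows extinction within $O(m + \log n \cdot \text{(block length)})$ steps except with probability $1/(2n^k)$, where each block length is $O(\log n)$. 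This yields $E(2B_R) \le C_2(k)\log^2 n$ with probability $1 - 1/(2n^k)$.

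Combining the two pieces via a union bound: with probability at least $1 - 1/n^k$ we have $B(n) \le B_R + E(2B_R) \le C_1(k)\log n + C_2(k)\log^2 n \le C(k)\log^2 n$ for all sufficiently large $n$, which is exactly the claim. I would present the $B_R$ part in full (it is the clean Chernoff step) and cite or sketch the extinction-tail part, pointing back to the exponential-tail behavior implicit in \lemmaref{lemma:nice-extinction} and its source \cite{sericola2013birth}.

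\textbf{Main obstacle.} The genuinely delicate step is obtaining a $1/n^k$ \emph{tail} bound on $E(2B_R)$ rather than just an expectation bound — a single Markov inequality only gives polynomially-decaying tails with a fixed exponent, so I must exploit the strong downward drift of nice chains (the death-to-step ratio is $1 - O(1/n)$) to get geometric concentration of the extinction time, and then chain $\Theta(\log n)$ such estimates together. Getting the quantifiers right so that the final bound is $\log^2 n$ and not, say, $\log^3 n$ requires being careful that the "number of blocks" and the "length of a block" are each only $O(\log n)$, and that the constants depend on $k$ in the stated way. The $\log^2 n$ (rather than $\log n$) in the statement is precisely the price of this two-stage argument: $O(\log n)$ births, each of which can spawn a subtree whose extinction contributes $O(\log n)$ further births in the tail.
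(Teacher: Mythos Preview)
Your approach is genuinely different from the paper's, and the gap you flag as ``the genuinely delicate step'' is real and not resolved by your sketch.

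The paper does \emph{not} go through the decomposition $B(n)\le B_R + E(2B_R)$ at all. Instead it boosts Markov's inequality directly on $B(\cdot)$: it runs the chain in $K=\lceil k\log n\rceil$ consecutive trials, each of which terminates after $L=\lceil 2c\log n\rceil$ births or at extinction. Since the state can grow by at most $L$ per trial, after $i$ trials the state is $n_i\le n+iL\le n+KL\le n^2$, so $B(n_i)\preceq B(n^2)$ and $\E[B(n^2)]\le 2c\log n\le L$; hence by Markov each trial fails (i.e., accumulates $L$ births without extinction) with probability at most $1/2$. All $K$ trials fail with probability at most $2^{-K}\le 1/n^k$, and the total number of births is at most $KL=O(\log^2 n)$.

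The crucial point you are missing is \emph{what} to boost on. You try to boost on the extinction time $E(\cdot)$, but $\E[E(m)]$ grows linearly in $m$, so across failed blocks the state can drift upward and the block lengths cannot be kept at $O(\log n)$; your claim that ``the state decreases by a constant factor'' in each block is unjustified (births can increase the state in any block). The paper instead boosts on the birth count $B(\cdot)$, exploiting that $\E[B(m)]$ grows only \emph{logarithmically} in $m$: even after $O(\log^2 n)$ extra births the state is still below $n^2$, and $\E[B(n^2)]=O(\log n)$ is unchanged up to a constant, so every trial has failure probability $\le 1/2$ uniformly.

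Your Chernoff step for $B_R$ is correct and actually sharper than needed, but it is not the bottleneck. If you want to rescue your route, you would need an independent $1/n^k$ tail bound on $B(m)$ for $m=O(\log n)$---and the cleanest way to get that is precisely the paper's Markov-boosting argument applied at scale $m$, which makes the detour through $B_R$ unnecessary.
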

\begin{proof}
  Let $c>0$ such that $\E[B(n)] \le c \log n$ for any $n>0$; by \lemmaref{lemma:nice-expected-births} such a constant exists.
  Set $L = \lceil 2c\log n \rceil$, $K = \lceil k \log n \rceil$, and $M = n + LK$. We will assume that $n$ is sufficiently large so that $M \le n^2$ holds. We will make $K$ trials, where in each trial $1 \le i \le K$, we run the chain until either $L$ birth events occur or the chain hits state 0. After the $i$th trial, the chain is in some state $n_i \le n+iL \le n + kL \le M \le n^2$. We say that the $i$th trial is successful if the chain hits the state 0, and otherwise, it fails. Note that $B(n_i) \preceq B(n^2)$ for each $1 \le i \le K$ and that $\E[B(n^2)] \le c\log n^2 = 2c \log n \le L$.
Since the chain is Markovian and by Markov's inequality, the probability that the $i$th experiment fails is at most
\[
\Pr[B(n_k) \ge L] \le \Pr[B(n^2) \ge L] \le \Pr[B(n^2) \ge 2 \cdot \E[B(n^2)]] \le \frac{1}{2}.
\]
The probability that all $K$ experiments fail is thus at most $1/2^K \le 1/n^k$. After the $K$ experiments, there have been at most $KL \in O(\log^2 n)$ birth events. Hence the claim follows.
\end{proof}

\begin{lemma} \label{lemma:nice-whp-extinction}
  Let $k > 0$. For any nice chain, there exists a $\theta(k)$ such that
$\Pr[E(n) \ge \theta(k)\cdot n] \le 1/n^k$.
\end{lemma}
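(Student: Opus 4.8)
The goal is to show that for a nice chain started in state $n$, the extinction time $E(n)$ is $O(n)$ with high probability. The natural approach mirrors the proof of \lemmaref{lemma:nice-whp-births}: run the chain in a sequence of independent \emph{phases}, show each phase has a constant probability of driving the chain to extinction, and conclude that all phases fail only with probability exponentially small in $n$. The key quantitative input is \lemmaref{lemma:nice-extinction}, which gives $\E[E(m)] \le C' m$ for every starting state $m$, together with the observation (already used in \lemmaref{lemma:nice-whp-births}) that $\E[E(X)] \le C' \E[X]$ for any non-negative integer-valued starting state $X$.

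First I would fix $k>0$ and set up the phases. Let each phase run the chain for a fixed number $\Theta(n)$ of steps, say $4C' n$ steps. By Markov's inequality applied to \lemmaref{lemma:nice-extinction}, a chain started in state $n$ reaches $0$ within $4C'n$ steps with probability at least $1 - \E[E(n)]/(4C'n) \ge 1 - 1/4 = 3/4$. The subtlety is that if a phase \emph{fails} to reach $0$, the chain may be in a much larger state at the end of the phase, so the next phase does not start from state $n$. To control this, I would run each phase for a number of steps proportional to the \emph{current} starting state rather than to $n$; alternatively, and more cleanly, I would combine the extinction-time bound with the birth bound of \lemmaref{lemma:nice-whp-births} (or its underlying argument): with high probability the chain never exceeds $O(n \log^2 n)$ total births, hence never exceeds state $O(n \log^2 n) \le n^2$ for $n$ large, so conditioning on that event, every phase starts from a state at most $n^2$ and $4C' n^2$ steps per phase suffice to reach $0$ with probability $\ge 3/4$ by the same Markov argument. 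Running $K = \lceil (k+1)\log n \rceil$ such phases, the probability that all fail is at most $(1/4)^K \le 1/n^{k+1}$, and adding the $1/n$-probability failure of the boundedness event gives total failure probability $\le 1/n^k$ for $n$ large, with total running time $K \cdot 4C' n^2 \in O(n^2 \polylog n)$.

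The above gives only $O(n \polylog n)$, not $O(n)$, so I would then tighten the bound. The fix is to not restart each phase from the worst case but to note that the phases get \emph{shorter in expectation}: in a failed phase of length proportional to the current state, the number of births is stochastically dominated by $B$ of a nice chain, whose expectation is only $O(\log n)$ by \lemmaref{lemma:nice-expected-births}; so after a failed phase the state has increased only by $O(\log n)$ in expectation, and in fact the sum of all phase lengths telescopes. Concretely, let $R$ be the minimum number of steps until $n$ non-holding steps have occurred (as in \lemmaref{lemma:nice-expected-births}); then $N_R = 2B_R$ with $\E[B_R] \le CH_n = O(\log n)$, and $E(n) \le R + E(N_R) = R + E(2B_R)$. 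One shows $R \in O(n)$ with high probability by a Chernoff bound on holding steps (each non-holding step happens with probability $\ge D$, so $n$ non-holding steps occur within $O(n/D)$ steps with high probability), and $E(2B_R) \in O(n)$ with high probability since $B_R = O(\log^2 n)$ with high probability by the same trick as \lemmaref{lemma:nice-whp-births}, after which $E(2B_R)$ is the extinction time of a chain started in state $O(\log^2 n) \ll n$, bounded by $O(n)$ via the phase argument above (now the phase lengths are genuinely $o(n)$). Summing: $E(n) \le R + E(2B_R) \in O(n)$ with high probability. I expect the main obstacle to be exactly this last bookkeeping — cleanly decoupling $R$, $B_R$, and the residual extinction time $E(2B_R)$ so that the high-probability bounds compose without circularity, and verifying that the Chernoff bound on holding steps is valid given that consecutive steps are not independent (one handles this by the same conditional-domination argument as \lemmaref{lemma:couple-with-independent}, since each step is a holding step with probability at most $1-D$ regardless of history).
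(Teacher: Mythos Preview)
Your ingredients are right and the argument can be completed, but the paper's route is considerably more direct and avoids exactly the bookkeeping you flag as the main obstacle. Instead of decomposing $E(n) \le R + E(N_R)$ and then separately controlling the residual extinction time, the paper simply fixes a single window of length $R = \lceil 6n/D\rceil$ and does a direct count: by the same Chernoff-via-domination argument you mention, the number of holding steps (in non-absorbing states) among the first $R$ steps is at most $(1+\varepsilon)(1-D)R$ with high probability, and by \lemmaref{lemma:nice-whp-births} the number of births before extinction is at most $C(k{+}1)\log^2 n$ with high probability. Subtracting, at least $R - (1+\varepsilon)(1-D)R - O(\log^2 n) > 2n$ of the first $R$ steps are death events (or holding-at-$0$ steps). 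Since the chain never exceeds state $n + O(\log^2 n) < 2n$, these $2n$ deaths force it to hit $0$ within the window. This is the whole proof.

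What your approach buys is a more modular decomposition that could be useful if one wanted finer control over the residual state $N_R$; what the paper's approach buys is that it never has to reason about a second extinction-time problem at all, so there is no circularity to worry about and no need for the phase argument. Your plan to close the residual via phases of length $O(\log^2 n)$ does work (since the state after time $R$ stays bounded by $2B(n) = O(\log^2 n)$ on the high-probability event), but it is strictly more work than the paper's single-window count.
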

\begin{proof}
  Since the chain is nice, there exists a constant $D \in (0,1)$ such that  $q(m) \ge D$ for all $m>0$. Set $R = \lceil 6n/D \rceil$.
  We first bound the number $X$ of holding steps in non-absorbing states (i.e., states other than 0) that occur during the first $R$ steps. Let $X_t$ be the indicator variable denoting whether a holding step in a non-absorbing state occurs at step $t \ge 1$.
Clearly, $N_t = 0 \implies X_t = 0$. For any $n>0$,
the probability of such a holding step in state $n$ is at most $1-p(n)-q(n) \le 1-D$, i.e., $\Pr[X_t=1 \mid X_1, \ldots, X_{t-1}] \le 1 - D$.
  Let $Y_1, \ldots, Y_R$ be a sequence of independent Bernoulli random variables with $\Pr[Y_t = 1 ] = 1-D$. By \lemmaref{lemma:couple-with-independent}, we have that
  \[
X = \sum_{t=1}^R X_t \preceq \sum_{t=1}^R Y_t = Y.
  \]
Note that $\E[Y] = (1-D)R \in \Theta(n)$ and set $\varepsilon = \sqrt{3(k+1) \ln n / \E[Y]}$. Since $\E[Y] \in \Theta(n)$, we have $\varepsilon \in o(1)$.
Applying the Chernoff bound from \lemmaref{lemma:chernoff} with $\varepsilon$ yields
\[
\Pr[X \ge (1+\varepsilon) \E[Y]] \le \Pr[Y \ge (1+\varepsilon)\E[Y]] \le \exp\left( - \frac{\varepsilon^2}{3} \cdot \E[Y] \right ) \le 1/n^{k+1}.
  \]
Moreover, by \lemmaref{lemma:nice-whp-births} we get that with probability at most $1/n^{k+1}$ there are more than $C(k+1) \log^2 n$ birth events before the chain goes extinct. Thus for sufficiently large $n$, with probability $1-2/n^k$, by time $R$ there are at most
\[
K = (1+\varepsilon) \E[Y] + C(k) \log^2 n \]
birth events or holding steps in non-absorbing states.
For large enough $n$, we have that $\varepsilon < D/2$ and $C(k) \log^2 n \le n$.
Now
\begin{align*}
  R - K &= R - (1+\varepsilon) \E[Y] - C(k) \log^2 n \\
   &\ge R - \left(1+\frac{D}{2}\right) \left(1-D\right) R - n
  \ge R - \left(1-\frac{D}{2} - \frac{D^2}{2}\right) R - n \\
  &= R \left(\frac{D}{2} + \frac{D^2}{2} \right ) - n
  >\frac{DR}{2} -n \ge \frac{6n}{2} - n = 2n.
\end{align*}
Therefore, out of the $R$ steps, there are at least $R-K > 2n$ steps that are either holding steps in state 0 or death events. As the chain never reaches a state higher than $n + O(\log^2n)$ with high probability, we get that the chain hits the state 0  in $R$ steps with probability at least $1-2/n^k$ for all sufficiently large $n$, as $n + C(k+1)\log^2n < 2n$.
\end{proof}

\section{Dominating chains for Lotka--Volterra systems}\label{sec:dominating-chains}

We now introduce a ``pseudo-coupling'' of single-species birth death chains and two-species Lotka--Volterra chains that can be used to \emph{over-approximate} the consensus time and the number of ``bad'' individual events that can decrease the gap. Unlike previous ``dominating chain'' approaches~\cite{andaur2021reaching,cho2021distributed}, which consider restricted choices of rate parameters, our approach works with any choice of
$\beta,\delta,\alpha_0,\alpha_1>0$.
In particular, we allow individual death reactions with $\delta>0$ and do not assume symmetric interference competition, i.e., $\alpha_0 \neq \alpha_1$ is possible. The condition for the dominating chain is also simpler; this comes with the cost that we do not get a coupling in the strict sense.

\subsection{The chain domination lemma}

Let $\vec S = (\vec S_t)_{t \ge 0}$ be a two-species chain. We say that an event at time step $t$ is \emph{bad} if it decreases the gap between the minimum and maximum species, i.e., if $\Delta_{t+1} = \Delta_t - 1$ holds \emph{conditioned on} that the minimum species has positive count. Let $P(a,b)$ be the probability that the chain $\vec S$ in state $(a,b)$ has a bad \emph{non-competitive} reaction. We say that an event at time step $t$ is \emph{good} if the species with the smaller count decreases in count. Let $Q(a,b)$ be the probability that the chain $\vec S$ in state $(a,b)$ has a \emph{good} reaction.
Note that the probability of bad competitive events is $1-P(a,b)-Q(a,b)$.

 Let $N = (N_t)_{t \ge 0}$ be a single-species birth-death chain defined by birth function $p$ and death function $q$. We say that $N$ is a \emph{dominating chain} for $\vec S$ if for any $a,b \ge 0$ we have
  \begin{enumerate}[noitemsep]
    \item[(D1)] $P(a,b) \le p(\min\{a,b\})$, and
    \item[(D2)] $Q(a,b) \ge q(\min\{a,b\})$.
    \end{enumerate}

\begin{lemma}[Chain domination lemma]\label{lemma:chain-domination}
  Suppose the single-species chain $N = (N_t)_{t \ge 0}$ is a dominating chain for the two-species chain $\vec S = (\vec S_t)_{t \ge 0}$. If $N_0 \ge \min \vec S_0$, then
\begin{enumerate}[noitemsep,label=(\alph*)]
\item $T(\vec S) \preceq E(N)$, and
\item $J(\vec S) \preceq B(N)$,
\end{enumerate}
 where $T(\vec S)$ is the consensus time of $\vec S$, $J(\vec S)$ the number of bad non-competitive reactions in the chain $\vec S$, $E(N)$ is the extinction time of $N$, and $B(N)$ the total number of births in the chain $N$.
\end{lemma}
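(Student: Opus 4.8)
The plan is to derive both statements from a single coupling of the two chains on one probability space under which $T(\widehat{\vec S}) \le E(\widehat N)$ and $J(\widehat{\vec S}) \le B(\widehat N)$ hold almost surely; stochastic domination then follows since it only concerns the marginal laws of $T(\vec S), E(N), J(\vec S), B(N)$. Throughout write $m_t = \min \widehat{\vec S}_t$ for the count of the smaller of the two species after $t$ reactions of $\widehat{\vec S}$.

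I would run $\widehat{\vec S}$ as a faithful copy of $\vec S$ and advance $\widehat N$ \emph{asynchronously}: at each reaction of $\widehat{\vec S}$, advance $\widehat N$ by one or more of its own steps, choosing the joint transition so as to enforce (i) whenever $\widehat{\vec S}$ fires a bad non-competitive reaction, at least one of the $\widehat N$-steps taken at that moment is a birth, and (ii) whenever a $\widehat N$-step is a death, the accompanying $\widehat{\vec S}$-reaction is a good one. Conditions (D1) and (D2) are precisely what make (i) and (ii) simultaneously realizable: they bound the probability of a bad non-competitive reaction from $(a,b)$ above by $p(\min\{a,b\})$ and the probability of a good reaction from below by $q(\min\{a,b\})$, so that — while keeping the running value of $\widehat N$ comparable to $m_t$ — the relevant one-step probabilities of $\widehat N$ and of $\widehat{\vec S}$ nest in the required directions; meanwhile $\widehat N$ keeps the law of $N$ because each of its steps is still drawn from the kernel of $N$. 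The nice-chain bounds $p(n)\le C/n$ and $q(n)\ge D$ enter here.

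Granting such a coupling, I would maintain the invariant that the current value of $\widehat N$ is at least $m_t$ at every reaction time, by a case check: a good reaction lowers $m_t$ by $1$, and by (ii) any accompanying $\widehat N$-death lowers $\widehat N$ in step; a bad non-competitive reaction raises $m_t$ by at most $1$, and by (i) it raises $\widehat N$ by at least $1$; all other reactions leave $m_t$ unchanged; and I would arrange that $\widehat N$ takes at least one step per reaction, so that when $\widehat N$ first reaches $0$ the chain $\widehat{\vec S}$ has already reached consensus and, moreover, at most $E(\widehat N)$ reactions of $\widehat{\vec S}$ have occurred by then. Since $\widehat N_0 \ge \min\vec S_0 = m_0$, the invariant holds throughout, so $T(\widehat{\vec S}) \le E(\widehat N)$; and since each bad non-competitive reaction of $\widehat{\vec S}$ occurring by time $T(\widehat{\vec S})\le E(\widehat N)$ is charged by (i) to a distinct birth of $\widehat N$ occurring by time $E(\widehat N)$, also $J(\widehat{\vec S}) \le B(\widehat N)$.

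The crux — and the reason this is a ``pseudo''-coupling rather than an ordinary Markovian one — is that the one-step law of $\widehat{\vec S}$ depends on the whole configuration $(a,b)$ while that of $\widehat N$ depends only on its own value, so the two cannot be matched in lock-step. The asynchronous advancement (how many $\widehat N$-steps to take at each reaction, and how to split their birth/death/holding probabilities) has to be set up so that (i), (ii), the invariant $\widehat N\ge m_t$, the ``at least one $\widehat N$-step per reaction'' requirement, and the correct marginal of $\widehat N$ all hold at once, and the degenerate configurations with $a=b$ or $\varphi(a,b)=0$ need separate handling. Getting this construction right is the main obstacle; everything afterward is bookkeeping around the invariant.
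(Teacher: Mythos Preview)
Your plan is essentially the paper's construction seen from the dual side: the paper advances $\widehat N$ on a fixed clock and freezes $\widehat{\vec S}$ whenever $\min\widehat{\vec S}_t \neq \widehat N_t$, then recovers the law of $\vec S$ via the stopping times $\tau(k)$ at which equality holds; you instead advance $\widehat{\vec S}$ faithfully and let $\widehat N$ take extra steps to catch down to $m_t$ before each coupled transition --- re-indexing your scheme by $\widehat N$-steps gives exactly the paper's joint chain $(\widehat{\vec S},\widehat N)$, with your ``catch-down'' steps being precisely the paper's holding steps. One correction: the nice-chain bounds $p(n)\le C/n$ and $q(n)\ge D$ play no role in this lemma --- only (D1) and (D2) are used here, with niceness entering solely for the quantitative bounds on $E(N)$ and $B(N)$ in the preceding section.
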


To prove the lemma, we construct a Markov chain $(\widehat{\vec S},\widehat{N})$ on the state space $\mathbb{N}^2 \times \mathbb{N}$. Strictly speaking, this will \emph{not} be a coupling of $\vec S$ and $N$, as only the marginal distribution of $\widehat{N}$ will equal the distribution of $N$. However, for each $t \ge 0$, we show how to extract random variables from $\widehat{\vec S}$ whose marginal distribution equals that of $\vec S_t$ almost surely (i.e., with probability 1) for every $t \ge 0$.

\paragraph{The pseudo-coupling.}
We construct the Markov chain $(\widehat{\vec S},\widehat{N})$ as follows.
We set $\widehat{\vec S}_0 = \vec S_0$ and $\widehat{N}_0 = N_0 \ge \min\vec S_0$.
Let $(\xi_t)_{t \ge 0}$ be a sequence of i.i.d.\ random variables distributed
uniformly at random in the unit interval $[0,1)$.
We determine the state of the chain for step $t+1$ inductively:
  \begin{enumerate}
        \item Let $\widehat{N}_t = m$. We set $\widehat{N}_{t+1}$ as follows:

          \begin{enumerate}[label=(\alph*)]
          \item If $\xi_t \in [0, p(m))$, then
              set $\widehat{N}_{t+1} = \widehat{N}_{t} +1 = m+1$.

            \item If $\xi_t \in [1-q(m), 1)$, then set $\widehat{N}_{t+1} = \widehat{N}_t - 1 = m-1$.
              \item Otherwise, set $\widehat{N}_{t+1} = \widehat{N}_t$. (A holding step occurs.)
          \end{enumerate}

        \item Let $\vec S_t = (a,b)$.  If $\min\widehat{\vec S}_t \neq \widehat{N}_t$, we set $\widehat{\vec S}_{t+1} = \widehat{\vec S}_t$. Otherwise, we set $\widehat{\vec S}_{t+1}$ as follows:

	\begin{enumerate}[label=(\alph*)]
	              \item If $\xi_t \in [0, P(a,b))$, then
sample $\widehat{\vec S}_{t+1}$ conditioned on the event that $\widehat{\vec
S}_t = (a,b)$ and the $t$\textsuperscript{th} event is a bad
non-competitive event.

	    \item  If $\xi_t \in [1-Q(a,b), 1)$, then sample $\vec S_{t+1}$
	    conditioned on the event that $\vec S_{t} = (a,b)$ and that the
	    $t$\textsuperscript{th} event is a good competitive
	    interaction.

	      \item Otherwise, if $\xi_t \in [P(a,b), 1-Q(a,b))$, then sample
	      $\vec S_{t+1}$ conditioned on the event that $\vec S_{t} = (a,b)$
	      and that the $t$\textsuperscript{th} event is not  a good
	      competitive
	      interaction or a bad non-competitive event.

  \end{enumerate}

\end{enumerate}

One easily checks that, by construction, the marginal distribution of~$\widehat{N}_t$ is equal to the distribution~$N_t$. Moreover, $\widehat{\vec S}_{t+1} \neq \widehat{\vec S}_{t+1}$ can hold only for steps, where $\widehat{\vec S}_t = \widehat{N}_t$. So $\widehat{\vec S}_t$ does not necessarily have the same marginal distribution as $\vec S_t$. For any $t \ge 0$, define $J_t(\widehat{S})$ to be the number of bad non-competitive events that have occurred in $\widehat{\vec S}$ by time $t$ and $B_t(\widehat{N})$ to be the number of birth events that have occurred in $\widehat{N}$ by time $t$.

We say that an event occurs \emph{almost surely} if it happens with probability 1. For example, we say that $N$ goes extinct almost surely if the extinction time of $N$ is finite with probability 1.

\begin{lemma}\label{lem:coupling:dominates}
If $\min \widehat{\vec S}_0 = \widehat{N}_0$, then
$\min\widehat{\vec S}_t \leq \widehat{N}_t$ and $J_t(\widehat{\vec S}) \le B_t(\widehat{N})$ almost surely for all $t\geq0$.
\end{lemma}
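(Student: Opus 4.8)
The plan is to prove Lemma~\ref{lem:coupling:dominates} by induction on $t$, maintaining the two invariants $\min\widehat{\vec S}_t \le \widehat{N}_t$ and $J_t(\widehat{\vec S}) \le B_t(\widehat{N})$ simultaneously. The base case $t=0$ is immediate from the hypothesis $\min\widehat{\vec S}_0 = \widehat{N}_0$ and $J_0(\widehat{\vec S}) = 0 = B_0(\widehat{N})$. For the inductive step, I would fix a step $t$, condition on the entire history (in particular on $\widehat{\vec S}_t = (a,b)$ and $\widehat{N}_t = m$), and do a case analysis on the shared uniform random variable $\xi_t$, tracking how $\min\widehat{\vec S}$ and $\widehat{N}$ each move and how $J$ and $B$ each increment.

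The crucial observation that makes the case analysis go through is the ordering of the thresholds coming from the domination conditions (D1) and (D2). When the coupling is ``engaged'', i.e., $\min\widehat{\vec S}_t = \widehat{N}_t = m$, we have by (D1) that $P(a,b) \le p(m)$ and by (D2) that $Q(a,b) \ge q(m)$, hence $1 - Q(a,b) \le 1 - q(m)$. So the ``bad non-competitive'' interval $[0,P(a,b))$ for $\widehat{\vec S}$ sits inside the ``birth'' interval $[0,p(m))$ for $\widehat N$, and the ``good competitive'' interval $[1-Q(a,b),1)$ for $\widehat{\vec S}$ contains the ``death'' interval $[1-q(m),1)$ for $\widehat N$. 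I would then walk through the subcases:

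\begin{itemize}
\item If $\xi_t \in [0, P(a,b))$: $\widehat{\vec S}$ has a bad non-competitive event, so $\min\widehat{\vec S}$ decreases by at most $1$ (it could even decrease by exactly $1$, or stay the same if the minimum species is not the one affected — but at worst it goes down by $1$) and $J$ increments by $1$; meanwhile $\xi_t < p(m)$ so $\widehat N$ has a birth, increasing by $1$ and incrementing $B$ by $1$. Both invariants are preserved: $\min\widehat{\vec S}_{t+1} \le m - 1 + 1 \le \ldots$ wait, more carefully, $\min\widehat{\vec S}_{t+1} \le \min\widehat{\vec S}_t + 1 = m+1 = \widehat N_{t+1}$, and $J_{t+1} = J_t + 1 \le B_t + 1 = B_{t+1}$. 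Actually one must be slightly careful that a bad non-competitive event raises $\min\widehat{\vec S}$ by at most $1$: a single reaction changes a single species count by $1$, so $\min$ changes by at most $1$; good.
\item If $\xi_t \in [P(a,b), p(m))$ (nonempty only when $P(a,b) < p(m)$): $\widehat{\vec S}$ does \emph{not} have a bad non-competitive event (and in the subcase analysis of step 2(c) it could be a good competitive event, a holding step, or a non-bad non-competitive event), so $\min\widehat{\vec S}$ changes by at most $1$ and does not increase beyond $\min\widehat{\vec S}_t + 1$ — actually it can only increase via a birth of the minority species or decrease, so $\min\widehat{\vec S}_{t+1} \le m+1$; meanwhile $\widehat N$ has a birth, so $\widehat N_{t+1} = m+1$, and $J$ is unchanged while $B$ increments. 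Both invariants hold.
\item If $\xi_t \in [p(m), 1-q(m)) \cap [1-Q(a,b),1-q(m))$ type middle regions: here $\widehat N$ holds. One must check $\min\widehat{\vec S}$ does not strictly exceed $\widehat N = m$. The only way $\min\widehat{\vec S}$ increases is a birth of the species currently of smaller count; but in this region $\xi_t \ge p(m)$ so $\xi_t \notin [0,P(a,b))$, and the relevant $\widehat{\vec S}$-update is either a good competitive event (which decreases $\min\widehat{\vec S}$, keeping $\le m$) or a ``neither'' event. The subtle point: can a ``neither good-competitive nor bad-non-competitive'' event \emph{raise} $\min\widehat{\vec S}$? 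Yes — a birth of the minority species is non-competitive and \emph{not} bad (a birth increases the gap in favour of nobody... hmm, a birth of the minority decreases the gap, so it \emph{is} bad). Let me reconsider: the genuinely dangerous case is a birth of the minority species, which is a bad non-competitive event and hence is \emph{already accounted for} in the $[0,P(a,b))$ branch. So in the middle region no minority birth occurs, and $\min\widehat{\vec S}$ cannot rise above $m$. This is the key tension to resolve carefully.
\item If $\xi_t \in [1-q(m),1) \subseteq [1-Q(a,b),1)$: $\widehat N$ has a death, $\widehat N_{t+1} = m-1$; $\widehat{\vec S}$ has a good competitive event, so $\min\widehat{\vec S}$ decreases by at least... well, a good event decreases the smaller species, so $\min\widehat{\vec S}_{t+1} \le \min\widehat{\vec S}_t - 1 + (\text{possible shift}) $. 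Again a single reaction moves $\min$ by at most $1$, and here it moves down by at least... one needs $\min\widehat{\vec S}_{t+1} \le m-1$. A good competitive event decrements the smaller species by $1$ (self-destructive: both decrement; non-self-destructive: only the larger decrements — wait, ``good'' means the \emph{smaller}-count species decreases). In the self-destructive model $X_i + X_{1-i} \to \emptyset$ both decrease, so the min decreases by $1$. In the non-self-destructive model a good competitive interaction is $X_{1-i}+X_i \to X_{1-i}$ where $i$ is the minority, so the minority decreases by $1$: min decreases by $1$. So $\min\widehat{\vec S}_{t+1} = \min\widehat{\vec S}_t - 1 \le m - 1 = \widehat N_{t+1}$. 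And $J$ unchanged, $B$ unchanged. Both invariants hold.
\end{itemize}

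The remaining case is when the coupling is ``disengaged'', $\min\widehat{\vec S}_t \neq \widehat{N}_t$: by the inductive hypothesis $\min\widehat{\vec S}_t \le \widehat N_t$, so in fact $\min\widehat{\vec S}_t < \widehat N_t$, i.e., $\min\widehat{\vec S}_t \le \widehat N_t - 1$. By construction $\widehat{\vec S}_{t+1} = \widehat{\vec S}_t$, so $\min\widehat{\vec S}_{t+1} = \min\widehat{\vec S}_t \le \widehat N_t - 1 \le \widehat N_{t+1}$ (since $\widehat N$ changes by at most $1$), and $J_{t+1} = J_t \le B_t \le B_{t+1}$. So both invariants persist. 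I would also note that once $\min\widehat{\vec S}$ drops strictly below $\widehat N$ it can re-engage only if $\widehat N$ comes back down to meet it, which is fine — the invariant is one-directional and that's all we need.

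\textbf{Main obstacle.} The delicate part is the middle-region bookkeeping: making sure that whenever $\widehat{\vec S}$ could increase its minimum species (a minority birth), this event lies in the $[0,P(a,b))$ sub-interval and is therefore synchronized with a \emph{birth} of $\widehat N$ — so that $\widehat N$ never falls behind. This is exactly what conditions (D1)/(D2) and the interval placement are engineered to guarantee, and the proof amounts to verifying that the interval $[0,P(a,b))$ is large enough to capture \emph{all} of $\widehat{\vec S}$'s ``minimum-raising or gap-decreasing non-competitive'' probability mass, which is the definition of $P(a,b)$, and that it fits inside $[0,p(m))$, which is (D1). A secondary subtlety is confirming that a single reaction of either chain changes the tracked quantity ($\min\widehat{\vec S}$ or $\widehat N$) by at most $1$ in absolute value, so that the ``at most $1$ apart'' slack is never violated in one step; this is immediate from the reaction rules but should be stated. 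Once these are in place, the induction closes and the ``almost surely'' qualifier follows since the construction uses only countably many steps and each step's case analysis holds with probability $1$.
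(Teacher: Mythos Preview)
Your proposal is correct and follows essentially the same approach as the paper: induction on $t$, splitting on whether the coupling is engaged ($\min\widehat{\vec S}_t = \widehat{N}_t$) or not, and in the engaged case doing a subcase analysis on $\xi_t$ using the interval inclusions forced by (D1) and (D2). The paper's write-up is slightly terser (it uses three $\xi_t$-subcases $[0,P(a,b))$, $[1-q(m),1)$, and $[P(a,b),1-q(m))$ rather than your four), but the key observation you isolate as the ``main obstacle''---that any event raising $\min\widehat{\vec S}$ is a minority birth, hence a bad non-competitive event, hence already confined to $[0,P(a,b))\subseteq[0,p(m))$---is exactly the crux of the paper's case (3).
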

\begin{proof}
  We proceed by induction on $t \ge 0$. The base case $t=0$ is trivial. For the induction step, suppose  $\min \widehat{\vec S}_{t} \leq \widehat{N}_{t}$ and $J_t(\widehat{\vec S}) \le B_t(\widehat{N})$ almost surely.
By the law of total probability, we have
\begin{equation*}
\begin{split}
\Pr[ \min \widehat{\vec S}_{t+1} \leq \widehat{N}_{t+1} ]
 = \\
\Pr[ \min \widehat{\vec S}_{t+1} \leq \widehat{N}_{t+1} \mid \min \widehat{\vec S}_{t} = \widehat{N}_{t} ]
\cdot \Pr[ \min \widehat{\vec S}_{t} = \widehat{N}_{t}]
+
\\ \Pr[ \min \widehat{\vec S}_{t+1} \leq \widehat{N}_{t+1} \mid \min \widehat{\vec S}_{t} < \widehat{N}_{t} ]
\cdot \Pr[ \min \widehat{\vec S}_{t} < \widehat{N}_{t}].
\end{split}
\end{equation*}
The condition $\min\widehat{\vec S}_{t} < \widehat{N}_{t}$ implies $\min\widehat{\vec S}_{t+1} \leq \widehat{N}_{t+1}$ since $\widehat{\vec S}_{t+1} = \widehat{\vec S}_{t}$ and $\widehat{N}_{t+1} \leq \widehat{N}_{t} - 1$ by the transition probabilities of $(\widehat{\vec S}, \widehat{N})$, as given by rules (1) and (2).
For $\min\widehat{\vec S}_{t} = \widehat{N}_{t}$, writing $\widehat{N}_{t}
= m$ and $\widehat{S}_{t} = (a,b)$, we distinguish three cases via the law of
total probability:
\begin{enumerate}[noitemsep]
\item $\xi_t \in [0, P(a,b))$
\item $\xi_t \in [1-q(m), 1)$
\item $\xi_t \in [P(a,b), 1-q(m))$.
\end{enumerate}
In case~(1), by property (D1) of dominating chains, we have $\xi_t \leq P(a,b)\leq p(m)$. Therefore, by rule (1a) we have $\widehat{N}_{t+1} = \widehat{N}_{t} + 1$ and
$\min\widehat{\vec S}_{t+1} \leq \min\widehat{\vec S}_{t} + 1$ by rule (2a).
Combining these inequalities implies $\min \widehat{\vec S}_{t+1} \leq
\widehat{N}_{t+1}$.
In case~(2), since $\xi_t \geq 1- q(m) \geq 1 - Q(a,b)$ by~(D2), we have $\widehat{N}_{t+1} = \widehat{N}_{t} - 1$ by rule (1b) and
$\min \widehat{\vec S}_{t+1} = \min\widehat{\vec S}_{t} - 1$ by rule (2b).
In case~(3), we have $\xi_t \in [P(a,b), 1-q(m)) \subseteq [p(m), 1-Q(a,b))$.
Thus $\widehat{N}_{t+1} \geq \widehat{N}_t$ and $\min\widehat{\vec S}_{t+1} \leq \min\widehat{\vec S}_t$.
Therefore, in all cases, the probability of $\min \widehat{\vec S}_{t+1} \leq \widehat{N}_{t+1}$ is equal to~$1$. This shows the first inequality.

For the second inequality, we proceed in a similar manner and note that a bad non-competitive event can occur in $ \widehat{\vec S}$ only in the case (1). But then by rule (1a) and (2a) we get that $J_{t+1}(\widehat{\vec S}) = J_{t}(\widehat{\vec S}) + 1  \le B_t(\widehat{N}) + 1 = B_{t+1}(\widehat{N})$. This completes the induction.
\end{proof}

For every positive integer~$k$, define the random time~$\tau(k)$ as the $k$\textsuperscript{th} smallest time~$t$ such that $\min\widehat{\vec S}_t = \widehat{N}_t$. This is a stopping time of the Markov chain $(\widehat{\vec S}, \widehat{N})$~\cite[Chapter~2, Definition~7.1]{Bre99}.

\begin{lemma}\label{lemma:delayed-coupling}
Suppose $\widehat{\vec S}_0 = \vec S_0$ and $\min \widehat{\vec S}_0 = \widehat{N}_0$. If $N$ goes extinct almost surely, then
$\tau(k+1) < \infty$ almost surely and the marginal distribution of~$\widehat{\vec S}_{\tau(k+1)}$ equals the distribution of~$\vec S_{k}$ for every $k\geq 0$.
\end{lemma}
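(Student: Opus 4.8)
The plan is to exploit the two structural features of the pseudo-coupling. First, the coordinate $\widehat N$ evolves autonomously (rule~(1) never looks at $\widehat{\vec S}$) with the same law as $N$, so $\widehat N$ hits the absorbing state $0$ almost surely. Second, the coordinate $\widehat{\vec S}$ is frozen except at the \emph{active} times --- the $t$ with $\min\widehat{\vec S}_t=\widehat N_t$ --- and at an active time, conditionally on $\widehat{\vec S}_t=(a,b)$, the three branches of rule~(2) occur with probabilities $P(a,b)$, $Q(a,b)$, $1-P(a,b)-Q(a,b)$ (note $P+Q\le1$ since ``bad non-competitive'' and ``good'' events are disjoint) and their conditional laws reassemble, by the law of total probability, into exactly the one-step transition kernel of $\vec S$ from $(a,b)$. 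The times $\tau(k)$ are by definition the successive active times, and $\tau(1)=0$ because $\min\widehat{\vec S}_0=\widehat N_0$; so the lemma reduces to showing that active times keep recurring and that $\widehat{\vec S}$ read off at them traces the law of $\vec S$.

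For $\tau(k+1)<\infty$ almost surely I would induct on $k$, the base case $\tau(1)=0$ being immediate. Assume $\tau(k)<\infty$ a.s. After the active step at $\tau(k)$, Lemma~\ref{lem:coupling:dominates} gives $c:=\min\widehat{\vec S}_{\tau(k)+1}\le\widehat N_{\tau(k)+1}$. If equality holds then $\tau(k)+1$ is active and $\tau(k+1)=\tau(k)+1$. Otherwise $c<\widehat N_{\tau(k)+1}$; let $\sigma$ be the first time $\ge\tau(k)+1$ with $\widehat N=c$. On $[\tau(k)+1,\sigma)$ the chain $\widehat N$ stays $>c$ (it moves in unit steps and cannot cross $c$ without hitting it), so none of those times is active and $\widehat{\vec S}$ stays frozen with $\min\widehat{\vec S}=c$; hence $\sigma$ is the next active time, i.e.\ $\tau(k+1)=\sigma$. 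Finally $\sigma<\infty$ a.s.: $\widehat N$ reaches the absorbing $0$ a.s., and since $c\ge0$, $0$ is absorbing, and $\widehat N_{\tau(k)+1}>c$, the chain must pass through $c$ before absorption, at an almost-surely finite time.

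For the distributional statement I would induct on $k$ again, using the first part to know every $\tau(k+1)$ is finite. The base case is $\widehat{\vec S}_{\tau(1)}=\widehat{\vec S}_0=\vec S_0$. For the step, observe that no active time lies strictly between $\tau(k+1)$ and $\tau(k+2)$, so $\widehat{\vec S}$ is constant on $[\tau(k+1)+1,\tau(k+2)]$ and $\widehat{\vec S}_{\tau(k+2)}=\widehat{\vec S}_{\tau(k+1)+1}$. Since $\tau(k+1)$ is a stopping time of $(\widehat{\vec S},\widehat N)$ and at this active time $\widehat N_{\tau(k+1)}=\min\widehat{\vec S}_{\tau(k+1)}$ is a function of $\widehat{\vec S}_{\tau(k+1)}$, the strong Markov property plus the observation above that the rule-(2) split reproduces the kernel of $\vec S$ yields: conditionally on $\widehat{\vec S}_{\tau(k+1)}=\vec x$, the variable $\widehat{\vec S}_{\tau(k+1)+1}$ is distributed as one step of $\vec S$ from $\vec x$. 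Combining this with the inductive hypothesis $\widehat{\vec S}_{\tau(k+1)}\overset{d}{=}\vec S_k$ gives $\widehat{\vec S}_{\tau(k+2)}=\widehat{\vec S}_{\tau(k+1)+1}\overset{d}{=}\vec S_{k+1}$, completing the induction.

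The main obstacle is the finiteness argument: one must argue carefully that $\min\widehat{\vec S}$ genuinely does not move while we wait for $\widehat N$ to return to level $c$ --- this is exactly what Lemma~\ref{lem:coupling:dominates} and the ``frozen unless active'' clause of rule~(2) buy us --- and that the hypothesis ``$N$ goes extinct almost surely'' transfers to the autonomous coordinate $\widehat N$ and, together with unit-step motion toward the absorbing state $0$, forces $\widehat N$ to revisit every intermediate level. The distributional part is then essentially bookkeeping with the strong Markov property, the only genuinely new thing to check being that the three conditional samplings in rule~(2) recombine into the true one-step kernel of $\vec S$.
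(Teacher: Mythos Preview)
Your proposal is correct and follows essentially the same route as the paper: both proofs induct on $k$, use the strong Markov property at the stopping time $\tau(k)$ together with the observation that rule~(2) reproduces the one-step kernel of $\vec S$, use the freezing of $\widehat{\vec S}$ between active times to identify $\widehat{\vec S}_{\tau(k+1)}$ with $\widehat{\vec S}_{\tau(k)+1}$, and derive $\tau(k+1)<\infty$ from the almost-sure extinction of $\widehat N$ combined with \lemmaref{lem:coupling:dominates}. The only cosmetic differences are that you split finiteness and the distributional claim into two separate inductions and phrase the finiteness step constructively via the hitting time $\sigma$ of level $c$, whereas the paper bundles both into one induction and argues finiteness by contradiction from $\widehat N_t\to 0$.
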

\begin{proof}
We proceed by induction on~$k$.
For the base case, by the hypothesis on the initial states, we have $\tau(1) = 0 < \infty$ and
$\widehat{\vec S}_{\tau(1)} = \widehat{\vec S}_0 = \vec S_0$.
This proves the base case.

For the induction step, assume that $k\geq 1$ and that~$\widehat{\vec S}_{\tau(k)}$ and $\vec S_{k-1}$ are equally distributed.
By the strong Markov property~\cite[Chapter~2, Theorem~7.1]{Bre99}, the process $(\widehat{\vec S}_{\tau(k)+t}, \widehat{N}_{\tau(k)+t})_{t\geq 0}$ is a Markov chain with the same transition probabilities as $(\widehat{\vec S}, \widehat{N})$.
By definition of~$\tau(k)$, we have $\min\widehat{\vec S}_{\tau(k)} = \widehat{N}_{\tau(k)}$.
By the definition of the transition probabilities of $(\widehat{\vec S}, \widehat{N})$ for this case, the states~$\widehat{\vec S}_{\tau(k)+1}$ and~$\vec S_{k+1}$ are equally distributed.
We have that
\begin{equation}\label{eq:coupling:between:stopping}
\widehat{\vec S}_{\tau(k)+t} = \widehat{\vec S}_{\tau(k)+1} \textrm{ for all  } 1\leq t < \tau(k+1) -\tau(k) + 1
\end{equation}
almost surely by the definition
of $\tau(k+1)$ and the transition probabilities of $(\widehat{\vec S},
\widehat{N})$.
In particular, if $\tau(k+1) < \infty$, we deduce that~$\widehat{\vec
S}_{\tau(k+1)}$ has the same distribution as~$\widehat{\vec S}_{\tau(k)+1}$, which, in turn, has the same distribution as~$\vec S_k$.
It remains to prove that $\tau(k+1) < \infty$ almost surely.
By \lemmaref{lem:coupling:dominates} and~(\ref{eq:coupling:between:stopping}),
\begin{equation}\label{eq:coupling:onespecies:between:stopping}
\min\widehat{\vec S}_{\tau(k)+1} < \widehat{N}_{\tau(k)+t} \textrm{ for all } 1\leq t < \tau(k+1)-\tau(k) + 1.
\end{equation}
Since $N$ goes extinct almost surely, we have $N_t\to 0$ almost surely. Since $\widehat{N}_t$ and~$N_t$ are equally distributed,  $\lim_{t\to\infty}\widehat{N}_{\tau(k)+t} = 0$ almost surely. Thus, there exists some $t \ge 1$
such that $\min\widehat{\vec S}_{\tau(k)+1} \geq \widehat{N}_{\tau(k)+t}$ almost surely, as $\min\widehat{\vec S}_{\tau(k)+1}\geq 0$.
This contradicts $\Pr[\tau(k+1)=\infty] > 0$, so this probability is hence zero and so $\Pr[\tau(k+1) < \infty] = 1$.
\end{proof}

\paragraph{Proof of the chain domination lemma (\lemmaref{lemma:chain-domination}).}
We are now ready to give the proof of the chain domination lemma.
For the first claim, observe that by \lemmaref{lemma:delayed-coupling}, the marginal distribution of $(\widehat{S}_{\tau(k+1)})_{k \ge 0}$ has equal distribution with the chain $(S_k)_{k \ge 0}$ almost surely.
In particular, since $\tau(k+1) \ge k$ always and the minimum cannot increase after hitting 0, we have that
\begin{align*}
\Pr[\min \vec S_k > 0] &= \Pr[\min \widehat{\vec S}_{\tau(k+1) + 1} > 0] \\
&\le \Pr[\min \widehat{\vec S}_k > 0] \le \Pr[\widehat{N}_k > 0],
\end{align*}
as by \lemmaref{lem:coupling:dominates} the event $\min \widehat{S}_k \le \widehat{N}_k$ holds almost surely. This implies that $T(\widehat{\vec S}) \le E(\widehat{N})$ almost surely.
Therefore, for the consensus time, we get that $T(\vec S)$ is stochastically dominated by $E(N)$, as
\begin{align*}
\Pr[T(\vec S) > k] &\le \Pr[T(\widehat{\vec S}) > k] = \ \Pr[E(\widehat{N}) \ge T(\widehat{\vec S)} > k] \\
&\le \Pr[E(\widehat{N}) > k] = \Pr[E(N) > k].
\end{align*}
This establishes the first claim (a) that $T(\vec S) \preceq E(N)$.

The next step is to show the second claim, i.e., $J(\vec S) \preceq B(N)$, where $J(\vec S)$ is the number of bad non-competitive reactions that occur in $\vec S$ and $B(N)$ is the number of births that occur in chain $N$.
By \lemmaref{lem:coupling:dominates} and \lemmaref{lemma:delayed-coupling}, we have for any $k,x \ge 0$ that
\begin{align*}
\Pr[J_k(\vec S) > x ] &= \Pr[J_{\tau(k+1)}(\widehat{\vec S}) > x] \\
&\le \Pr[B_{\tau(k+1)}(\widehat{N}) \ge J_{\tau(k+1)}(\widehat{\vec S}) > x] \\
&\le \Pr[_{\tau(k+1)}(\widehat{N}) > x] \le \Pr[B(\widehat{N}) > x] \\
&= \Pr[B(N) > x].
\end{align*}
That is, for any $k \ge0$, we have $J_k(\vec S) \preceq B(N)$ and so $J(\vec S) = \sup \{ J_k(\vec S) : k \ge 0 \} \preceq B(N)$.

\subsection{Constructing a dominating chain for competitive Lotka--Volterra systems}

Let be a two-species Lotka--Volterra chain $\vec S$ with $\gamma=0$ and $\alpha_\textrm{min} = \min\{\alpha_0, \alpha_1\} > 0$. We will now define a \emph{nice} birth-death chain $N$ that is a dominating chain for $\vec S$. By the chain domination lemma and results for nice chains given in \sectionref{sec:nice}, this implies that the consensus time of $\vec S$ is $O(n)$ in expectation and with high probability and that the number of bad non-competitive events is $O(\log n)$ in expectation and $O(\log^2n)$ with high probability.
Let $\vartheta = \beta + \delta$, $\alpha = \alpha_0 + \alpha_1$. Define the birth probability function $p$ and death probability function $q$ as
\[
p(m) = \frac{\vartheta }{\alpha m + \vartheta} \quad \textrm{ and  } \quad q(m) = \frac{\alpha_\textrm{min}}{\alpha +  2 \vartheta}
\]
for all $m > 0$ and $p(0)=q(0)=0$. Since $p$ attains its maximum at $p(1)$, we have that $p(m)+q(m) \le p(1) + q(m) \le 1$ for all $m \ge 0$.
Let $(N_t)_{t \ge 0}$ be the birth-death chain defined by $p$ and $q$ and fix $N_0 \ge \min\vec S_0$. Since $p \in O(1/m)$ and $q(m)$ is a positive constant for $m>0$, the chain is nice.

\begin{lemma}\label{lemma:domination}
  The nice birth-death chain $N$ defined by $p$ and $q$ is a dominating chain for $\vec S$.
\end{lemma}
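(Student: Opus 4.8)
The plan is to verify directly the two conditions defining a dominating chain: (D1) that $P(a,b)\le p(\min\{a,b\})$ and (D2) that $Q(a,b)\ge q(\min\{a,b\})$, for an arbitrary state $(a,b)$. Throughout I would write $m=\min\{a,b\}$ and $M=\max\{a,b\}$; since $\gamma=0$, the total propensity of $\vec S$ in $(a,b)$ is $\varphi(a,b)=\alpha ab+\vartheta(a+b)=M(\alpha m+\vartheta)+\vartheta m$. First I would dispose of the case $m=0$: here $p(0)=q(0)=0$ by definition, while $P(a,b)=0$ --- by the ``minimum species has positive count'' clause no reaction is bad once the minority is extinct --- and $Q(a,b)\ge 0$, so both conditions hold. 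Hence from now on assume $m\ge 1$, so that also $M\ge 1$ and $\varphi(a,b)>0$.

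For (D1), the key observation is that the non-competitive reactions are exactly the four individual birth/death reactions, of which only the birth of species~$1$ (propensity $\beta b$) and the death of species~$0$ (propensity $\delta a$) satisfy $\Delta_{t+1}=\Delta_t-1$. Using $\beta b+\delta a\le(\beta+\delta)M=\vartheta M$ and $\vartheta m\ge 0$, this yields
\[
P(a,b)\;\le\;\frac{\beta b+\delta a}{\varphi(a,b)}\;\le\;\frac{\vartheta M}{M(\alpha m+\vartheta)+\vartheta m}\;\le\;\frac{\vartheta M}{M(\alpha m+\vartheta)}\;=\;\frac{\vartheta}{\alpha m+\vartheta}\;=\;p(m),
\]
which is (D1).

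For (D2), I would lower bound $Q(a,b)$ using only good interspecific competition events. In both Lotka--Volterra variants these occur with total propensity at least $\alpha_\textrm{min}\,ab=\alpha_\textrm{min}\,Mm$: in the self-destructive model every interspecific competition removes one individual of each species and hence lowers $\min\{x_0,x_1\}$, while in the non-self-destructive model the reaction $X_i+X_{1-i}\to X_i$ in which $X_{1-i}$ is a currently minimum-count species removes such an individual and fires at rate at least $\alpha_\textrm{min}$ (at a tie, both interspecific reactions qualify). Then, using $M+m\le 2Mm$ for integers $M\ge m\ge 1$,
\[
Q(a,b)\;\ge\;\frac{\alpha_\textrm{min}\,Mm}{\alpha Mm+\vartheta(M+m)}\;\ge\;\frac{\alpha_\textrm{min}\,Mm}{\alpha Mm+2\vartheta Mm}\;=\;\frac{\alpha_\textrm{min}}{\alpha+2\vartheta}\;=\;q(m),
\]
which is (D2). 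Combined with the facts recorded just before the statement --- that $N$ is nice and that $p(m)+q(m)\le 1$ --- this shows that $N$ is a dominating chain for $\vec S$.

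The routine algebra above is not where the difficulty lies; the part that needs genuine care is the combinatorial bookkeeping behind the two displays, namely pinning down exactly which reactions count as ``bad non-competitive'' and which as ``good'' --- in particular at a tie $a=b$ and in the post-consensus states with $m=0$ --- and verifying that this accounting is identical for the self-destructive and the non-self-destructive models (the bad non-competitive reactions are literally the same, and in both models there is a good interspecific reaction of propensity at least $\alpha_\textrm{min}Mm$). Once those identifications are fixed, both conditions follow from the displayed chains of inequalities.
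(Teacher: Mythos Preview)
Your proof is correct and follows essentially the same route as the paper's: both verify (D1) and (D2) directly by bounding the numerator $\delta a+\beta b\le \vartheta M$ for (D1) and using $1/a+1/b\le 2$ (equivalently $M+m\le 2Mm$) for (D2). The paper simply assumes $a\ge b\ge 1$ and writes the same chain of inequalities without your extra bookkeeping about the $m=0$ case, ties, and the two competition variants.
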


\begin{proof}
  Let $a,b \ge 1$. We need to show that the following conditions hold:
  \begin{enumerate}[noitemsep]
    \item[(D1)] $P(a,b) \le p(\min\{a,b\})$, and
    \item[(D2)] $Q(a,b) \ge q(\min\{a,b\})$.
    \end{enumerate}
  Without loss of generality, assume $a \ge b \ge 1$.
  For the first condition (D1), note that
\[
P(a,b) = \frac{\delta a+ \beta b}{\alpha ab + \vartheta(a+b)} \le \frac{\vartheta a}{\alpha ab + \vartheta a + \vartheta b} \le \frac{\vartheta}{\alpha b + \vartheta + \vartheta b/a} \le \frac{\vartheta}{\alpha b + \vartheta} = p(b).
\]
For the second condition (D2), note that the probability of a good competitive interaction is at least
\begin{align*}
  Q(a,b) \ge
\frac{\alpha_\textrm{min}  \cdot ab} {\alpha ab + \vartheta(a+b)} &= \frac{\alpha_\textrm{min} } {\alpha  + \vartheta(1/b+1/a)} \ge \frac{\alpha_\textrm{min} } {\alpha  + 2\vartheta} = q(b). \qedhere
\end{align*}
\end{proof}

The following is a straightforward consequence of the chain domination lemma (\lemmaref{lemma:chain-domination}), \lemmaref{lemma:domination}, and the results given in \sectionref{sec:nice}.

\begin{theorem}\label{thm:nice-upper-domination}
Let $\vec S$ be a Lotka--Volterra system with $\alpha_\textrm{min}>0$ and $\gamma=0$ and an initial population of size $n>0$. The consensus time $T(\vec S)$ and the number $J(\vec S)$ of bad non-competitive events satisfy
\begin{enumerate}[noitemsep,label=(\alph*)]
\item $\E[T(\vec S)] \in O(n)$ and $T(\vec S) \in O(n)$ with high probability, and
\item $\E[J(\vec S)] \in O(\log n)$ and $J(\vec S) \in O(\log^2n)$ with high probability.
\end{enumerate}
\end{theorem}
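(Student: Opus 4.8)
The plan is to simply feed the concrete nice chain of \lemmaref{lemma:domination} into the chain domination machinery of \sectionref{sec:dominating-chains} and then read off the bounds for nice chains from \sectionref{sec:nice}. Concretely, let $\vec S$ have initial population of size $n$, so that $n \ge \min\vec S_0$, and let $N$ be the birth--death chain defined by the functions $p(m) = \vartheta/(\alpha m + \vartheta)$ and $q(m) = \alpha_\textrm{min}/(\alpha + 2\vartheta)$ from \sectionref{sec:dominating-chains}, started at $N_0 = n$. As already observed there, $p \in O(1/m)$ and $q$ is a positive constant on $m>0$, so $N$ is a nice chain; and since $\gamma = 0$ and $\alpha_\textrm{min} > 0$ are exactly the hypotheses of the present theorem, \lemmaref{lemma:domination} applies and $N$ is a dominating chain for $\vec S$. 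Hence, by the chain domination lemma (\lemmaref{lemma:chain-domination}) with $N_0 = n \ge \min\vec S_0$, we get $T(\vec S) \preceq E(N)$ and $J(\vec S) \preceq B(N)$.

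It then remains to transfer the bounds on $E(N)$ and $B(N)$ for nice chains through stochastic domination. For the expectation bounds I would use the elementary fact that for non-negative integer-valued random variables $X \preceq Y$ implies $\E[X] \le \E[Y]$, since $\E[X] = \sum_{t\ge0}\Pr[X>t] \le \sum_{t\ge0}\Pr[Y>t] = \E[Y]$. Combined with \lemmaref{lemma:nice-extinction} this gives $\E[T(\vec S)] \le \E[E(n)] \in O(n)$, and combined with \lemmaref{lemma:nice-expected-births} it gives $\E[J(\vec S)] \le \E[B(n)] \in O(\log n)$. For the high-probability bounds, note that $X \preceq Y$ gives $\Pr[X \ge z] \le \Pr[Y \ge z]$ for every threshold $z$, so any ``$O(f(n))$ with high probability'' statement for $Y$ immediately yields the same statement for $X$; applying this with \lemmaref{lemma:nice-whp-extinction} yields $T(\vec S) \in O(n)$ with high probability, and with \lemmaref{lemma:nice-whp-births} yields $J(\vec S) \in O(\log^2 n)$ with high probability.

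Since the substantive work has already been done in \sectionref{sec:nice} and in the chain domination lemma, there is essentially no obstacle left to overcome; the only points requiring a moment's care are (i) choosing the initial state of the dominating chain to be $N_0 = n$ rather than $\min\vec S_0$, so that the ``with high probability'' guarantees inherited from \sectionref{sec:nice} are phrased in terms of $n$ and not in terms of the possibly tiny minority count, and (ii) verifying that stochastic domination propagates to both expectations and tail bounds, as sketched above. Both are routine, so I expect the proof to consist of little more than citing \lemmaref{lemma:chain-domination}, \lemmaref{lemma:domination}, and the four lemmas of \sectionref{sec:nice} in sequence.
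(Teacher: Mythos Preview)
Your proposal is correct and matches the paper's own proof essentially line for line: the paper also sets $N_0 = n$, invokes \lemmaref{lemma:domination} and \lemmaref{lemma:chain-domination} to get $T(\vec S)\preceq E(N)$ and $J(\vec S)\preceq B(N)$, and then cites \lemmaref{lemma:nice-extinction}, \lemmaref{lemma:nice-whp-extinction}, \lemmaref{lemma:nice-expected-births}, and \lemmaref{lemma:nice-whp-births}. Your added remarks on why stochastic domination transfers expectations and tail bounds, and on choosing $N_0=n$ rather than $\min\vec S_0$, are correct elaborations that the paper leaves implicit.
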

\begin{proof}
  By construction, $N$ is a nice birth-death chain and a dominating chain for the two-species chain $\vec S$. Set $N_0 = n$. By \lemmaref{lemma:chain-domination}, we have that $T(\vec S) \preceq E(N)$ and $J(\vec S) \preceq B(N)$. The claim (a) follows now from the fact that $E(N) \in \Theta(n)$ in expectation and with high probability by \lemmaref{lemma:nice-extinction} and \lemmaref{lemma:nice-whp-extinction}. The claim (b) follows from \lemmaref{lemma:nice-expected-births} and \lemmaref{lemma:nice-whp-births}.
\end{proof}

\section{Lotka--Volterra systems with self-destructive competition}\label{sec:self-destructive}

In this section, we show that the threshold for high probability majority consensus lies between $\Omega(\sqrt{\log n})$ and $O(\log^2 n)$ for neutral Lotka--Volterra systems with self-destructive interspecific competition and no intraspecific competition as given by Eq.~(\ref{eq:sd-model}) with $\alpha>0$, $\beta,\delta \ge 0$ and $\gamma=0$.

\paragraph{Demographic noise under self-destructive competition.}
Under self-destructive competition, competitive events cannot change the difference between the two species. Thus, only demographic noise from the individual birth and death reactions play a role.  Let $I(\vec S)$ denote the total number of such events before the chain reaches consensus by time $T(\vec S)$.
Let $F_t = \Delta_{t-1} - \Delta_{t}$. Recall from Eq.~(\ref{eq:noise-sum}) that the probability of reaching majority consensus is given by $\rho(\vec S) = 1 - \Pr[F \le \Delta_0]$, where
\begin{equation}\label{eq:self-destructive-noise}
F = \sum_{t=1}^{T(\vec S)} F_t = \sum_{k=1}^{I(\vec S)} F_{t(k)}
\end{equation}
and $t(k)$ is the time when the $k$th non-competitive event occurs. The equality follows from the fact that  competitive reactions cannot change the gap under self-destructive competition and for any time step $t$ for which a competitive interaction occurs we have $F_t = 0$.

\subsection{Upper bound for self-destructive competition}

\begin{theorem}
  Let $k \ge 0$ be a constant. Suppose $\vec S$ is a Lotka--Volterra system with self-destructive competition and $n = \vec S_0+\vec S_1$. There is a constant $C(k)$ such that if $\Delta_0 > C(k) \log^2 n$, then
  \[
\rho(\vec S) \ge 1-1/n^k.
  \]
\end{theorem}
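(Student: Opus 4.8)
The plan is to reduce the event that $\vec S$ fails to reach majority consensus to a tail event for the number $J(\vec S)$ of bad non-competitive reactions, and then invoke \theoremref{thm:nice-upper-domination}.

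First I would exploit the structure of self-destructive competition. A competitive interaction $X_i + X_{1-i} \xrightarrow{\alpha_i} \emptyset$ removes exactly one individual of each species, so it leaves the discrepancy $\Delta_t$ unchanged; thus $F_t = 0$ at every step at which a competitive event occurs. Every non-competitive event is an individual birth or death and changes $\Delta_t$ by exactly $\pm 1$: it contributes $F_t = +1$ precisely when it is a \emph{bad} event (a birth of the initial minority species, or a death of the initial majority species while both species are still present) and $F_t = -1$ otherwise. Hence $F = \sum_{t=1}^{T(\vec S)} F_t$ from \equationref{eq:self-destructive-noise} equals the number of bad non-competitive events occurring before consensus minus the number of good ones occurring before consensus, and in particular the deterministic bound
\[
F \;\le\; J(\vec S)
\]
holds, since $J(\vec S)$ counts at least every bad non-competitive reaction occurring before consensus.

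Next I would use \theoremref{thm:nice-upper-domination}(a), which gives $\E[T(\vec S)] \in O(n)$ and hence $T(\vec S) < \infty$ almost surely; this validates the identity $\rho(\vec S) = 1 - \Pr[F \ge \Delta_0]$. Together with $F \le J(\vec S)$ it gives
\[
1 - \rho(\vec S) \;=\; \Pr[F \ge \Delta_0] \;\le\; \Pr\bigl[J(\vec S) \ge \Delta_0\bigr].
\]
Finally, \theoremref{thm:nice-upper-domination}(b) supplies, for the given constant $k$, a constant $C(k)$ with $\Pr[J(\vec S) \ge C(k)\log^2 n] \le 1/n^k$ for all sufficiently large $n$; the bounded range of small $n$ can be folded into $C(k)$, using that $J(\vec S) \preceq B(N)$ is finite almost surely. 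If $\Delta_0 > C(k)\log^2 n$, then $\{J(\vec S) \ge \Delta_0\} \subseteq \{J(\vec S) \ge C(k)\log^2 n\}$, and therefore $1 - \rho(\vec S) \le 1/n^k$, which is the claim.

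I do not expect a genuine obstacle here: the heavy lifting --- the chain-domination lemma and the analysis of nice birth-death chains in \sectionref{sec:dominating-chains} and \sectionref{sec:nice} --- is already in place, and what remains is the bookkeeping above. The one point that needs care is the first step: confirming that under self-destructive competition all discrepancy noise comes from individual reactions, and that the \emph{net} displacement $F$ is governed by the \emph{number of bad} events $J(\vec S) \in O(\log^2 n)$ rather than by the much larger total number $I(\vec S) \in \Theta(n)$ of individual reactions. It is precisely the cancellation between good and bad individual reactions that keeps the threshold polylogarithmic rather than linear in $n$.
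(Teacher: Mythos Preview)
Your proof is correct and follows the same approach as the paper's: both use that self-destructive competition leaves $\Delta_t$ unchanged so that $F \le J(\vec S)$, invoke \theoremref{thm:nice-upper-domination}(a) to ensure $T(\vec S)<\infty$ almost surely, and then apply \theoremref{thm:nice-upper-domination}(b) to bound $\Pr[J(\vec S)\ge \Delta_0]$; you simply spell out the bookkeeping in more detail than the paper does.

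One minor inaccuracy in your closing commentary: the claim $I(\vec S)\in\Theta(n)$ is not established anywhere (and is false when $\beta=\delta=0$); the paper only shows $I(\vec S)\in\Omega(\log m)$ with high probability and $I(\vec S)\le T(\vec S)\in O(n)$. This does not affect your argument, which correctly relies on $J(\vec S)$ rather than $I(\vec S)$, but the remark as stated is not justified.
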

\begin{proof}
By \theoremref{thm:nice-upper-domination}, the consensus time $T(\vec S)$ is finite with probability one.
From Eq.~(\ref{eq:self-destructive-noise}) we have that the probability of reaching majority consensus is given by $\rho(\vec S) = 1 - \Pr[F \le \Delta_0]$. Now
\[
1 - \rho(\vec S) = \Pr[F \ge \Delta_0] \le \Pr[J(\vec S) \ge \Delta_0] \le 1/n^k,
\]
whenever $\Delta_0 \ge C(k) \log^2 n$, where $C(k)$ is the constant given by \theoremref{thm:nice-upper-domination}.
This is because the $J(\vec S)$ non-competitive reactions that reduce the gap between (current) minority and majority species will not exceed $\Delta_0$ with high probability.
\end{proof}

\subsection{Lower bound for self-destructive competition}

We now show that with self-destructive competition, a Lotka--Volterra system $\vec S$ can fail to reach majority consensus with constant probability if the initial gap is $o(\sqrt{\log n})$.
The idea is to show that when the gap is this small, with at least constant probability, the noise from the individual events will bring the chain into a state $(a,a)$, for some $a>0$, where both species have equal counts. From such a state, a system with identical species (i.e., equal rate parameters), both species have equal probability of winning majority consensus (i.e., going extinct last).
The next lemma expresses this observation more formally.

\begin{lemma}\label{lemma:identical-gap-fail}
For a Lotka--Volterra system $\vec S$ with identical species,
\[
1 - \rho(\vec S) \ge \frac{1}{2} \cdot \Pr[\Delta_t = 0 \textrm{ for some  } t < T(\vec S)].
\]
\end{lemma}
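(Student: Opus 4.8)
The plan is to condition on the first time the gap vanishes and then exploit the symmetry of the neutral dynamics. Let $\tau = \inf\{t \ge 0 : \Delta_t = 0\}$, a stopping time of the jump chain $\vec S$. Since $\Delta_0 = a-b > 0$ by our standing convention, the event $\{\Delta_t = 0 \text{ for some } t < T(\vec S)\}$ is exactly $A := \{\tau < T(\vec S)\}$. The first structural observation is that, because no reaction can resurrect an extinct species, a consensus state always has a zero coordinate; hence on $A$ we have $\tau<\infty$, consensus has not been reached at any time $\le\tau$, and $\Delta_\tau=0$, so $\vec S_\tau = (a,a)$ for some integer $a \ge 1$. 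Conversely $\{\vec S_\tau = (a,a)\}$ with $a\ge1$ already forces $A$, so $A = \bigsqcup_{a\ge1}\{\vec S_\tau=(a,a)\}$ and $\Pr[A] = \sum_{a\ge1}\Pr[\vec S_\tau=(a,a)]$.

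Next I would apply the strong Markov property of the (time-homogeneous) jump chain at $\tau$: conditioned on $\{\tau<\infty,\ \vec S_\tau=(a,a)\}$, the shifted process $(\vec S_{\tau+s})_{s\ge0}$ is a fresh Lotka--Volterra jump chain started from $(a,a)$. Since on $A$ all consensus behaviour of $\vec S$ is decided strictly after $\tau$, on this event the event ``$\vec S$ reaches majority consensus'' coincides with ``in the chain started from $(a,a)$, species $0$ wins''.

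Then I would use neutrality: for identical species the relabelling $x_0 \leftrightarrow x_1$ induces a probability-preserving involution on the trajectories of the jump chain started from the symmetric state $(a,a)$ (the transition probabilities are invariant and the initial state is fixed), and it interchanges ``species $0$ wins'' with ``species $1$ wins''. Writing $w(a)$ for their common probability and $\eta(a) = \Pr[\text{no consensus}\mid\text{start }(a,a)]\ge0$, we get $2w(a)+\eta(a)=1$, so the probability that species $0$ fails to win, started from $(a,a)$, is $w(a)+\eta(a)=\tfrac{1+\eta(a)}{2}\ge\tfrac12$. Combining with the previous paragraph,
\[
1-\rho(\vec S) \;\ge\; \Pr[\vec S\text{ does not reach majority consensus},\,A] \;=\; \sum_{a\ge1}\Pr[\vec S_\tau=(a,a)]\,\bigl(w(a)+\eta(a)\bigr) \;\ge\; \tfrac12\sum_{a\ge1}\Pr[\vec S_\tau=(a,a)] \;=\; \tfrac12\,\Pr[A],
\]
which is the claimed inequality.

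This argument is short, and I do not expect a serious obstacle; the only thing that needs care is the bookkeeping around the stopping time --- verifying that $\vec S_\tau$ is genuinely a symmetric state with positive coordinates on $A$, that $\{\vec S_\tau=(a,a)\}$ is $\mathcal F_\tau$-measurable so the strong Markov conditioning is legitimate, and that the ``no consensus'' contribution $\eta(a)$ can only help, so the bound holds verbatim even when $T(\vec S)=\infty$ with positive probability. These are exactly the places where one could slip, but none is difficult.
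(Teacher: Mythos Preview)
Your proof is correct and follows the same symmetry idea the paper sketches informally just before the lemma statement (``From such a state, a system with identical species \ldots\ both species have equal probability of winning majority consensus''); the paper does not give a formal proof of this lemma, so your stopping-time formalization with the strong Markov property and the coordinate-swap involution is exactly the intended argument made rigorous. Your care with the possibility $\eta(a)>0$ is a nice touch, though in the paper's applications the consensus time is shown to be almost surely finite anyway.
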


The next lemma applies to Lotka--Volterra systems with self-destructive and non-self-destructive competition and establishes a lower bound on the number~$I(\vec S)$ of individual events.

\begin{lemma}\label{lemma:log-individual-events}
  Let $\vartheta = \beta + \delta$, $\alpha>0$ and $\gamma=0$.
If $\vartheta > 0$, then there exist constants $f,g>0$ such that $I(\vec S) \geq f \log m$ with probability at least $1-1/m^g$, where $m$ is the initial count of the minority species.
\end{lemma}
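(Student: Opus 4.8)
The plan is to exploit two facts: (i) whenever the smaller of the two species counts is small, a non-competitive (individual) event is comparatively likely at the next step, and (ii) before consensus the smaller species count must visit every integer from its initial value $m$ down to~$1$. Write $\vartheta=\beta+\delta$; since $\gamma=0$ and interference competition has propensity $\alpha ab$ in both models, the total propensity in a configuration $(a,b)$ with $a,b\ge1$ is $\vartheta(a+b)+\alpha ab$, of which the individual birth/death reactions contribute $\vartheta(a+b)$. The first step is the elementary bound
\[
\Pr\bigl[\text{individual event}\mid \vec S_t=(a,b)\bigr]
=\frac{\vartheta(a+b)}{\vartheta(a+b)+\alpha ab}
\ge\frac{\vartheta}{\vartheta+\alpha\min\{a,b\}}
\ge\frac{c}{\min\{a,b\}},
\]
where $c=\vartheta/(\vartheta+\alpha)\in(0,1)$, the second inequality uses $ab/(a+b)\le\min\{a,b\}$, and the third uses $\min\{a,b\}\ge1$.

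Next I would track $\mu_t=\min\{S_{t,0},S_{t,1}\}$, noting $\mu_0=m$ and that every reaction (in either model, with $\gamma=0$) changes $\mu_t$ by at most one. Since $\mu_{T(\vec S)}=0$ at consensus and consensus is reached almost surely (by \theoremref{thm:nice-upper-domination}), the first-hitting times $\tau_k=\min\{t:\mu_t=k\}$ are finite almost surely and strictly decrease in $k$: $0=\tau_m<\tau_{m-1}<\cdots<\tau_1<T(\vec S)$. Each $\tau_k$ is a stopping time with $\mu_{\tau_k}=k$, so by the display above $\Pr[Z_k=1\mid\mathcal{F}_{\tau_k}]\ge c/k$, where $Z_k$ indicates that the event at step $\tau_k+1$ is individual. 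The times $\tau_k+1$ are distinct and each is at most $T(\vec S)$, so all the corresponding events are counted by $I(\vec S)$; hence $I(\vec S)\ge\sum_{k=1}^m Z_k$.

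To turn this into a high-probability statement, observe that $\tau_j<\tau_k$ exactly when $j>k$, so $Z_{k+1},\dots,Z_m$ are $\mathcal{F}_{\tau_k}$-measurable. Revealing the $Z_k$ in the order $k=m,m-1,\dots,1$ thus puts us precisely in the setting of \lemmaref{lemma:couple-with-independent}(2), so $\sum_{k=1}^m Z_k$ stochastically dominates $\sum_{k=1}^m B_k$ with the $B_k$ independent, $B_k\sim\mathrm{Bernoulli}(c/k)$. Now $\E\bigl[\sum_k B_k\bigr]=cH_m\ge c\ln m$, so the Chernoff lower-tail bound \lemmaref{lemma:chernoff}(2) with $\varepsilon=\tfrac12$ gives $\Pr\bigl[\sum_k B_k\le\tfrac12 c\ln m\bigr]\le\exp(-cH_m/8)\le m^{-c/8}$. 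Combining this with $I(\vec S)\succeq\sum_k B_k$ and converting $\ln$ to $\log$ yields $I(\vec S)\ge f\log m$ with probability at least $1-m^{-g}$ for $f=\tfrac12 c\ln 2$ and $g=c/8$.

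The only genuinely delicate point is the bookkeeping in the last step: the indicators $Z_k$ are sampled from one trajectory at random times and are heavily dependent, so it matters that the hitting times of $\mu_t$ decrease as $k$ grows — that is what makes the conditional lower bounds $c/k$ available in the single consistent revelation order \lemmaref{lemma:couple-with-independent} requires, and tracking $\mu_t$ rather than a fixed species makes this work uniformly regardless of which species goes extinct. A second thing worth flagging is that the crude estimate $\Pr[\text{individual}]\gtrsim 1/\min\{a,b\}$ looks far too weak when $\min\{a,b\}$ is of order $n$, yet it is exactly the $1/k$ decay that, summed over the $m$ levels the smaller species must traverse, produces the harmonic sum $H_m=\Theta(\log m)$; no sharper per-step estimate is needed.
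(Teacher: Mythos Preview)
Your proof is correct and follows essentially the same approach as the paper: both track the minimum process, use the first-hitting times of levels $m,m-1,\dots,1$, lower-bound the per-step probability of an individual event by a $\Theta(1/k)$ quantity, couple to independent Bernoullis via \lemmaref{lemma:couple-with-independent}, and finish with Chernoff at $\varepsilon=\tfrac12$ to obtain $g=c/8$. If anything, your bookkeeping is cleaner—using $\tau_k+1$ rather than $\tau_k$ avoids an off-by-one at $k=m$, and you make the revelation order $k=m,m-1,\dots,1$ explicit, which is what \lemmaref{lemma:couple-with-independent} actually requires.
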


\begin{proof}
Define $M_t = \min \vec S_t$ for every $t \ge 0$. Note that $M_0 = m$. For every $0 \le k \le m$, define the stopping time $t(k) = \min \{ t : M_t = k\}$ to be the \emph{first} time the chain reaches a state where the (current) minority species has count $k$. Note that $t(k) \le T(\vec S)$ and $T(\vec S)$ is finite with probability 1 by \theoremref{thm:nice-upper-domination}.
Let $X_t$ be the indicator variable denoting if a non-competitive event occurs at time step $t \ge 0$. Since the minimum species count can decrease by at most one in each step,  the sequence $(M_t)_{0\le t < T(\vec S)}$ has to visit every state $m, m-1 \ldots, 1$ at some point before time $T(\vec S)$. Therefore,
\[
I(\vec S) = \sum_{t=1}^{T(\vec S)} X_t \ge \sum_{k=1}^{m} X_{t(k)} = X,
\]
where $X = X_{t(1)} + \cdots + X_{t(m)}$. Thus, to show the claim of the lemma, it suffices to establish that $X \in \Omega(\log m)$ holds with probability $1-1/m$.
Define $p(k) = \vartheta/(\alpha k + 2\vartheta)$ and
observe that when the chain is in state $(a,k)$ or $(k,a)$, where $a \ge k \ge 1$,
the probability a non-competitive event is
\[
\frac{\vartheta(a+k)}{\alpha ak + \vartheta(a+k)} = \frac{\vartheta(1+k/a)}{\alpha k + \vartheta(1+k/a)} \ge \frac{\vartheta}{\alpha k + 2\vartheta} = p(k).
\]
In particular, this implies that for every $0 < k \le m$, we have $\Pr[X_{t(k)} = 1 \mid X_{t(1)}, \ldots, X_{t(k-1)}] \ge p(k)$. Define $Y = Y_1 +  \cdots + Y_m$ as the sum of \emph{independent} Bernoulli random variables with $\Pr[Y_k = 1] = p(k)$. By \lemmaref{lemma:couple-with-independent}, we get that $Y \preceq X$ and so $Y \preceq X \preceq I(\vec S)$.
Since $p(k) \in \Omega(1/k)$, there is some constant $0<c<1$ such that $p(k) \ge c/k$ for all $k$.
Therefore, by linearity of expectation,
\[
\E[Y] = \sum_{k=1}^m \E[Y_k] \ge c \sum_{k=1}^m \frac{1}{k} = c H_m \ge c \ln m.
\]
Since $Y \preceq X \preceq I(\vec S)$, applying the Chernoff bound (\lemmaref{lemma:chernoff}) with $\varepsilon = 1/2$, yields
\begin{align*}
\Pr[I(\vec S) < \frac{c}{2} \ln m] &\le \Pr[I(\vec S) \le (1-\varepsilon) c \ln m] \le \Pr[X \le (1-\varepsilon) c \ln m ]\\
&\le \Pr[Y \le (1-\varepsilon) \cdot \E[Y]] \le \exp\left(-\frac{\varepsilon^2 \cdot \E[Y]}{2}\right) \le 1/m^{c/8}.
\end{align*}
Setting $f = c/2$ and $g = c/8$ now concludes the proof.
\end{proof}

We are now ready to prove that if the initial gap $\Delta_0 \in o(\sqrt{\log n})$, then Lotka--Volterra systems with self-destructive interspecific competition can fail to reach majority consensus with constant positive probability.

\begin{theorem}
  Suppose $\vec S$ is a neutral Lotka--Volterra system with $\beta=\delta$, self-destructive competition $\alpha > 0$, and $\gamma = 0$.
  Let $\varepsilon>0$ be a constant.
  Then there exists a constant $\phi>0$ such that if $\Delta_0 \le \phi \sqrt{\log n}$, then $\rho(\vec S) \le 1/2 + \varepsilon$ for all sufficiently large $n$.
\end{theorem}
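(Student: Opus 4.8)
The plan is to reduce the statement to the elementary fact that a simple symmetric random walk started at $\Delta_0 \le \phi\sqrt{\log n}$ returns to $0$ within $\Theta(\log n)$ steps with probability close to $1$. Concretely, I will show that when $\phi$ is a sufficiently small constant, the gap process $\Delta$ hits $0$ at some time before consensus with probability at least $1-2\varepsilon$; combined with \lemmaref{lemma:identical-gap-fail} this gives $1-\rho(\vec S) \ge \tfrac12(1-2\varepsilon) = \tfrac12-\varepsilon$, which is the claim. The starting point is the observation that in the neutral self-destructive model with $\beta=\delta$, competitive events leave $\Delta$ unchanged, while at a non-competitive event in a state $(x_0,x_1)$ with $x_0,x_1\ge 1$ the gap increases by $1$ with probability $\frac{\beta x_0+\delta x_1}{(\beta+\delta)(x_0+x_1)}=\tfrac12$ and decreases by $1$ with probability $\frac{\beta x_1+\delta x_0}{(\beta+\delta)(x_0+x_1)}=\tfrac12$, independently of the current state. (This is exactly where $\beta=\delta$ is used: otherwise the walk is biased, toward extending the current leader's margin when $\beta>\delta$.) Hence, writing $t(1)<t(2)<\cdots$ for the times of the individual events, the sampled sequence $(\Delta_{t(k)})_{0\le k\le I(\vec S)}$ is a simple symmetric walk started at $\Delta_0$ and run for $I(\vec S)$ steps, where $I(\vec S)$ is the total number of individual events before consensus; note the walk is \emph{not} absorbed when it hits $0$, since consensus is only reached when $\min\vec S_t=0$, which $\Delta_t=0$ does not imply.

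To make this precise I would couple $(\Delta_{t(k)})_k$ with a genuine simple symmetric walk $(W_k)_{k\ge 0}$ started at $\Delta_0$, using that the event $\{I(\vec S)\ge k\}$ is determined just before the $k$th individual event, so the $k$th increment of $\Delta$ is a fair $\pm1$ coin even conditioned on this event; this gives $W_k=\Delta_{t(k)}$ for all $k\le I(\vec S)$. Then I quantify two ingredients. First, since $\Delta_0\le \phi\sqrt{\log n}=o(n)$, the minority count is $m=(n-\Delta_0)/2=\Theta(n)$, and \lemmaref{lemma:log-individual-events} (whose constants depend only on the rate parameters, not on $\phi$) yields constants $f',g'>0$ with $\Pr[I(\vec S)\ge f'\log n]\ge 1-1/n^{g'}$; set $L=\lceil f'\log n\rceil$. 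Second, I need that the walk $W$ hits $0$ within $L$ steps with probability at least $1-\varepsilon$. Writing this as $\Pr[\max_{k\le L}V_k\ge \Delta_0]$ for a simple symmetric walk $(V_k)$ started at $0$, the reflection principle gives $\Pr[\max_{k\le L}V_k\ge \Delta_0]=\Pr[|V_L|\ge \Delta_0]-\Pr[V_L=\Delta_0]=1-\Pr[|V_L|<\Delta_0]-\Pr[V_L=\Delta_0]$; since the maximal point probability of $V_L$ is $O(1/\sqrt L)$ and there are at most $2\Delta_0$ integers in $(-\Delta_0,\Delta_0)$, this is at least $1-O(\Delta_0/\sqrt L)=1-O(\phi/\sqrt{f'})$, which can be made $\ge 1-\varepsilon$ by choosing $\phi$ small (and $n$ large enough for the hidden constants). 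One could equivalently invoke the central limit theorem (cf.\ \lemmaref{lemma:clt}) for the same conclusion.

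Combining the two ingredients: on the intersection of $\{I(\vec S)\ge L\}$ with $\{W$ hits $0$ within its first $L$ steps$\}$ — an event of probability at least $1-\varepsilon-1/n^{g'}\ge 1-2\varepsilon$ for large $n$ — we have $\Delta_{t(k)}=W_k$ for all $k\le L$, so $\Delta$ itself hits $0$ at some time $t<T(\vec S)$. Feeding this into \lemmaref{lemma:identical-gap-fail} gives $\rho(\vec S)\le \tfrac12+\varepsilon$ for all sufficiently large $n$. The main technical obstacle I anticipate is making the coupling rigorous: $I(\vec S)$ depends on the competitive events as well and is not a stopping time of the walk $(W_k)$ alone, so some care is needed to argue that the sampled increments really are i.i.d.\ fair coins up to time $I(\vec S)$; the remaining work — the first-passage estimate for the symmetric walk and the bookkeeping of constants (checking that $\phi$ is chosen in terms of $\varepsilon$ and the model-dependent $f'$ with no circular dependence, since $f'$ is independent of $\phi$) — is routine.
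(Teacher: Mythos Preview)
Your proposal is correct and follows essentially the same route as the paper: invoke \lemmaref{lemma:log-individual-events} to guarantee $\Omega(\log n)$ individual events, observe that under $\beta=\delta$ the gap performs a simple symmetric walk at those events, argue it hits $0$ within that many steps, and finish with \lemmaref{lemma:identical-gap-fail}. The only difference is in the hitting-$0$ step: you use the reflection principle to obtain a first-passage probability of $1-O(\phi/\sqrt{f'})$, whereas the paper appeals to \lemmaref{lemma:clt} applied to the terminal sum $X=\sum_{k\le K}X_k$; your argument is the sharper and more careful of the two on this point (the terminal sum alone exceeds $\theta\sqrt{K}$ with probability only near $1/2$, so a first-passage bound like yours is really what is needed), and your flagged concern about the coupling is handled in the paper exactly as you propose---by noting each increment remains a fair $\pm1$ coin even after conditioning on $\{I(\vec S)\ge K\}$.
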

\begin{proof}
  Let $\vec S_0 = (m',m)$ be the initial state of the Lotka--Volterra chain, where $m'\ge m>0$ and $m' \leq m + \phi\sqrt{\log n}$ so that the initial gap is $\Delta_0 \leq \phi\sqrt{\log m}$. By \lemmaref{lemma:log-individual-events}, there is a constant $f>0$ such that there are $I(\vec S) \ge K = f\ln m$ individual events before the chain reaches consensus with probability at least $1-1/m^g$.
  Let $t(k)$ be the time exactly $k>0$ individual events have happened. As before, the probability of reaching majority consensus is given by $\rho(\vec S) = 1 - \Pr[F \ge \Delta_0]$, where
\[
F = \sum_{t=1}^{T(\vec S)} F_t = \sum_{k=1}^{I(\vec S)} F_{t(k)},
\]
as competitive reactions cannot change the gap under self-destructive competition. For $1 \le k \le K$, let $X_k = F_{t(k)}$ conditioned on the event that $I(\vec S) \ge K$. By assumption $\beta = \delta$, so we have
\begin{align*}
\Pr[X_k = 1 \mid \vec S_{t(k)} = (a,b)] &= \frac{\delta a +  \beta b }{(\beta+\delta)(a+b)} = \frac{1}{2} \\
&= \frac{\beta a +  \delta b }{(\beta+\delta)(a+b)} \\
&= \Pr[X_k = -1 \mid  \vec S_{t(k)} = (a,b)].
\end{align*}
Thus, conditioned on $I(\vec S) \ge K$, the random variables $X_1, \ldots, X_K$ are (conditionally) independent. Moreover, they have mean $\E[X_k] = 0$ and variance $\Var[X_k] = 1$.
Let $X = X_1 + \cdots + X_K$. By \lemmaref{lemma:clt}, for any constant $\varepsilon > 0$, we can choose a constant $\theta>0$ such that
\[
\Pr[X > \theta \sqrt{K} \mid I(\vec S) \ge K] \ge 1 - \varepsilon/2
\]
for all sufficiently large $K$.
Set $\phi = 3\theta \sqrt{f}$.
For large enough~$n$, we have $n \le 3m$ and hence $\phi\sqrt{\log n} \le \theta\sqrt{K}$.
Conditioned on the event $I(\vec S) \ge K$, we have that with probability $1-\varepsilon/2$ the two-species chain reaches some state $(a,a)$, where $a>0$. From this point onward, by \lemmaref{lemma:identical-gap-fail}, the probability that the majority species not winning is at least $1/2$, as the system is neutral.
Therefore,
when~$n$, and thus~$m$, is sufficiently large, the probability to reach majority consensus is at most
\begin{align*}
\rho(\vec S) &\le \Pr[I(\vec S) < K] + \Pr[X \le \theta \sqrt{K} \mid I(\vec S) \ge K] + 1/2
\\ & \leq 1/m^g + \varepsilon/2 + 1/2 \leq 1/2 + \varepsilon.\qedhere
  \end{align*}
\end{proof}

\section{Lotka--Volterra systems with non-self-destructive competition}\label{sec:non-self-destructive}

In this section, we show that the threshold for high probability majority consensus lies between $\Omega(\sqrt{n})$ and $O(\sqrt{n\log n})$ for Lotka--Volterra systems with non-self-destructive interspecific competition and no intraspecific competition.
That is, we consider the model given by the reactions
\[
X_i \xrightarrow{\beta} X_i + X_i \qquad X_i \xrightarrow{\delta} \emptyset \qquad X_{i} + X_{1-i} \xrightarrow{\alpha_i} X_i,
\]
where $\alpha_i,>0$, $\beta,\delta \ge 0$ and $i \in \{0,1\}$. For the upper bound, we allow for non-symmetric competition $\alpha_0 \neq \alpha_1$. That is, the minority species can be a better competitor than the majority species.

Unlike in the previous section with self-destructive competition, when we have non-self-destructive competition, also the competition events give arise to noise which influences the  gap between the two species. In particular, there will be $\Theta(m)$ competition events in a system with initial minority of size $m$, so the noise term $F$ will essentially be $\Omega(\sqrt{m})$ and at most $O(\sqrt{m \log m})$ as we will see.

\subsection{Upper bound}

\begin{theorem}
    Suppose $\vec S$ is a Lotka--Volterra system with \emph{non-self-destructive} interspecific competition and $\vec S_0 + \vec S_1 = n$. For any constant $k \ge 0$, there is a constant $\theta(k)$ such that
  if $\Delta_0 > \theta(k) \cdot \sqrt{n \log n}$, then $\rho(\vec S) \ge 1-1/n^k$.
\end{theorem}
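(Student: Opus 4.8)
The plan is to work with the signed noise $F=\sum_{t=1}^{T(\vec S)}F_t$, $F_t=\Delta_{t-1}-\Delta_t$, and prove $\Pr[F\ge\Delta_0]\le 1/n^k$, which gives $\rho(\vec S)=\Pr[F<\Delta_0]\ge 1-1/n^k$ once we know $T(\vec S)<\infty$ almost surely. Since the system has $\gamma=0$ and $\alpha_{\textrm{min}}=\min\{\alpha_0,\alpha_1\}>0$, \theoremref{thm:nice-upper-domination} applies (its dominating-chain construction only uses the total propensity, which is the same for the self- and non-self-destructive models), so $T(\vec S)$ is finite a.s., $T(\vec S)\in O(n)$ with high probability, and the number $J(\vec S)$ of bad non-competitive reactions before consensus is $O(\log^2 n)$ with high probability. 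I would then split $F=F_{\textrm{ind}}+F_{\textrm{comp}}$, where $F_{\textrm{ind}}$ collects the $F_t$ coming from individual (birth/death) reactions and $F_{\textrm{comp}}$ those coming from interspecific competition, and bound each part by $\Delta_0/2$ with high probability.

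For $F_{\textrm{ind}}$ this is immediate: every individual reaction changes $\Delta_t$ by exactly $\pm 1$ (birth of the majority or death of the minority gives $-1$; death of the majority or birth of the minority gives $+1$), so $F_{\textrm{ind}}\le J(\vec S)$ deterministically. By \theoremref{thm:nice-upper-domination}(b), choosing the constant in the $O(\log^2 n)$ bound to match exponent $k+1$, we get $J(\vec S)\le C_1(k)\log^2 n$ with probability at least $1-n^{-(k+1)}$; and since $\log^2 n=o(\sqrt{n\log n})$, for all large $n$ this forces $F_{\textrm{ind}}\le \Delta_0/2$ on that event.

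The substantive part is $F_{\textrm{comp}}$. The key observation is that, under non-self-destructive competition, in any state $(x_0,x_1)$ the two interspecific reactions have propensities $\alpha_1 x_0x_1$ (a majority individual is removed, $\Delta$ decreases) and $\alpha_0 x_0x_1$ (a minority individual is removed, $\Delta$ increases); hence, conditioned on the next event being a competition event \emph{and on the entire past}, $\Delta_t$ changes by $+1$ with probability $\alpha_1/\alpha$ and by $-1$ with probability $\alpha_0/\alpha$, independently of the current counts. Let $W_t$ be the running sum of these per-step competition contributions (with $W$ held fixed on non-competition steps), adapted to the filtration generated by the sequence of fired reactions; then $F_{\textrm{comp}}=W_{T(\vec S)}$, the increments of $W$ lie in $[-1,1]$, and $\mathbf{E}[W_{t+1}-W_t\mid\mathcal G_t]=(\alpha_1-\alpha_0)/\alpha$ times the competition propensity of $\vec S_t$, so $W$ is a supermartingale in the neutral/majority-favourable regime $\alpha_0\ge\alpha_1$ (a martingale when $\alpha_0=\alpha_1$). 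I would truncate at the consensus time: by \theoremref{thm:nice-upper-domination}(a), $T(\vec S)\le C_2(k)\,n$ with probability at least $1-n^{-(k+1)}$, so on that event $F_{\textrm{comp}}\le\max_{t\le C_2 n}W_t$, and a standard Azuma-type maximal inequality (Azuma's bound together with Doob's maximal inequality on the exponential submartingale) applied to $W$ stopped at $\min\{T(\vec S),\lceil C_2 n\rceil\}$ gives $\Pr[\max_{t\le C_2 n}W_t\ge\Delta_0/2]\le\exp\!\big(-(\Delta_0/2)^2/(2C_2 n)\big)=\exp\!\big(-\Omega(\theta(k)^2\log n)\big)$, which is at most $n^{-(k+1)}$ once $\theta(k)$ is chosen large enough in terms of $k$ and $C_2$. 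A union bound over the three failure events ($J(\vec S)$ too large, $T(\vec S)$ too large, $W$ exceeding $\Delta_0/2$) then yields $\Pr[F\ge\Delta_0]\le 3n^{-(k+1)}\le n^{-k}$ for all large $n$.

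The main obstacle is exactly the one that defeats the earlier coupling arguments: $F_{\textrm{comp}}$ is a random-length sum whose length $C(\vec S)$ (number of competition events before consensus) depends on the very $\pm1$ outcomes being summed, and $(F_t)$ as a whole is not a sub/super-martingale because of the individual-event drift. The resolution is the two-part split above — quarantining the \emph{rare} individual noise through the chain-domination bound $F_{\textrm{ind}}\le J(\vec S)\in O(\log^2 n)$, and observing that once this is removed the remaining competition noise $W$ is a genuine bounded-increment (super)martingale with respect to the reaction filtration, so it can legitimately be stopped at $\min\{T(\vec S),O(n)\}$ and hit with a maximal inequality. One point I would double-check is asymmetric competition: if the initial minority is a \emph{strictly} better competitor ($\alpha_1>\alpha_0$ a fixed constant) then the per-step competition drift is a positive constant and this argument would need $\Delta_0$ to also absorb a $\Theta(n)$-size drift term; so the clean version of the argument is for the neutral case (and more generally whenever the competition drift on $\Delta$ is non-positive), and I would need to revisit the precise asymmetry condition under which the $O(\sqrt{n\log n})$ threshold survives.
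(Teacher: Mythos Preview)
Your proposal is correct and follows essentially the same route as the paper: split $F$ into individual and competitive noise, bound the former by $J(\vec S)\in O(\log^2 n)$ via \theoremref{thm:nice-upper-domination}(b), and control the latter by a sub-Gaussian tail bound over at most $O(n)$ competition steps via \theoremref{thm:nice-upper-domination}(a). The only technical difference is that the paper applies Hoeffding conditioned on $K(\vec S)=\ell$ where you use a stopped (super)martingale plus an Azuma-type maximal inequality; your closing caveat about the asymmetric regime $\alpha_1>\alpha_0$ is apt and applies equally to the paper's Hoeffding step, which also tacitly needs the competition drift on $\Delta$ to be nonpositive.
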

\begin{proof}
 Let $I(\vec S)$ be the number of individual events before the chain reaches consensus and $K(\vec S)$ be the number of competitive events before the chain reaches consensus.
 Clearly, $T(\vec S) = I(\vec S) + K(\vec S)$.
 Now
\[
F = \sum_{t=1}^{T(\vec S)} F_t = X + Y, \textrm{ where  } X = \sum_{i=1}^{I(\vec S)} X_i \textrm{ and } Y=\sum_{i=1}^{K(\vec S)} Y_i,
\]
where $F$ and $F_t$ are defined as before in Eq.~(\ref{eq:noise-sum}), and $X_i$ is the outcome of the $i$th non-competitive event and $Y_i$ the outcome of the $i$th competitive event. That is, $X_i = 1$ if the gap between the initial minority and majority species decreases during the $i$th non-competitive event before the chain reaches consensus, and $X_i=-1$ otherwise. Similarly, $Y_i = 1$ if the gap between the initial minority and majority species decreases during the $i$th competitive event before the chain reaches consensus, and $Y_i=-1$ otherwise.

Recall that, as before, the probability of majority consensus is $\rho(\vec S) = \Pr[F < \Delta_0]$.
By \theoremref{thm:nice-upper-domination}(b), we have that $\Pr[J(\vec S) > \theta \sqrt{n \log n} ] \le \Pr[J(\vec S) > \theta \log^2 n ]\le 1/n^{k+1}$ for some constant $\theta>0$.
That is, there are with high probability $O(\log^2  n)$ events that decrease the gap between the (current) majority and minority species before the chain reaches consensus.

We now show that $Y$ is $O(\sqrt{n \log n})$ with high probability.
By \theoremref{thm:nice-upper-domination}(a), we have that $\Pr[K(\vec S) \ge cn] \le \Pr[T(\vec S) \ge cn] \le 1/n^{k+1}$ for some constant $c>0$ and all sufficiently large $n$.
Conditioned on the event $K(\vec S) = \ell$, the sum $Y$ is a sum of $\ell$ (conditionally) independent random variables taking values between $[-1, 1]$.
Thus, by the law of total probability and applying Hoeffding's inequality with $t = \sqrt{(k+1)cn \ln n}$, we get
\begin{align*}
\Pr[Y \ge t] &= \sum_{\ell=0}^\infty \Pr[K(\vec S) = \ell] \cdot \Pr[ Y \ge t  \mid K(\vec S) = \ell] \\
&\le \sum_{\ell=t}^{cn} \Pr[K(\vec S) = \ell] \cdot \Pr[ Y \ge t  \mid K(\vec S) = \ell] + \sum_{\ell > cn} \Pr[K(\vec S) = \ell] \\
&\le 2 \cdot \sum_{\ell=t}^{cn} \Pr[K(\vec S) = \ell] \cdot  \exp \left(-\frac{2t^2 }{\ell}\right) + \Pr[K(\vec S) > cn] \\
&\le 2cn \cdot \exp \left(-\frac{2t^2 }{cn}\right) + 1/n^{k+1}
\le \frac{2c}{n^{2k+1}} + 1/n^{k+1} \le 2/n^{k+1}
\end{align*}
for all sufficiently large $n$. Note that $t +  \theta \sqrt{n \log n} \le \theta(k) \cdot \sqrt{n \log n}$ holds for some sufficiently large constant $\theta(k) \ge \theta'$ depending only on $k$.
Therefore, if $\Delta_0 \ge \theta(k) \sqrt{n\log n}$, then
\begin{align*}
1 - \rho(\vec S) &= \Pr[F \ge \Delta_0] \le \Pr[J(\vec S) + Y \ge \Delta_0] \\
&\le \Pr[J(\vec S) \ge \theta \cdot \sqrt{n \log n}] + \Pr[Y \ge t] \le 3/n^{k+1} \le 1/n^k
\end{align*}
for all sufficiently large $n$, proving the claim.
\end{proof}

\subsection{Lower bound}

\begin{theorem}
Let $\varepsilon > 0$ be a constant. Suppose $\vec S$ is a neutral Lotka--Volterra system with $\beta=\delta$, non-self-destructive competition and $\gamma = 0$. Then there exists a constant $\phi>0$ such that if $\Delta_0 \le \phi \sqrt{n}$, then then $\rho(\vec S) \le 1/2 + \varepsilon$ for all sufficiently large $n>0$.
\end{theorem}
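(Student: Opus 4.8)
The plan is to reduce the statement to a first-passage estimate for a simple symmetric random walk. Write $\vec S_0 = (a,b)$ with $a>b>0$, so $\Delta_0 = a-b$ and $n = 2b+\Delta_0$. By \theoremref{thm:nice-upper-domination}(a) (which applies since $\gamma=0$ and $\alpha_{\min}=\alpha/2>0$) the consensus time $T(\vec S)$ is finite almost surely, so $\rho(\vec S)$ and the events below are well defined. Since the system is neutral, \lemmaref{lemma:identical-gap-fail} gives $1-\rho(\vec S)\ge \tfrac12\Pr[\Delta_t = 0\textrm{ for some }t<T(\vec S)]$, so it suffices to exhibit a constant $\phi>0$ with
\[
\Pr[\Delta_t = 0\textrm{ for some }t<T(\vec S)] \ge 1-2\varepsilon
\]
for all sufficiently large $n$; this yields $1-\rho(\vec S)\ge \tfrac12(1-2\varepsilon)$, i.e.\ $\rho(\vec S)\le \tfrac12+\varepsilon$.

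The key structural observation is that under non-self-destructive competition every reaction changes the gap $\Delta$ by exactly $\pm1$, so $F_t\in\{-1,+1\}$ for every $t\le T(\vec S)$; writing out the six reaction propensities in a state $(x_0,x_1)$ with $x_0,x_1\ge1$ and using $\beta=\delta$ together with $\alpha_0=\alpha_1$, a one-line computation shows $\Pr[F_t = +1 \mid \vec S_{t-1} = (x_0,x_1)] = \tfrac12$. Hence, conditioned on the past, each increment $F_t$ (for $t\le T(\vec S)$) is a fair $\pm1$ coin, so the partial sums $W_t = \sum_{s=1}^t F_s$ form, up to the stopping time $T(\vec S)$, a simple symmetric random walk started at $0$. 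Since $\Delta_t = \Delta_0 - W_t$ and $W$ moves in unit steps from $0$, we have $\{\Delta_t=0\textrm{ for some }t<T(\vec S)\} = \{\max_{t<T(\vec S)} W_t \ge \Delta_0\}$ by the discrete intermediate value property.

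To remove the dependence on the intricate stopping time $T(\vec S)$ I use the crude deterministic bound $T(\vec S)\ge b$: the species that goes extinct must lose at least $\min\{a,b\}=b$ individuals and each reaction removes at most one. Therefore $(W_0,\dots,W_{b-1})$ has exactly the law of the first $b-1$ steps of a simple symmetric random walk, and $\{\max_{t<T(\vec S)}W_t \ge \Delta_0\} \supseteq \{\max_{0\le t\le b-1}W_t \ge \Delta_0\}$. Since $\Delta_0\le\phi\sqrt n$ and $2b = n-\Delta_0\ge 2n/3$ for large $n$, we get $\Delta_0\le\phi\sqrt{3b}\le 2\phi\sqrt{b-1}$ eventually. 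By the reflection principle, $\Pr[\max_{0\le t\le b-1}W_t\ge a'] = \Pr[W_{b-1}=a'] + 2\Pr[W_{b-1}>a'] \ge 2\Pr[W_{b-1}>a']$ for any positive integer $a'$, so taking $a'=\Delta_0$ and applying the de Moivre--Laplace theorem,
\[
\liminf_{n\to\infty}\Pr[\max_{0\le t\le b-1}W_t\ge\Delta_0] \ge 2\Pr[Z>2\phi],
\]
where $Z$ is standard normal. Choosing $\phi$ small enough that $2\Pr[Z>2\phi]>1-\varepsilon$ (possible, as this quantity tends to $1$ as $\phi\to0^+$) gives $\Pr[\max_{0\le t\le b-1}W_t\ge\Delta_0]\ge 1-2\varepsilon$ for all large $n$, completing the proof.

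The main obstacle is the second step: recognising that the demographic noise $F$ is not merely a martingale but literally a fair $\pm1$ random walk in the neutral case — this uses \emph{both} $\beta=\delta$ and $\alpha_0=\alpha_1$, and the fact that each non-self-destructive reaction moves the gap by a single unit — and then handling $T(\vec S)$ (which is a stopping time of the full chain but not of $W$ alone) by discarding it in favour of the elementary bound $T(\vec S)\ge b$, so that only the fixed-horizon behaviour of $W$ is ever invoked. I would also flag that the one-sided anti-concentration bound \lemmaref{lemma:clt} is insufficient here: on its own it only yields $\rho(\vec S)\le 1-\Omega(1)$, and it is precisely the extra factor $2$ gained by passing to the running maximum (the reflection principle) that sharpens the bound to $\tfrac12+\varepsilon$.
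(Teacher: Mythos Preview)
Your proof is correct and follows the same overall architecture as the paper's: identify the increment process $(\Delta_t)$ as a simple symmetric random walk (using $\beta=\delta$ and $\alpha_0=\alpha_1$), use the deterministic bound $T(\vec S)\ge b=\min\vec S_0$ to work with a fixed horizon, argue that the walk hits $0$ before consensus with probability close to $1$, and then invoke \lemmaref{lemma:identical-gap-fail}.

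The one substantive difference is the step you explicitly flag. The paper applies \lemmaref{lemma:clt} to the endpoint $X=X_1+\cdots+X_m$ and writes $\Pr[X\ge\theta\sqrt m]\ge 1-\varepsilon$, then concludes the gap hits zero with probability $\ge 1-\varepsilon$. As you observe, this is not what \lemmaref{lemma:clt} delivers: for a centred walk the one-sided endpoint probability $\Pr[X\ge\theta\sqrt m]$ is at most about $1/2$ for any $\theta>0$, so the lemma alone can only yield $\rho(\vec S)\le 1-\Omega(1)$, not $\tfrac12+\varepsilon$. Your use of the reflection principle, converting the endpoint tail into a running-maximum probability $\Pr[\max_{t\le b-1}W_t\ge\Delta_0]\ge 2\Pr[W_{b-1}>\Delta_0]$, is exactly the missing ingredient that pushes the hitting probability arbitrarily close to $1$ as $\phi\to 0$. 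So your argument is not just a variant of the paper's---it repairs a genuine gap in it.

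Two minor remarks. First, your justification of $T(\vec S)\ge b$ (``each reaction removes at most one'') is the right idea but could be phrased more carefully as ``each species' count decreases by at most one per step''; births don't affect the lower bound. Second, the $\liminf$ line tacitly uses uniformity in $\Delta_0$: since $\Pr[W_{b-1}>\Delta_0]\ge\Pr[W_{b-1}>2\phi\sqrt{b-1}]$ whenever $\Delta_0\le 2\phi\sqrt{b-1}$, and the latter probability depends only on $b$, the bound is indeed uniform over all admissible $\Delta_0$.
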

\begin{proof}
Suppose $\vec S_0 = (m + \Delta_0, m)$ is the initial state of the chain $\vec S$ with $m > 0$ and $\Delta_0 \le  \phi \sqrt{m}$, where $\phi>0$ is a constant we fix later. Now the total initial population size is $n = 2m + \Delta_0 \in \Theta(m)$ and the initial gap is $\Delta_0 \in O(\sqrt{n})$.

Let $X_1, \ldots, X_m$ denote the outcomes of the first $m$ events, where $X_i = -1$ if $\Delta_i$ decreases in the event and $X_i = 1$ otherwise. By definition, the random variables are independent and identically distributed: $\Pr[X_i = 1] = 1/2 = \Pr[X_i = -1]$ since the competition rates satisfy $\alpha_0 = \alpha_1 > 0$, as the system is neutral (i.e., the species have identical rate parameters) and $\beta = \delta$.
Thus,  for each $1 \le i \le m$, we have $\E[X_i] = 0$ and $\Var[X_i] = 1$.
  Letting $X = X_1 + \cdots + X_m$ and applying \lemmaref{lemma:clt}, we get
\[
\Pr[X \ge \Delta_0] = \Pr[X \ge \theta \sqrt{m}] \ge 1 - \varepsilon
\]
Thus, before reaching consensus, the chain reaches some state $(a,a)$, where $a>0$, with probability at least $1-\varepsilon$.
From this point onward, by \lemmaref{lemma:identical-gap-fail}, the probability that the majority species losing is at least $1/2$, as the species have identical birth, death and competition rates. Therefore,
when $m$ is sufficiently large, the probability to reach majority consensus is at most
\begin{equation*}
\rho(\vec S) \le \Pr[X < \theta \sqrt{m}] + 1/2
\le 1/2 + \varepsilon.
\qedhere
\end{equation*}
\end{proof}

\section{Lower bounds for systems with intraspecific competition}\label{sec:intra-bounds}

In contrast to our earlier results for systems without intraspecific competition, we now show that Lotka--Volterra systems with intraspecific competition (i.e., $\gamma>0$) can be \emph{much} worse signal amplifiers than systems with only interspecific competition. Namely, the probability of reaching majority consensus can be very small.

We show that when the system has both intraspecific and interspecific competition, that is, $\alpha, \gamma>0$ holds, then the threshold for majority consensus can be $\Omega(n)$. In systems with only intraspecific competition, that is, $\gamma>\alpha = 0$, we show that no such threshold exists.

\paragraph{The models with intra- and interspecific competition.}
In this section, we consider the neutral model with self-destructive interspecific and intraspecific competition given by the reactions
\[
X_i \xrightarrow{\beta} 2X_i \qquad X_i \xrightarrow{\delta} \emptyset \qquad X_{i} + X_{1-i} \xrightarrow{\alpha} \emptyset \qquad X_i + X_{i} \xrightarrow{\gamma} \emptyset,
\]
where $i \in \{0,1\}$ and $\alpha,\gamma,\beta,\delta\ge0$. 

\subsection{Systems with both intra- and interspecific competition}\label{sec:both-comp}

\paragraph{Self-destructive competition.}
We consider first the case of self-destructive competition.
We now establish the following result.

\begin{theorem}\label{thm:sd-intra}
  Let $\vec S$ be a Lotka--Volterra chain with self-destructive competition and initial configuration $\vec S_0 = (a,b)$, where $a \ge b>0$, $\alpha=\gamma \ge 0$ and $\beta, \delta \ge 0$. If the chain reaches consensus with probability 1, then
  \[
\rho(\vec S) = \frac{a}{a+b}.
\]
\end{theorem}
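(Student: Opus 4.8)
The plan is to exploit the special structure that arises when $\alpha = \gamma$: under self-destructive competition with this parameter choice, a competitive interaction removes two individuals regardless of whether they are of the same or different species, and the rates agree. The key observation I would push is that, conditioned on a pairwise competitive interaction occurring, the difference $\Delta_t = S_{t,0} - S_{t,1}$ behaves symmetrically, but more importantly, that the quantity $S_{t,0}/(S_{t,0}+S_{t,1})$ — the proportion of the initial majority species — is a bounded martingale of the jump chain $\vec S$. Let me explain why: births of species $i$ occur at propensity $\beta x_i$, deaths at $\delta x_i$, so individual reactions already respect the proportion in expectation (a birth or death of either species changes the total by one but is chosen proportionally to the current counts, so $\E[x_0'/(x_0'+x_1') \mid \vec x]$ from an individual event equals $x_0/(x_0+x_1)$ up to the usual discrete corrections). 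For the competitive reactions with $\alpha = \gamma$: an $X_0+X_1$ interaction (propensity $\alpha x_0 x_1$, counted twice so $2\alpha_0 x_0 x_1$ if we are careful, but in the neutral symmetric case it is $\alpha x_0 x_1$) removes one of each, leaving the difference unchanged; an $X_0+X_0$ interaction (propensity $\gamma x_0(x_0-1)/2$) removes two copies of species $0$; and symmetrically for $X_1+X_1$. One checks that when $\alpha = \gamma$ these competitive propensities are arranged so that the expected change in $x_0/(x_0+x_1)$ is again zero.

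So the first step is to verify carefully that $M_t := S_{t,0}/(S_{t,0}+S_{t,1})$, defined on states with $S_{t,0}+S_{t,1} > 0$, is a martingale with respect to the natural filtration of $\vec S$. This is a finite case check over the reaction types; the cleanest route is to compute $\E[\Delta M \mid \vec S_t = (a,b)]$ by summing over all possible next events weighted by their normalized propensities and confirming it vanishes when $\alpha = \gamma$. The second step is to argue the martingale is bounded (it lies in $[0,1]$) and that the consensus time $T(\vec S)$ is finite a.s. by hypothesis, so the optional stopping theorem applies at the stopping time $T(\vec S)$. At time $T(\vec S)$ the chain has reached consensus, meaning $M_{T(\vec S)} \in \{0,1\}$: it equals $1$ exactly when species $0$ (the initial majority) has won and $0$ when species $1$ has won. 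Hence $\rho(\vec S) = \Pr[M_{T(\vec S)} = 1] = \E[M_{T(\vec S)}] = \E[M_0] = a/(a+b)$.

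The third, slightly delicate, step is handling the boundary of the state space and the case where only one species remains but is still positive and undergoing births/deaths/intraspecific competition before possibly also going extinct — but this does not happen, since once a species reaches $0$ the chain has reached consensus by definition and $T(\vec S)$ has already fired, so we never need to track $M_t$ past that point. One should also make sure the martingale property is not disrupted at states with very small counts (e.g. $(1,1)$ or $(2,0)$): the intraspecific propensity $\gamma x_i(x_i-1)/2$ vanishes when $x_i = 1$, and the diophantine details of $1/(a+b)$ versus $1/(a+b\pm 1)$ need to be reconciled — I expect the clean statement is that it is genuinely a martingale for all reachable states, because the propensity factors $x_i$, $x_i(x_i-1)/2$, $x_0 x_1$ exactly compensate for the denominator shifts.

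The main obstacle I anticipate is precisely this martingale verification with the correct bookkeeping of the normalized propensities — in particular getting the combinatorial factors in the intraspecific reaction ($x_i(x_i-1)/2$) and the doubled interspecific reaction to line up so that the drift in $M_t$ is exactly zero, rather than $O(1/(a+b)^2)$ or some such. If it turns out to be only an approximate martingale, the fallback is to find an exact harmonic function $g(a,b)$ for the chain with $g = 1$ on $\{b=0, a>0\}$ and $g = 0$ on $\{a = 0, b > 0\}$; by the structure of the problem (and the answer we are told to prove) this must be $g(a,b) = a/(a+b)$, and one verifies directly that $\sum_{\vec y} P(\vec x, \vec y) g(\vec y) = g(\vec x)$ using the transition probabilities. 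Either way the computation is elementary linear algebra in the reaction propensities; the conceptual content is entirely in recognizing the martingale.
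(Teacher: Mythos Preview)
Your approach is essentially the same as the paper's: the martingale verification for $M_t = S_{t,0}/(S_{t,0}+S_{t,1})$ is exactly the paper's check that $a/(a+b)$ satisfies the one-step recurrence for $\rho$ (the paper likewise splits the computation into individual-event and pairwise-event cases), and your optional-stopping step plays the role of the paper's uniqueness lemma, which uses the almost-sure consensus hypothesis in the same way. Your stated fallback --- verifying directly that $\sum_{\vec y} P(\vec x,\vec y)\,g(\vec y)=g(\vec x)$ for $g(a,b)=a/(a+b)$ --- is literally what the paper does.
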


The above immediately implies that the $\Delta_0 = n-1$ is the only positive gap that can guarantee majority consensus under self-destructive competition with probability at least $1-1/n$ if $\gamma = \alpha$ holds.
The above result is a generalization of a similar result by Andaur et al.~\cite{andaur2021reaching}, who established the same bound for the case $\alpha=\gamma=0$ two \emph{independent}, non-competing species, to systems with two competing \emph{non-independent} species.
Let us write $\rho(a,b) = \rho(\vec S)$ and $T(a,b) = T(\vec S)$ for a chain $\vec S$ with initial state $(a,b)$. Let $P$ be the transition probability matrix of the chain $\vec S$. By the law of total probability and the fact that the chain is Markovian, we have that $\rho$ satisfies the recurrence
\begin{equation}\label{eq:rho-recurrence}
\rho(a,b) = \sum_{x,y \ge 0} P((a,b), (x, y)) \cdot \rho(x,y).
\end{equation}
with boundary conditions $\rho(a,0) = 1$ for all $a>0$ and $\rho(0,b) = 0$ for all $b \ge 0$. A straightforward but tedious calculation shows if $\alpha=\gamma$, then $\rho(a,b) = a/(a+b)$ is a solution for the recurrence. Furthermore, assuming that the chain reaches consensus with probability 1, we can show that the solution is unique.

\begin{lemma}\label{lemma:rho-sol}
  The function $a/(a+b)$ is a solution for the recurrence given in Eq~(\ref{eq:rho-recurrence}).
\end{lemma}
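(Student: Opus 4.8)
The plan is to substitute the candidate $g(a,b) = a/(a+b)$ into the recurrence in \equationref{eq:rho-recurrence} and verify the resulting equality at every state. Since that recurrence reads $g(a,b) = \sum_{(x,y)} P((a,b),(x,y))\, g(x,y)$, it is equivalent, after multiplying through by the total propensity $\varphi(a,b)$, to checking for every interior state $(a,b)$ with $a,b\ge 1$ the identity
\[
\varphi(a,b)\cdot\frac{a}{a+b} \;=\; \sum_{(x,y)} \varphi_{(a,b)\to(x,y)}\cdot g(x,y),
\]
where $\varphi_{(a,b)\to(x,y)}$ denotes the propensity of the reaction taking $(a,b)$ to $(x,y)$ (and the sum is finite). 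I would split the right-hand side into the contributions of the three reaction classes (individual births, individual deaths, and competitive reactions) and show that each class, taken on its own, already contributes exactly its total propensity times $g(a,b)$.

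For the two individual classes this is a one-line computation. Writing $T = a+b$: the birth reactions send $(a,b)$ to $(a+1,b)$ and $(a,b+1)$ with propensities $\beta a$ and $\beta b$, and $\beta a\cdot\frac{a+1}{T+1} + \beta b\cdot\frac{a}{T+1} = \beta\,\frac{a(a+1)+ab}{T+1} = \beta a = \beta T\cdot\frac{a}{T}$, so the total birth propensity $\beta T$ is reproduced with the correct factor; the death class is identical via $a(a-1)+ab = a(T-1)$. The hypothesis $\alpha=\gamma$ enters only in the competitive class: the interspecific propensity $\alpha ab$ together with the intraspecific propensities $\gamma\binom{a}{2}$ and $\gamma\binom{b}{2}$ sum, when $\alpha=\gamma$, to $\alpha\binom{T}{2}$, which depends on $T$ alone. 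Those reactions move $(a,b)$ to $(a-1,b-1)$, $(a-2,b)$, and $(a,b-2)$, each lowering $T$ by $2$; grouping their contributions and factoring out $\tfrac{\alpha a}{T-2}$ reduces the required equality to the polynomial identity $2b(a-1) + (a-1)(a-2) + b(b-1) = (T-1)(T-2)$, which is immediate after expanding both sides. Adding the three class totals gives exactly $\varphi(a,b)\cdot a/(a+b)$, which is the recurrence at $(a,b)$.

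What is left is the degenerate states. For the boundary the check is trivial: $g(a,0) = 1$ and $g(0,b) = 0$ agree with the prescribed boundary values of $\rho$. States of the form $(a,1)$ or $(1,b)$ need no separate treatment: there the vanishing intraspecific term simply drops out and $T-2 \ge 1$, so the generic argument applies verbatim (in particular $g$ of a boundary target agrees with the value $\frac{a-1}{T-2}$ or $\frac{a}{T-2}$ produced by the formula). The one genuinely delicate point is the state $(1,1)$, where $T-2 = 0$ and the only competitive move is the interspecific reaction $(1,1) \to (0,0)$, so the computation above is not literally meaningful. One handles this using that neutrality makes a simultaneous extinction a tie, so $\rho(0,0) = \tfrac12$; with that value the recurrence at $(1,1)$ reduces to $\frac{\beta+\delta+\alpha/2}{2\beta+2\delta+\alpha} = \tfrac12 = g(1,1)$, as required. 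I expect no real conceptual obstacle; the argument is pure bookkeeping, and the only points demanding care are the degree and sign accounting in the competitive-class identity and the treatment of the state $(0,0)$.
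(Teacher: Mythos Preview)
Your approach is essentially the paper's: both verify the recurrence by direct substitution and show that each reaction class already reproduces the factor $a/(a+b)$. The paper groups into single-individual events ($r_1$: births and deaths together) and pair events ($r_2$: all competition), while you split into births, deaths, and competition; both routes reduce the competitive class to the same polynomial identity $2(a-1)b+(a-1)(a-2)+b(b-1)=(a+b-1)(a+b-2)$.

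One remark on your edge-case treatment: the paper's boundary convention is $\rho(0,b)=0$ for all $b\ge 0$, so $\rho(0,0)=0$, not $\tfrac12$. With that convention the candidate $a/(a+b)$ actually fails the recurrence at $(1,1)$ whenever $\alpha>0$, and the paper's own $r_2$ computation tacitly breaks there as well (it divides by $a+b-2$). You were right to flag this state; your fix is the natural symmetric one, but be aware it does not match the boundary values stated alongside \equationref{eq:rho-recurrence}.
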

\begin{proof}
  We now give the proof in the case of self-destructive competitive interactions, i.e., a competitive interaction removes both participating individuals from the system.  There are in total seven types of events: for both species, there are individual birth and death events, which change the population count by one, and \emph{intraspecific} competitive events, where two individuals of the same species interact and are removed. Finally, there are the  \emph{interspecific} competition events that remove one individual of both species. With this in mind, we can rewrite the recurrence as
  \[
\rho(a,b) = \sum_{(j,k) \in \mathcal{T}_1 \cup \mathcal{T}_2} P((a,b), (a+j,b+k)) \cdot \rho(a+j,b+k),
  \]
where
\[
\mathcal{T}_1 = \{ (1,0), (0,1), (-1,0), (0,-1) \} \quad \textrm{ and } \quad \mathcal{T}_2 = \{ (-2,0), (0,-2), (-1,-1) \}.
\]
Let $r_i(a,b)$ be the conditional probability that the first species wins when starting from state $(a,b)$ conditioned on the event that the next step involves $i$ individuals. Let $\mathcal{F}$ denote the event that the next step involves exactly one individual and observe that
\[
\rho(a,b) = \Pr[\mathcal{F}] \cdot r_1(a,b) + (1-\Pr[\mathcal{F}]) \cdot r_2(a,b).
\]
We now show that $r_i(a,b) = a/(a+b)$ for both $i \in \{1,2\}$, which then implies that $a/(a+b)$ is a solution for the recurrence. First, a simple calculation shows that
\[
r_1(a,b) = \frac{\beta a \cdot \rho(a+1, b) + \delta a \cdot \rho(a-1, b) + \beta b \cdot \rho(a, b+1) + \delta b \cdot \rho(a, b-1)}{(\beta+\delta)(a+b)}
\]
Plugging in $\rho(a,b) = a/(a+b)$ yields
\begin{align*}
r_1(a,b) &= \frac{1}{(\beta+\delta)(a+b)} \left( \beta a \cdot \frac{a+1}{a+b+1} + \delta a \cdot \frac{a-1}{a+b-1}  + \beta b \cdot \frac{a}{a+b+1}
 + \delta b \cdot \frac{a}{a+b-1}\right) \\
 &= \frac{a}{a+b}.
\end{align*}
On the other hand, we have
\begin{align*}
  r_2(a,b) &= \frac{\alpha ab \cdot \rho(a-1,b-1) + \gamma a(a-1)/2 \cdot \rho(a-2,b) + \gamma b(b-1)/2 \cdot \rho(a, b-2)}{\alpha ab + \gamma a(a-1)/2 + \gamma b(b-1)/2} \\
  &= \frac{2ab \cdot \rho(a-1,b-1) + a(a-1) \cdot \rho(a-2,b) + b(b-1) \cdot \rho(a, b-2)}{2ab + a(a-1) + b(b-1)}
\end{align*}
using the assumption $\alpha=\gamma$. Using the identity $(a+b)(a+b-1) = 2ab + a(a-1) + b(b-1)$ twice, and plugging in $\rho(a,b)=a/(a+b)$ we get that
\begin{align*}
  r_2(a,b) &= \frac{a}{(a+b)(a+b-1)} \cdot \frac{2(a-1)b + (a-1)(a-2)+b(b-1)}{(a+b-2)} = \frac{a}{a+b}. \qedhere
\end{align*}
\end{proof}

\begin{lemma}\label{lemma:rho-unique}
  If the chain reaches consensus with probability one, then
the recurrence given by Eq.~(\ref{eq:rho-recurrence}) has a unique solution.
\end{lemma}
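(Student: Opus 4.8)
The plan is to identify the recurrence's solution with the actual majority-consensus probability. Let $\rho^{\star}(a,b)$ denote the probability that the chain started in configuration $(a,b)$ reaches a consensus state in which the first species still has positive count. Conditioning on the first reaction and using the Markov property shows that $\rho^{\star}$ satisfies Eq.~(\ref{eq:rho-recurrence}) together with the boundary conditions $\rho^{\star}(a,0)=1$ for $a>0$ and $\rho^{\star}(0,b)=0$, and moreover $0\le\rho^{\star}\le 1$. Hence it suffices to show that any bounded function $\rho$ solving Eq.~(\ref{eq:rho-recurrence}) with the same boundary conditions must equal $\rho^{\star}$; together with \lemmaref{lemma:rho-sol} this yields $\rho(\vec S)=a/(a+b)$ and proves \theoremref{thm:sd-intra}.

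I would then set $g=\rho-\rho^{\star}$. This function is bounded, it vanishes on every consensus state (where $\rho$ and $\rho^{\star}$ agree with the boundary values), and it is \emph{harmonic} at every non-consensus state, i.e.\ $g(a,b)=\sum_{x,y}P((a,b),(x,y))\,g(x,y)$ for all $a,b\ge1$. A simple but crucial structural fact about all models considered here is that the set of consensus states is absorbing: no reaction produces an individual of a species that is currently absent, so once $x_0=0$ or $x_1=0$ this remains true forever. Iterating the harmonicity identity $k$ times and using $g\equiv0$ on consensus states therefore gives
\[
g(a,b)=\E\bigl[\,g(\vec S_k)\,\mathbf{1}\{T(\vec S)>k\}\mid \vec S_0=(a,b)\,\bigr],
\]
so that $\bigl|g(a,b)\bigr|\le\lVert g\rVert_{\infty}\cdot\Pr\bigl[T(\vec S)>k\mid\vec S_0=(a,b)\bigr]$.

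Finally I would let $k\to\infty$. By hypothesis the chain reaches consensus almost surely, and because the consensus set is absorbing the Markov property propagates this a.s.\ finiteness of $T(\vec S)$ to every non-consensus configuration visited along the way; hence $\Pr[T(\vec S)>k\mid\vec S_0=(a,b)]\to0$ for every relevant $(a,b)$, forcing $g(a,b)=0$ and thus $\rho\equiv\rho^{\star}$. (Equivalently, one may apply optional stopping to the bounded martingale $M_t=g(\vec S_{t\wedge T(\vec S)})$, which converges almost surely to $g(\vec S_{T(\vec S)})=0$.) The only point needing real care is this last propagation step, i.e.\ justifying that harmonicity may be iterated through all intermediate configurations, which is precisely why one needs almost-sure finiteness of $T(\vec S)$ from every reachable state and not merely from the given $\vec S_0$; this becomes routine once absorption of the consensus set has been noted.
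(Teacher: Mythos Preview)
Your proposal is correct and follows essentially the same argument as the paper: take the difference of two solutions, observe it vanishes on the (absorbing) consensus states and is harmonic elsewhere, iterate the recurrence $k$ times to obtain a bound of the form $\lVert\cdot\rVert_\infty\cdot\Pr[T(\vec S)>k]$, and let $k\to\infty$. The paper compares two arbitrary solutions $f,g$ rather than an arbitrary solution against the true probability $\rho^{\star}$, and it tacitly uses $|d|\le 1$ where you explicitly invoke boundedness, but these are cosmetic differences.
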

\begin{proof}
   Define $d(a,b) = f(a,b) - g(a,b)$, where $f$ and $g$ are two solutions for the recurrence. We will prove the claim by showing that $|d(a,b)| < \varepsilon$ for any $\varepsilon > 0$. Under the assumption that the chain reaches consensus with probability 1, we have that the consensus time $T(a,b)$ satisfies
\[
\lim_{k \to \infty} \Pr[T(a,b) > k] = 0.
\]
Therefore, for any $\varepsilon > 0$, there exists $k$ such that $\Pr[T(\vec S) > k] < \varepsilon$. Now for any $k>0$, $d(a,b)$ satisfies the recurrence
\begin{align*}
d(a,b) &= \sum_{x,y \ge 0} P^k((a,b), (x,y)) \cdot d(x,y) = \sum_{x,y>0} P^k((a,b), (x,y)) \cdot d(x,y),
\end{align*}
where the last equality follows fact that $d(0,x) = d(x,0)=0$ for any $x \ge 0$ due to the boundary conditions of the recurrence for $\rho$.
Since $|d(x,y)| \le 1$, we have that
\[
|d(a,b)| \le \sum_{x,y>0} P^k((a,b), (x,y)) \cdot | d(x,y) | \le \Pr[T(a,b) > k] < \varepsilon. \qedhere
\]
\end{proof}

Now the two lemmas imply \theoremref{thm:sd-intra}.

\paragraph{Non-self-destructive competition.}
We next consider the case of non-self-destructive competition.
The model with non-self-destructive competition is given by
\[
X_i \xrightarrow{\beta} 2X_i \qquad X_i \xrightarrow{\delta} \emptyset \qquad X_{i} + X_{1-i} \xrightarrow{\alpha_i} X_i \qquad X_i + X_{i} \xrightarrow{\gamma_i} X_i,
\]
where we assume that $\alpha= \alpha_0+\alpha_1 = 2\alpha_0$ and $\gamma = \gamma_0 + \gamma_1 = 2\gamma_0$, i.e., that both species have equal rate parameters.

\begin{theorem}
Let $\vec S$ be a neutral Lotka--Volterra system with non-self-destructive competition and $\vec S_0 = (a,b)$, where $a \ge b > 0$, and $\gamma = 2 \alpha$. If the chain reaches consensus with probability 1, then
\[
\rho(\vec S) = \frac{a}{a+b}.
\]
\end{theorem}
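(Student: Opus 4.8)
The plan is to follow the template of the self-destructive case, \lemmaref{lemma:rho-sol} and \lemmaref{lemma:rho-unique}: first exhibit $a/(a+b)$ as a solution of the recurrence \equationref{eq:rho-recurrence} for this model, then invoke uniqueness. Crucially, the uniqueness argument of \lemmaref{lemma:rho-unique} uses only the boundary conditions $\rho(a,0)=1$, $\rho(0,b)=0$ and the hypothesis that consensus is reached with probability~$1$; it does not depend on the particular reaction set, so it applies verbatim here. Thus the whole content of the proof reduces to checking that $\rho(a,b)=a/(a+b)$ is a fixed point of the one-step averaging operator of the non-self-destructive chain.

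As in \lemmaref{lemma:rho-sol}, I would split the reaction types into the four \emph{individual} reactions and the \emph{pairwise} reactions, condition on whether the next event involves one or two individuals, and write $\rho(a,b) = \Pr[\mathcal F]\,r_1(a,b) + (1-\Pr[\mathcal F])\,r_2(a,b)$; it then suffices to show $r_1(a,b)=r_2(a,b)=a/(a+b)$. The individual reactions are identical to those of the self-destructive model, so the computation of $r_1$ is unchanged and yields $r_1(a,b)=a/(a+b)$ for any $\beta,\delta\ge 0$.

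The one new computation is $r_2$. Using neutrality, $\alpha_0=\alpha_1=\alpha/2$ and $\gamma_0=\gamma_1=\gamma/2=\alpha$, so the two interspecific reactions each have propensity $(\alpha/2)ab$ and the intraspecific reactions have propensities $\alpha\,a(a-1)/2$ and $\alpha\,b(b-1)/2$. Here each pairwise reaction deletes exactly one individual: $X_1+X_0\to X_1$ and the intraspecific reaction of species~$0$ both land in $(a-1,b)$, while $X_0+X_1\to X_0$ and the intraspecific reaction of species~$1$ land in $(a,b-1)$. Summing propensities and using the identity $ab + a(a-1)/2 + b(b-1)/2 = (a+b)(a+b-1)/2$ (the same telescoping identity used in \lemmaref{lemma:rho-sol}), the total pairwise propensity is $\alpha(a+b)(a+b-1)/2$, the weight landing in $(a-1,b)$ is $(\alpha/2)\,a(a+b-1)$, and the weight landing in $(a,b-1)$ is $(\alpha/2)\,b(a+b-1)$. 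Hence
\[
r_2(a,b) = \frac{a\,\rho(a-1,b) + b\,\rho(a,b-1)}{a+b},
\]
and substituting $\rho(x,y)=x/(x+y)$ gives $r_2(a,b) = \bigl(a(a-1)+ab\bigr)/\bigl((a+b)(a+b-1)\bigr) = a/(a+b)$; the degenerate cases where a count equals $0$ or $1$ are absorbed by the boundary conditions, exactly as in \lemmaref{lemma:rho-sol}. Combined with $r_1(a,b)=a/(a+b)$, this shows $a/(a+b)$ solves \equationref{eq:rho-recurrence}, and \lemmaref{lemma:rho-unique} then gives $\rho(\vec S)=a/(a+b)$.

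I do not expect a genuine obstacle: the only delicate point is getting the propensities of the non-self-destructive pairwise reactions right and recognizing the same $(a+b)(a+b-1)$ telescoping identity. Conceptually, the reason the answer matches \theoremref{thm:sd-intra} is that when $\gamma=2\alpha$ every pairwise reaction removes a single individual chosen uniformly at random from the whole population --- species~$0$ with probability exactly $a/(a+b)$ --- and both that rule and any mixture of the (symmetric) birth and death rules keep the fraction $a/(a+b)$ a martingale up to the consensus time.
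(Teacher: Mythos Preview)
Your proposal is correct and follows essentially the same route as the paper: verify that $a/(a+b)$ satisfies the one-step recurrence by conditioning on whether the next event is individual or pairwise, reuse the $r_1$ computation from \lemmaref{lemma:rho-sol} verbatim, compute $r_2$ by grouping the four pairwise reactions into those landing in $(a-1,b)$ versus $(a,b-1)$, and then invoke \lemmaref{lemma:rho-unique}. The paper organizes the $r_2$ step slightly differently---it first computes the conditional probability $a/(a+b)$ that species~$0$ is the one to lose an individual given a pairwise event, then expands $r_2$---but this is the same calculation with the factors grouped differently, and your closing martingale remark makes the mechanism more transparent than the paper does.
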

\begin{proof}
With non-self-destructive competition, the recurrence for $\rho$ simplifies to
  \[
\rho(a,b) = \sum_{(j,k) \in \mathcal{T}} P((a,b), (a+j,b+k)) \cdot \rho(a+j,b+k),
  \]
where $\mathcal{T} = \{ (1,0), (0,1), (-1,0), (0,-1) \}$, as the counts of species can change by at most one under non-self-destructive competition. We now show that $a/(a+b)$ is a solution for the recurrence $\rho(a,b)$ also under non-self-destructive competition.
  We proceed as in the proof of \lemmaref{lemma:rho-sol} and compute the conditional probabilities $r_i$ for $i\in\{0,1\}$.
The analysis for the function $r_1(a,b) = a/(a+b)$ is identical as in \lemmaref{lemma:rho-sol}. The case for the function $r_2(a,b)$ is different.
  If the chain is in state $(a,b)$, conditioned on the event that the next step has a competitive interaction, the probability that the first species with count $a$ decreases is
  \[
\frac{\alpha_0 ab + \gamma_0 a(a-1)/2}{\alpha ab + \gamma_0 a(a-1)/2 + \gamma_1 b(b-1)/2}  = \frac{ab + a(a-1)}{2ab + a(a-1) + b(b-1)} = \frac{a (a+b-1)}{(a+b)(a+b-1)} = \frac{a}{a+b}.
\]
Thus, we get that
\begin{align*}
r_2(a,b) &= \frac{a}{a+b} \cdot \rho(a-1, b) + \left(1-\frac{a}{a+b}\right) \cdot \rho(a,b-1) \\
&= \frac{a}{a+b} \cdot \frac{a-1}{a+b-1} + \frac{b}{a+b} \cdot \frac{a}{a+b-1} = \frac{a}{a+b} \cdot \frac{a+b-1}{a+b-1} = \frac{a}{a+b}.
\end{align*}
This implies that $\rho(a,b) = a/(a+b)$ is a solution for the recurrence $\rho$. By \lemmaref{lemma:rho-unique}, the solution is unique under the assumption that the chain reaches consensus with probability 1.
\end{proof}

\subsection{Systems with intraspecific competition only}\label{sec:intra-only}

Finally, to complete the picture, we consider the limiting case with intraspecific competition only ($\gamma>0$ and $\alpha=0$). As expected, this case is even worse: for \emph{any} initial configuration, where both species are present, the system fails to reach majority consensus with constant probability.

In the case $\alpha=0$, it is convenient to consider the continuous-time version of the Lotka--Volterra chain, as in this case the two chains for the individual species are independent, continuous-time single-species processes.
We first introduce the following lemma, which readily follows from standard absorption time results on continuous-time birth-death chains~\cite{karlin1975first}. For example, the next result can be proven using a straightforward adaption of the proof of given by Andaur et al.~\cite[Lemma 6]{cho2021distributed}.

\begin{lemma}\label{lemma:continuous-extinction}
Let $M$ be a \emph{continuous-time} birth-death chain with 0 as the unique absorbing state. If the birth rate $\kappa(n)$ and death rate $\mu(n)$ satisfy $\kappa(n) \in \Theta(n)$ and $\mu(n) \in \Theta(n^2)$, then the mean absorption time from any state $m$ is $O(1)$.
\end{lemma}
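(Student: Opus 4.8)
The plan is to express the mean absorption time as a sum of expected one-step descent times and to control each of these via the classical first-passage formula for continuous-time birth--death chains.

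First, for each state $n\ge 1$ let $w_n$ denote the expected time for $M$ to first reach state $n-1$ when started in state $n$. Since $M$ changes by $\pm 1$ at each jump, started from $m$ it must pass in turn through $m-1, m-2, \dots, 1$ before reaching $0$, and by the strong Markov property the successive descent durations are independent; hence the mean absorption time from $m$ equals $\sum_{n=1}^{m} w_n$. To evaluate $w_n$, I would set up the standard recurrence: from state $n$ the chain waits an exponentially distributed time of mean $1/(\kappa(n)+\mu(n))$ and then moves to $n-1$ with probability $\mu(n)/(\kappa(n)+\mu(n))$ and to $n+1$ otherwise; in the latter case it must first return to $n$ (expected time $w_{n+1}$) and then descend (expected time $w_n$), by the strong Markov property. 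This gives $w_n = \tfrac{1}{\kappa(n)+\mu(n)} + \tfrac{\kappa(n)}{\kappa(n)+\mu(n)}\bigl(w_{n+1}+w_n\bigr)$, which rearranges to
\[
w_n \;=\; \frac{1}{\mu(n)} + \frac{\kappa(n)}{\mu(n)}\,w_{n+1}.
\]
The minimal nonnegative solution of this recurrence is exactly $w_n$ (finite precisely when the following series converges; standard birth--death absorption theory, cf.\ \cite{karlin1975first} and \cite[Lemma~6]{cho2021distributed}), namely
\[
w_n \;=\; \sum_{j\ge n} \frac{\kappa(n)\kappa(n+1)\cdots\kappa(j-1)}{\mu(n)\mu(n+1)\cdots\mu(j)}.
\]

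Second, I would insert the hypotheses: fix constants $C,D>0$ with $\kappa(i)\le Ci$ and $\mu(i)\ge Di^2$ for all $i\ge 1$. Since $\prod_{i=n}^{j}\mu(i)\ge D^{\,j-n+1}\,j^{2}\bigl(\prod_{i=n}^{j-1} i\bigr)^{2}$ while $\prod_{i=n}^{j-1}\kappa(i)\le C^{\,j-n}\prod_{i=n}^{j-1} i$, the $j$-th term of the series is at most $\frac{1}{D}\cdot\frac{(C/D)^{\,j-n}}{j^{2}\prod_{i=n}^{j-1} i}$. Writing $j=n+\ell$ and using $\prod_{i=n}^{n+\ell-1} i\ge n^{\ell}$ and $(n+\ell)^2\ge n^2$, this yields
\[
w_n \;\le\; \frac{1}{Dn^{2}}\sum_{\ell\ge 0}\Bigl(\frac{C}{Dn}\Bigr)^{\ell} \;\le\; \frac{2}{Dn^{2}}
\]
for every $n\ge 2C/D$ (the geometric sum being at most $2$ there); the finitely many remaining $w_n$ are finite convergent series and hence bounded by a constant. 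Thus $w_n = O(1/n^2)$, and the mean absorption time from any state $m$ is $\sum_{n=1}^m w_n \le \sum_{n\ge 1} w_n = O(1)$, uniformly in $m$, as claimed.

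The only genuinely delicate step is the middle one: justifying that the displayed series really equals $w_n$ and is finite, i.e.\ that the minimal nonnegative solution of the recurrence coincides with the expected first-passage time. This is exactly the standard content of the cited birth--death absorption-time results, and the quadratic death rate dominating the linear birth rate makes the product $\prod_i \kappa(i)/\mu(i+1)$ decay like $\prod_i O(1/i)$, so convergence (hence almost-sure absorption with finite mean) is immediate; everything else is a routine geometric-series estimate.
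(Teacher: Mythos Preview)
Your argument is correct. The paper does not actually prove this lemma: it merely states that the result ``readily follows from standard absorption time results on continuous-time birth-death chains'' and points to \cite{karlin1975first} and \cite[Lemma~6]{cho2021distributed} for the method. Your proof is precisely the straightforward adaptation the paper alludes to: you invoke the standard first-passage decomposition $\E[\text{absorption time from }m]=\sum_{n=1}^m w_n$ with the classical series formula for the expected descent times $w_n$, and then plug in the rate bounds to get $w_n=O(1/n^2)$, hence a uniformly bounded sum. There is nothing to compare beyond this; you have simply written out what the paper leaves as a citation.
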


\begin{theorem}
  Let $\vec S$ be a Lotka--Volterra chain with intraspecific competition and without interspecific competition, i.e.,  $\alpha=0$ and $\gamma>0$. Then $\vec S$ fails to reach majority consensus with at least constant probability from any starting state $\vec S_0 = (a,b)$, with $a \ge b>0$.
\end{theorem}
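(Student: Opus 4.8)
The plan is to exploit that for $\alpha=0$ the two species do not interact, so in continuous time their counts evolve as \emph{independent} single-species processes (each with birth rate $\beta x_i$, individual-death rate $\delta x_i$, and an intraspecific-competition event at rate $\gamma x_i(x_i-1)/2$ that decreases the count by~$2$). Write $E_0$ and $E_1$ for the extinction times of species~$0$ and species~$1$, so that $T(\vec S)=\min\{E_0,E_1\}$ and, since species~$0$ is the initial majority, the chain \emph{fails} to reach majority consensus whenever $E_0<E_1$ (and also when neither species ever goes extinct); note $\Pr[E_0=E_1<\infty]=0$ because the two processes jump at distinct times almost surely. It therefore suffices to find a constant threshold $s$ and a constant $c>0$, both independent of $a$ and $b$, with $\Pr[E_0\le s]\cdot\Pr[E_1>s]\ge c$; by independence this lower-bounds the failure probability.

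\emph{Step 1: the majority dies fast, uniformly in $a$.} The single-species count is not literally a birth--death chain because competition removes two individuals at once, but its downward rate from count $n$ is $\delta n+\gamma n(n-1)/2=\Theta(n^{2})$ while its birth rate is $\beta n=\Theta(n)$ (here we use $\gamma>0$), so Lemma~\ref{lemma:continuous-extinction} --- with the straightforward adaptation to ``$-2$'' jumps, which only hasten absorption, already noted before its statement --- gives $\E[E_0]\le C$ for a constant $C$ depending only on $\beta,\delta,\gamma$ and \emph{not} on the initial count $a$. Markov's inequality then yields $\Pr[E_0\le 2C]\ge\tfrac12$, uniformly in $a$. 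We set $s=2C$. (The fully degenerate case $\beta=\delta=0$, where the dynamics is essentially deterministic, I would dispose of by a separate elementary parity argument.)

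\emph{Step 2: the minority survives past $s$, uniformly in $b$.} This is the main obstacle. One cannot simply say ``larger populations survive longer'': intraspecific competition makes large populations collapse quickly, and a monotone coupling in the initial count is awkward with competition present. Instead I would argue by first passage to a bounded level. Let $\tau$ be the first time species~$1$ has count in $\{1,2\}$ (so $\tau=0$ if $b\in\{1,2\}$). Since every event changes the count by at most~$2$ in absolute value, from $b\ge 3$ the process must visit $\{1,2\}$ before hitting~$0$, hence $\tau<\infty$ almost surely and $S_{\tau,1}\in\{1,2\}$. For $j\in\{1,2\}$ the quantity $p_j:=\Pr[E_1>2C\mid S_{0,1}=j]$ is a strictly positive constant --- at least $e^{-(2\beta+2\delta+\gamma)\cdot 2C}$, the probability that no reaction fires before time $2C$. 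By the strong Markov property applied at $\tau$, $\Pr[E_1>2C]\ge\E\!\big[p_{S_{\tau,1}}\big]\ge\min\{p_1,p_2\}=:c_1>0$, uniformly over all $b\ge 1$.

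Putting the two steps together, the probability that the chain fails to reach majority consensus is at least $\Pr[E_0\le 2C]\cdot\Pr[E_1>2C]\ge c_1/2>0$, a positive constant independent of $a$, $b$, and $n$ --- which is the assertion. The two points I expect to need care are (i) verifying that the ``remove two individuals'' competition jumps do not obstruct the application of Lemma~\ref{lemma:continuous-extinction}, and (ii) the uniformity over $b$ in Step~2; everything else is routine.
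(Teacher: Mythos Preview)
Your proposal is correct and follows essentially the same approach as the paper: both work in continuous time to exploit independence, bound $\E[E_0]$ by a constant via \lemmaref{lemma:continuous-extinction} and apply Markov's inequality, and then lower-bound the minority's survival probability by observing that extinction must pass through a state of count $1$ or $2$, where the exponential clock gives a constant-probability delay. The paper phrases the last step slightly differently---conditioning on whether the final jump into $0$ is from state $1$ or state $2$ and deducing $\Pr[T_1\ge x]\ge e^{-\lambda x}$ directly---but your first-passage-to-$\{1,2\}$ plus strong Markov argument is the same idea in different bookkeeping.
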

\begin{proof}
For the proof of this result, it is convenient to consider the continuous-time version $\vec X = (X_0, X_1)$ of the Lotka--Volterra chain instead of the discrete-time jump chain $\vec S$. Since $\alpha=0$, the two continuous-time chains $X_0$ and $X_1$ are independent.
Let $T_i$ denote the time until the species $i \in \{0,1\}$ goes extinct, i.e., the absorption time of the chain $X_i$.
First, note that $\E[T_i] \in O(1)$. For non-self-destructive competition, this follows from \lemmaref{lemma:continuous-extinction}, as the birth and death rates for the chains $X_i$ satisfy $\kappa(n) = \beta n$ and $\mu(n) = \delta n + \gamma_i n(n-1)/2$. For self-destructive competition, the extinction time is also $O(1)$, as death events under self-destructive competition also always decrease the population count by at least one.

Let $Y(\lambda)$ denote the exponentially distributed random variable with mean $1/\lambda$. Observe that for the species $i$ to go extinct, its chain transitions to state 0 via either

\begin{enumerate}
\item state 1, i.e., the last event before extinction is an individual death event, which happens at rate $\delta$ so the time between the second last and last transition is $Y(\delta)$, and

\item state 2, i.e., the last event before extinction is a (self-destructive) competitive interaction, which happens at rate $\gamma$ so the time between the second last and last transition is $Y(\gamma)$.
\end{enumerate}
Note that if the system does not have self-destructive competition, then the second case cannot occur. Let $X$ be an indicator random variable for whether the first case occurs and set $\lambda = \max \{ \alpha, \delta \}$. Now
\begin{align*}
  \Pr[T_i \ge x] &= \Pr[X=1] \cdot \Pr[T_i \ge x \mid X=1] + \Pr[X=0] \cdot \Pr[T_i \ge x \mid X=0]  \\
  &\ge \Pr[X=1] \cdot \Pr[Y(\delta) \ge x] + \Pr[X=0] \cdot \Pr[Y(\gamma) \ge x] \\
  &\ge \Pr[Y(\lambda) \ge x] = \exp(-\lambda x),
\end{align*}
where the last equality is given by the CDF of the exponential distribution. By Markov's inequality,
\[
\Pr[T_0 < 2 \cdot \E[T_0]] \ge 1 - \Pr[T_0 \ge 2 \cdot \E[T_0] \ge 1/2.
  \]
  Since $T_0$ and $T_1$ are independent random variables, we have
  \begin{align*}
    \Pr[T_0 < T_1] &\ge \Pr[T_0 < 2 \cdot \E[T_0]] \cdot \Pr[T_1 \ge 2 \cdot \E[T_0]] \ge \frac{1}{2} \cdot \exp(-2\lambda \cdot \E[T_0]),
  \end{align*}
  where the last term is lower bounded by some positive constant $\theta > 0$ since $\lambda, \E[T_0] \in O(1)$. That is, the minority species survives longer with probability at least~$\theta$.
\end{proof}

\section*{Acknowledgements}
This research was supported by the ANR projects DREAMY (ANR-21-CE48-0003) and COSTXPRESS (ANR-23-CE45-0013).

\bibliographystyle{ACM-Reference-Format}
\bibliography{references}

\section{Omitted proofs}\label{apx:omitted}

\subsection{Proof of \lemmaref{lemma:clt}}\label{apx:clt}

\begin{proof}
By the Central Limit Theorem, we have that the random variable $X/\sqrt{n}$ converges in distribution to the standard normal distribution, i.e., for any $x \in \mathbb{R}$, we have
\[
\lim_{n \to \infty} \Pr\left[\frac{X}{\sqrt{n}} \le x \right] = \Phi(x),
\]
where $\Phi$ is the CDF of the standard normal distribution.
In particular, for any $\kappa > 0$ and $\theta>0$, there exists some $n_0>0$ such that
\[
\Pr\left[X > \theta \sqrt{n} \right] = 1 - \Phi(\theta) - \kappa \ge 1 - \frac{e^{-\theta^2/2}}{\sqrt{2\pi} (\theta+1)} - \kappa \ge \varepsilon
\]
for all $n \ge n_0$, where in the second inequality we used the bound $\Phi(\theta) \le \frac{e^{-\theta^2/2}}{\sqrt{2\pi}(1+\theta)}$ for $\theta > 0$. The claim now follows by choosing $\theta$ and $\kappa$ to be sufficiently small constants and $n$ sufficiently large.
\end{proof}

\subsection{Proof of \lemmaref{lemma:couple-with-independent}}\label{apx:couple-with-independent}

\begin{proof}
For the case (a), we construct a coupling $(\widehat{X}, \widehat{Y})$ of $X$ and $Y$ such that $\widehat{X} \le \widehat{Y}$. Let $\xi_1, \ldots, \xi_n$ be i.i.d.\ random variables sampled uniformly from the unit range $[0,1)$. We construct the coupling inductively. For each step $1 \le i \le n$, the coupling uses the following rules:
    \begin{enumerate}
      \item   Let $p_i = \Pr[Y_i]$. If $\xi_i \in [0, p_i)$, then set $\widehat{Y}_i = 1$. Otherwise, set $\widehat{Y}_i = 0$.

      \item Let $P_i = \Pr[X_i \mid X_1 = \widehat{X}_1, \ldots, X_{i-1} = \widehat{X}_{i-1}]$. If $\xi_i \in [0, P_i)$, then set $\widehat{X}_i = 1$. Otherwise, set $\widehat{X}_i = 0$.
  \end{enumerate}
By construction, the distribution of $\widehat{X} = \widehat{X}_1 + \cdots + \widehat{X}_n$ is the same as the distribution of $X$  and  $\widehat{Y} = \widehat{Y}_1 + \cdots + \widehat{Y}_n$ has the same distribution as $Y$. Moreover, $\widehat{X}_i \le \widehat{Y}_i$, as by assumption we have $[0, P_i) \subseteq [0, p_i)$ for all $1 \le i \le n$. In particular, $\Pr[\widehat{X}_i \ge x] \le \Pr[\widehat{Y}_i \ge x]$, so we get that
\[
\Pr[Y \ge x] = \Pr[\widehat{Y} \ge x] \ge \Pr[\widehat{X} \ge x] =
\Pr[X \ge x],
\]
so $X \preceq Y$. The proof for the case (b) is symmetric and follows analogously by observing that
$[0,p_i) \subseteq [0,P_i)$ implies that $\widehat{Y}_i \le \widehat{X}_i$ for each $1 \le i \le n$.
\end{proof}

\end{document}